\documentclass{article} 
\usepackage[margin=3.9cm]{geometry}
\usepackage[utf8]{inputenc} 
\usepackage{fixltx2e} 
\usepackage{breakurl} 
\usepackage{mathtools}
\usepackage{hyperref}
\usepackage{amsthm}
\usepackage{complexity}
\usepackage{balance}
\usepackage{mathptmx} 
\usepackage{mathdots}
\RequirePackage{latexsym}
\RequirePackage{mathtools}
\RequirePackage{amsfonts}
\usepackage{amssymb}
\RequirePackage{array}
\RequirePackage{xspace}
\RequirePackage{cite}
\usepackage{url}

\newcommand{\val}[2]{\texttt{val}^{#1}(#2)}
\newcommand{\suchthat}{\;\ifnum\currentgrouptype=16 \middle\fi|\;}

\newcommand{\ie}{i.e., }
\newcommand{\eg}{e.g., }
\newcommand{\etal}{\textit{et al.}\xspace}
\newcommand{\wrt}{w.r.t.\ }

\usepackage[textsize=small]{todonotes}

\newcommand{\Q}{\mathbf{Q}\xspace}
\newcommand{\N}{\mathbf{N}\xspace}
\newcommand{\Z}{\mathbf{Z}\xspace}

\newcommand{\MPG}{MPG\xspace}
\newcommand{\EG}{EG\xspace}
\newcommand{\MCP}{MCP\xspace}

\def\C{{\cal C}}

\def\F{{\cal F}}
\def\W{{\cal W}}

\newcommand{\figref}[1]{Fig.~\ref{#1}}

\newtheorem{Thm}{Theorem}
\newtheorem{Lem}{Lemma}
\newtheorem{Prop}{Proposition}

\newtheorem{Rem}{Remark}

\usepackage[ruled,vlined,linesnumbered]{algorithm2e}
\usepackage{subfig}
\SetCommentSty{textsf}
\SetKwRepeat{DoWhile}{do}{while}
\SetAlFnt{\small} 
\makeatletter
\newcommand{\removelatexerror}{\let\@latex@error\@gobble}
\makeatother

\let\oldnl\nl
\newcommand{\nonl}{\renewcommand{\nl}{\let\nl\oldnl}}

\usepackage{tikz}
\usepackage{pgfplots}
\usetikzlibrary{calc,positioning,fit}
\usetikzlibrary{shapes,shapes.multipart,shapes.arrows}
\usetikzlibrary{decorations,decorations.pathmorphing,decorations.pathreplacing,decorations.markings,decorations.shapes}
\usetikzlibrary{arrows}
\usetikzlibrary{fit,backgrounds}
\usetikzlibrary{plotmarks}
\tikzstyle{node}=[circle,draw,inner sep=2pt,transform shape,minimum size=1.75em]
\tikzstyle{every picture}=[>=latex]
\tikzstyle{every label}=[inner sep=2pt]

\newcount\recurdepth
\begin{document}

\title{An Improved Pseudo-Polynomial Upper Bound for the 
			Value Problem and Optimal Strategy Synthesis \\ in Mean Payoff Games}

\author{Carlo Comin~\footnote{Department of Mathematics, University of Trento, Trento, Italy; 
 LIGM, Universit{\'e} Paris-Est, Marne-la-Vall{\'e}e, Paris, France. (carlo.comin@unitn.it)} \and  
Romeo Rizzi~\footnote{Department of Computer Science, University of Verona, Verona, Italy. (romeo.rizzi@univr.it)}}

\date{}
\maketitle 

\begin{abstract}
In this work we offer an $O(|V|^2 |E|\, W)$ pseudo-polynomial time 
deterministic algorithm for solving the Value Problem and Optimal Strategy Synthesis 
in Mean Payoff Games. This improves by a factor $\log(|V|\, W)$ 
the best previously known pseudo-polynomial time upper bound due to Brim,~\etal
The improvement hinges on a suitable characterization of values, 
and a description of optimal positional strategies, 
in terms of reweighted Energy Games and Small Energy-Progress Measures.
\end{abstract}

\smallskip
{\bf Keywords:} Mean Payoff Games, Value Problem, Optimal Strategy Synthesis, 
Energy Games, Small Energy-Progress Measures, Pseudo-Polynomial Time. 

\section{Introduction}\label{sect:introduction}

A \emph{Mean Payoff Game} (\MPG) is a two-player infinite game $\Gamma := (V, E, w, \langle V_0, V_1 \rangle)$, 
which is played on a finite weighted directed graph, denoted $G^{\Gamma} := ( V, E, w )$,   
the vertices of which are partitioned into two classes, $V_0$ and $V_1$, 
according to the player to which they belong. 
It is assumed that $G^{\Gamma}$ has no sink vertex and that the weights of the arcs are integers, 
\ie $w:E\rightarrow \{-W, \ldots,  0, \ldots,  W\}$ for some $W\in\N$.

At the beginning of the game a pebble is placed on some vertex $v_s\in V$,  
and then the two players, named Player~0 and Player~1, 
move the pebble ad infinitum along the arcs. Assuming the pebble is currently on Player~0's vertex $v$, 
then he chooses an arc $(v,v')\in E$ going out of $v$ and moves the pebble to the destination vertex $v'$.
Similarly, assuming the pebble is currently on Player~1's vertex, 
then it is her turn to choose an outgoing arc.
The infinite sequence $v_s,v,v'\ldots$ of all the encountered vertices is a \emph{play}.
In order to play well, Player~0 wants to maximize 
the limit inferior of the long-run average weight of the traversed arcs, 
\ie to maximize $\liminf_{n\rightarrow\infty} \frac{1}{n}\sum_{i=0}^{n-1}w(v_i, v_{i+1})$, 
whereas Player~1 wants to minimize the $\limsup_{n\rightarrow\infty} \frac{1}{n}\sum_{i=0}^{n-1}w(v_i, v_{i+1})$.
Ehrenfeucht and Mycielski~\cite{EhrenfeuchtMycielski:1979} 
proved that each vertex $v$ admits a \emph{value}, denoted  
$\val{\Gamma}{v}$, which each player can secure by means of a \emph{memoryless} (or \emph{positional}) strategy, 
\ie a strategy that depends only on the current vertex position and not on the previous choices.

Solving an \MPG consists in computing the values of all vertices (\emph{Value Problem})   
and, for each player, a positional strategy that secures such values to that player (\emph{Optimal Strategy Synthesis}).
The corresponding decision problem lies in $\NP\cap \coNP$~\cite{ZwickPaterson:1996} 
and it was later shown by Jurdzi\'nski~\cite{Jurdzinski1998} to be recognizable with \emph{unambiguous} polynomial time 
non-deterministic Turing Machines, 
thus falling within the $\UP\cap\coUP$ complexity class.

The problem of devising efficient algorithms for solving \MPG{s} has been studied extensively in the literature.
The first milestone was that of Gurvich, Karzanov and Khachiyan~\cite{GKK88},   
in which they offered an \emph{exponential} time algorithm for solving a slightly wider class of \MPG{s} called \emph{Cyclic Games}.
Afterwards, Zwick and Paterson~\cite{ZwickPaterson:1996} devised the first deterministic procedure for computing values in \MPG{s}, 
and optimal strategies securing them, within a \emph{pseudo-polynomial} time and polynomial space. 
In particular, Zwick and Paterson established an $O(|V|^3 |E|\, W)$ upper bound for the time complexity of 
the Value Problem, as well as an upper bound of $O(|V|^4|E|\, W\,\log(|E|/|V|))$ 
for that of Optimal Strategy Synthesis~\cite{ZwickPaterson:1996}.

Recently, several research efforts have been spent in studying quantitative extensions 
of infinite games for modeling quantitative aspects of reactive systems~\cite{Chakrabarti03,Bouyer08,brim2011faster}. 
In this context quantities may represent, for example, 
the power usage of an embedded component, or the buffer size of a networking element.
These studies have brought to light interesting connections with \MPG{s}. 
Remarkably, they have recently led to the design of faster procedures for solving them.
In particular, Brim,~\etal~\cite{brim2011faster} devised faster deterministic algorithms 
for solving the Value Problem and Optimal Strategy Synthesis in \MPG{s} 
within $O(|V|^2 |E|\, W\log(|V|\,W))$ pseudo-polynomial time and polynomial space. 

To the best of our knowledge, this is the tightest pseudo-polynomial upper bound on the 
time complexity of \MPG{s} which is currently known.

Indeed, a wide spectrum of different approaches have been investigated in the literature.
For instance, Andersson and Vorobyov~\cite{Andersson06fastalgorithms} provided a fast 
\emph{sub-exponential} time \emph{randomized} algorithm for solving \MPG{s}, 
whose time complexity can be bounded as 
$O(|V|^2 |E|\, 
\exp(2\, \sqrt{|V|\, \ln(|E| / \sqrt{|V|})}+O(\sqrt{|V|}+\ln|E|)))$.
Furthermore, Lifshits and Pavlov~\cite{LifshitsPavlov:2007} devised 
an $O(2^{|V|}\, |V|\, |E|\, \log W)$ \emph{singly-exponential} time 
deterministic procedure by considering the \emph{potential theory} of \MPG{s}.

These results are summarized in Table~\ref{Table:Algorithms}.

\begin{table}[!htb]
\caption{Complexity of the main Algorithms for solving \MPG{s}.}
\label{Table:Algorithms}
\centering
\bgroup
\def\arraystretch{1.4}
\begin{tabular}{| c  c  c  c  | }
\hline
\vtop{\hbox{\strut }\hbox{\strut Algorithm }\hbox{\strut }}&\vtop{\hbox{\strut }\hbox{\strut Value Problem }\hbox{\strut }}  & 
\vtop{\hbox{\strut Optimal }\hbox{\strut Strategy } \hbox{\strut Synthesis}} & 
\vtop{\hbox{\strut }\hbox{\strut Note }\hbox{\strut }} \\
\hline
This work &  $O(|V|^2 |E|\, W)$  & $O(|V|^2 |E|\, W)$                      & Determ. \\ 
\cite{brim2011faster} & $O(|V|^2 |E|\, W\,\log(|V|\, W))$ & $O(|V|^2 |E|\, W\, \log(|V|\, W))$   &  Determ. \\
\cite{ZwickPaterson:1996} &  $\Theta(|V|^3 |E|\, W)$        & $\Theta(|V|^4 |E|\, W\, \log\frac{|E|}{|V|})$ &  Determ. \\
\cite{LifshitsPavlov:2007} &  $O(2^{|V|}\, |V|\, |E|\, \log W)$ &  n/a &  Determ. \\
\cite{Andersson06fastalgorithms} &  $O\Big(|V|^2 |E|\, 
e^{2\,\sqrt{|V|\, \ln\big(\frac{|E|}{\sqrt{|V|}}\big)}+O(\sqrt{|V|}+\ln|E|)}\Big)$ & same complexity & Random. \\
\hline
\end{tabular}
\egroup
\end{table}

\paragraph*{Contribution.}
The main contribution of this work is that to provide an $O(|V|^2|E|\,W)$ 
\emph{pseudo-polynomial} time and $O(|V|)$ space deterministic algorithm for 
solving the Value Problem and Optimal Strategy Synthesis in \MPG{s}. 
As already mentioned in the introduction, the best previously known procedure 
has a deterministic time complexity of $O(|V|^2 |E|\, W\log(|V|\,W))$, 
which is due to Brim,~\etal~\cite{brim2011faster}. In this way we improve the best 
previously known pseudo-polynomial time upper bound by a factor $\log(|V|\, W)$.
This result is summarized in the following theorem.

\begin{Thm}\label{Thm:algorithm}
There exists a deterministic algorithm for 
solving the Value Problem and Optimal Strategy Synthesis of \MPG{s} 
within $O(|V|^2 |E|\, W)$ time and $O(|V|)$ space, 
on any input \MPG $\Gamma=(V, E, w, \langle V_0, V_1 \rangle)$. 
Here, $W=\max_{e\in E}|w_e|$. 
\end{Thm}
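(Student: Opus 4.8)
The plan is to reduce the Value Problem and Optimal Strategy Synthesis to computations of Small Energy-Progress Measures (SEPMs) of suitably reweighted Energy Games, and then to organize these computations so that, with reuse, their total cost is $O(|V|^2|E|W)$ rather than a $\log(|V|W)$ factor more.

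First I would recall the required machinery. Given the input \MPG $\Gamma$ and a rational threshold $\nu=p/q$ with $1\le q\le|V|$, let $\Gamma_\nu$ be the Energy Game on the same arena whose arc $e$ is reweighted to $q\cdot w(e)-p$; the resulting weights are integers of magnitude $O(|V|W)$. A SEPM of $\Gamma_\nu$ is a map $f\colon V\to\{0,1,\dots,D\}\cup\{\top\}$ with $D=O(|V|^2W)$ that is consistent at every vertex under truncated subtraction (a good out-arc at Player-0 vertices, all out-arcs at Player-1 vertices). There is a pointwise least SEPM $f^{\ast}_\nu$; it is computable by the value-iteration (lifting) algorithm of Brim \etal~\cite{brim2011faster}, one has $f^{\ast}_\nu(v)\ne\top$ iff $v$ is won by the energy player in $\Gamma_\nu$, and, by the known equivalence between Energy Games and mean-payoff thresholds, this holds iff $\val{\Gamma}{v}\ge\nu$. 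Hence $\val{\Gamma}{v}=\max\{\nu\in\F : f^{\ast}_\nu(v)\ne\top\}$, where $\F$ is the set of candidate values, \ie the rationals of denominator at most $|V|$ lying in $[-W,W]$, of which there are $O(|V|^2W)$.

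Second, the crux. Implementing this characterization by bisection over $\F$ — one independent SEPM computation per probed threshold, each costing $O(|V|^2|E|W)$ since the reweighted weights are $O(|V|W)$ — is precisely the $O(|V|^2|E|W\log(|V|W))$ algorithm of Brim \etal; the whole point is to remove the $\log(|V|W)$ factor. I would therefore replace bisection by a single monotone pass over the thresholds, always restarting the lifting for the next threshold from the SEPM already obtained for the previous one rather than from the all-zero map. Concretely, I would first run one coarse pass at weight-scale $|V|$ over the integer thresholds $i$, reweighting $e$ to $|V|\cdot w(e)-i$: as $i$ increases the least SEPM only grows and, once a vertex reaches $\top$, it stays there; since every energy level stays $O(|V|^2W)$ and every vertex is lifted monotonically, the standard amortized accounting of the lifting algorithm makes the whole pass cost $O(|V|^2|E|W)$ — only a constant factor more than a single Energy Game solve — and it determines $\lfloor|V|\cdot\val{\Gamma}{v}\rfloor$ for every $v$. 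Each vertex value is then pinned down inside a window of width $1/|V|$, which still contains $O(|V|)$ candidates of $\F$; refining the windows by sweeping, again with reuse, over those remaining candidates is the step I expect to be the main obstacle, because reweighting by rationals of different denominators rescales the arc weights and so breaks naive monotonicity of the SEPM along the pass — the resolution is a careful choice of the threshold schedule together with an amortized potential argument showing that all the refinements, across all windows, still add up to $O(|V|^2|E|W)$.

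Finally, Optimal Strategy Synthesis and the space bound. For a vertex $v$ with $\val{\Gamma}{v}=\nu$, Player~0's optimal positional move at $v$ is an out-arc witnessing consistency of $f^{\ast}_\nu$ at $v$, while Player~1's optimal positional move at $v$ is an out-arc keeping $v$ in the $\top$-region of $\Gamma_{\nu'}$ for the next candidate $\nu'>\nu$; a routine cut-and-paste argument shows that gluing these per-vertex choices yields globally optimal positional strategies for both players, and the choices can be recorded during the passes above. As for space, a SEPM is one integer per vertex, the lifting algorithm's auxiliary data (a per-vertex count of violated out-arcs and the worklist) are another $O(|V|)$ words on top of the read-only input, and the threshold schedules are generated lazily, so the whole procedure uses $O(|V|)$ working space. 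Together with the final $O(|E|)$ cycle-mean read-off from the synthesized strategies, this yields the claimed $O(|V|^2|E|W)$ time and $O(|V|)$ space.
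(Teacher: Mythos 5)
You have the right architecture --- reduce to least SEPMs of reweighted \EG{s}, sweep the $O(|V|^2W)$ candidate thresholds in increasing order, and warm-start each Value-Iteration from the previous SEPM so that the total work amortizes to $O(|V|^2|E|\,W)$ --- and this is essentially the paper's plan. But the step you yourself flag as ``the main obstacle'' is precisely the content of the paper's key invariant, and you leave it unresolved: when the threshold moves from one rational to the next, the denominators $D_{j-1}$ and $D_j$ differ, the integer arc weights are rescaled, and it is no longer obvious that the previously computed least SEPM is a legitimate (i.e., pointwise lower) starting point for the next lifting, nor that the per-vertex number of lifts stays bounded across the whole sweep. Gesturing at ``a careful choice of the threshold schedule together with an amortized potential argument'' is an acknowledgment of the gap, not a proof; without it neither the correctness of the warm start nor the $O(|V|^2|E|\,W)$ aggregate bound is established, and the coarse integer pass you prepend does not remove the difficulty, since all the denominator changes still occur inside the refinement windows.

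The paper closes exactly this gap as follows. Work with the \emph{rational scalings} $f^*_{i,j}:=\frac{1}{D_j}f^*_{w'_{i,j}}$ of the least SEPMs, where $w'_{i,j}=D_j(w-i)-N_j$ and the thresholds $i+F_j$ are swept in Farey order. One proves (Lemma~\ref{lemma:correctness}, Item~1) that $f^*_{\texttt{prev}(i,j)}\preceq f^*_{i,j}$ pointwise, by showing that $D_{\texttt{prev}(i,j)}\cdot\min\bigl(f^*_{\texttt{prev}(i,j)},f^*_{i,j}\bigr)$ is itself a SEPM of the previous reweighted game and invoking minimality of the previous least SEPM; monotonicity of the Farey sequence makes the weight comparison go through despite the different denominators. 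Then, since the new least SEPM $f^*_{w'_{i,j}}$ is integer-valued, one may warm-start the Value-Iteration from $\lceil D_j\cdot\texttt{prev\_}f\rceil\preceq f^*_{w'_{i,j}}$, which is exactly the generalized initialization the paper allows for the Value-Iteration. Finally, the amortization is done in rational-scaled units: every lift raises a vertex's scaled energy level by at least $1/|V|$, and by Lemma~\ref{Lem:lepm_equals_mincredit} the scaled levels never exceed $O(|V|\,W')=O(|V|^2W)$ in integer units, so each vertex is lifted $O(|V|^2 W)$ times over the \emph{entire} computation, giving the aggregate $O(|V|^2|E|\,W)$ bound; values and Player~0's optimal moves are then read off at the transitions $v\in\W_0(\Gamma_{\texttt{prev}(i,j)})\cap\W_1(\Gamma_{i,j})$ via Theorems~\ref{Thm:transition_opt_values} and~\ref{Thm:pos_opt_strategy}. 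Until you supply an argument of this kind (or an equivalent potential function that survives the rescaling between consecutive denominators), your proof of the $O(|V|^2|E|\,W)$ bound is incomplete; the surrounding reductions, the strategy extraction, and the $O(|V|)$ space accounting in your write-up are otherwise sound.
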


In order to prove Theorem~\ref{Thm:algorithm}, 
this work points out a novel and suitable characterization of values,  
and a description of optimal positional strategies, in terms of certain reweighting operations 
that we will introduce later on in Section~\ref{sect:background}.  

In particular, 
we will show that the optimal value $\val{\Gamma}{v}$ 
of any vertex $v$ is the unique rational number $\nu$ for which 
$v$ \emph{``transits"} from the winning region of Player~0 to that of Player~1, 
with respect to reweightings of the form $w-\nu$.
This intuition will be clarified later on in Section~\ref{section:values}, 
where Theorem~\ref{Thm:transition_opt_values} is formally proved.

Concerning strategies, we will show that an optimal positional strategy  
for each vertex $v\in V_0$ is given by any arc $(v,v')\in E$ which 
is compatible with certain \emph{Small Energy-Progress Measures} (SEPMs) of the above mentioned reweighted arenas.
This fact is formally proved in Theorem~\ref{Thm:pos_opt_strategy} of Section~\ref{section:values}.

These novel observations are smooth, simple, and their proofs rely on elementary arguments.
We believe that they contribute to clarifying the interesting relationship between values, 
optimal strategies and reweighting operations 
(with respect to some previous literature, see e.g.~\cite{brim2011faster, LifshitsPavlov:2007}). 
Indeed, they will allow us to prove Theorem~\ref{Thm:algorithm}.

\paragraph*{Organization.}
This manuscript is organized as follows. 
In Section~\ref{sect:background}, we introduce some notation 
and provide the required background on infinite two-player games and related algorithmic results.
In Section~\ref{section:values}, a suitable relation between values, optimal strategies, and 
certain reweighting operations is investigated. 
In Section~\ref{sect:algorithm}, an $O(|V|^2 |E|\, W)$ pseudo-polynomial time and $O(|V|)$ space algorithm,  
for solving the Value Problem and Optimal Strategies Synthesis in \MPG{s}, 
is designed and analyzed by relying on the results presented in Section~\ref{section:values}.
In this manner, Section~\ref{sect:algorithm} actually provides 
a proof of Theorem~\ref{Thm:algorithm} which is our main result in this work.

\section{Notation and Preliminaries}\label{sect:background}
We denote by $\N$, $\Z$, $\Q$ the set of natural, integer, and rational numbers (respectively). 
It will be sufficient to consider integral intervals, \eg $[a,b]:=\{z\in\Z\mid a\leq z\leq b\}$ 
and $[a,b):=\{z\in\Z\mid a \leq z < b\}$ for any $a,b\in \Z$.  

\paragraph*{Weighted Graphs.}\; 
Our graphs are directed and weighted on the arcs. 
Thus, if $G=(V, E, w)$ is a graph, then every arc $e\in E$ is a triplet $e=(u,v,w_e)$, 
where $w_e = w(u,v)\in\Z$ is the \textit{weight} of $e$. 
The maximum absolute weight is $W := \max_{e\in E} |w_e|$.
Given a vertex $u\in V$, the set of its successors is  
$\texttt{post}(u) = \{ v \in V \mid (u,v) \in E \}$, whereas the set of its predecessors is 
$\texttt{pre}(u) = \{ v \in V \mid (v,u) \in E \}$. A \emph{path} is a sequence of vertices 
$v_0v_1\ldots v_n\ldots$ such that $(v_i, v_{i+1}) \in E$ for every $i$. 
We denote by $V^*$ the set of all (possibly empty) finite paths.
A \emph{simple path} is a finite path $v_0v_1\ldots v_n$ having no repetitions, 
\ie for any $i,j\in [0,n]$ it holds $v_i \neq v_j$ whenever $i\neq j$.
The \emph{length} of a simple path $\rho=v_0v_1\ldots v_n$ equals $n$ and it is denoted by $|\rho|$.
A \emph{cycle} is a path $v_0v_1\ldots v_{n-1}v_n$ such that $v_0\ldots v_{n-1}$ is simple and $v_n = v_0$.
The \emph{length} of a cycle $C=v_0v_1\ldots v_n$ equals $n$ and it is denoted by $|C|$.  
The \emph{average weight} of a cycle $v_0\ldots v_n$ is $\frac{1}{n} \sum_{i=0}^{n-1} w(v_i,v_{i+1})$. 
A cycle $C=v_0v_1\ldots v_n$ is \emph{reachable} from $v$ in $G$ 
if there exists a simple path $p=vu_1\ldots u_m$ in $G$ such that $p\cap C\neq\emptyset$. 

\paragraph*{Arenas.}
An \emph{arena} is a tuple $\Gamma = ( V, E, w, \langle V_0, V_1\rangle )$ 
where $G^{\Gamma} := ( V, E, w )$ is a finite weighted directed graph and $( V_0, V_1)$ is a partition 
of $V$ into the set $V_0$ of vertices owned by Player~0, and the set $V_1$ of vertices owned by Player~$1$.
It is assumed that $G^{\Gamma}$ has no sink, \ie $\texttt{post}(v)\neq\emptyset$ for every $v \in V$;
still, we remark that $G^{\Gamma}$ is not required to be a bipartite graph on colour classes $V_0$ and $V_1$. 
\figref{fig:arena_ex} depicts an example.

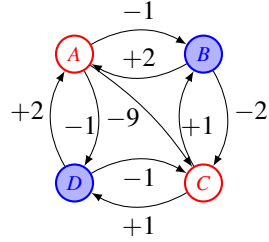
\begin{figure}[!h]
\center
\begin{tikzpicture}[arrows=->,scale=0.8,node distance=1.5 and 1.5]
 		\node[node, thick, color=red] (A) {$A$};
		\node[node, thick, color=blue, fill=blue!30, right=of A] (B) {$B$};
		\node[node, thick, color=red, below=of B] (C) {$C$};
		\node[node, thick, color=blue, fill=blue!30, left=of C] (D) {$D$};	
		\draw[] (A) to [bend left=30] node[above] {$-1$} (B); 
		\draw[] (B) to [bend left=30] node[above, xshift=2ex, yshift=-1ex] {$-2$} (C);
		\draw[] (C) to [bend left=30] node[below] {$+1$} (D);
		\draw[] (D) to [bend left=30] node[above, xshift=-2ex, yshift=-1ex] {$+2$} (A);
		\draw[] (A) to [bend left=30] node[below, xshift=-1.5ex, yshift=1ex] {$-1$} (D); 
		\draw[] (D) to [bend left=30] node[below] {$-1$} (C);
		\draw[] (C) to [bend left=30] node[below, xshift=1.5ex, yshift=1ex] {$+1$} (B);
		\draw[] (B) to [bend left=30] node[above] {$+2$} (A);
		\draw[] (A) to [bend left=10] node[below, xshift=-2ex, yshift=1ex] {$-9$} (C);
\end{tikzpicture}
\caption{An arena $\Gamma$.}\label{fig:arena_ex}
\end{figure}

A game on $\Gamma$ is played for infinitely many rounds by two players moving a pebble along the arcs 
of $G^{\Gamma}$. At the beginning of the game we find the pebble on some vertex 
$v_s\in V$, which is called the \emph{starting position} of the game.
At each turn, assuming the pebble is currently on a vertex $v\in V_i$ (for $i=0, 1$), 
Player~$i$ chooses an arc $( v,v')\in E$ and then the next turn starts with the pebble on $v'$.

A \emph{play} is any infinite path $v_0v_1\ldots v_n\ldots\in V^*$ in $\Gamma$. 
For any $i\in \{0,1\}$, a strategy of Player~$i$ is any function $\sigma_i:V^*\times V_i\rightarrow V$ 
such that for every finite path $p'v$ in $G^{\Gamma}$, where $p'\in V^*$ and $v\in V_i$, 
it holds that $(v, \sigma_i(p', v))\in E$. 
A strategy $\sigma_i$ of Player $i$ is \emph{positional} (or \emph{memoryless}) 
if $\sigma_i(p, v_n) = \sigma_i(p', v'_m)$ for every finite paths $pv_n=v_0\ldots v_{n-1}v_n$ 
and $p'v'_m=v'_0\ldots v'_{m-1}v'_m$ in $G^{\Gamma}$ such that $v_n=v'_m\in V_i$. 
The set of all the positional strategies of Player~$i$ is denoted by $\Sigma^M_i$. 
A play $v_0v_1\ldots v_n\ldots $ is \emph{consistent} with a strategy $\sigma\in\Sigma_i$ 
if $v_{j+1} = \sigma(v_0v_1\ldots v_j)$ whenever $v_j\in V_i$. 

Given a starting position $v_s\in V$, the \emph{outcome} of strategies $\sigma_0 \in\Sigma_0$ and 
$\sigma_1 \in\Sigma_1$, denoted $\texttt{outcome}^{\Gamma}(v_s, \sigma_0, \sigma_1)$, is the unique play  
that starts at $v_s$ and is consistent with both $\sigma_0$ and $\sigma_1$.

Given a memoryless strategy $\sigma_i\in\Sigma^M_i$ of Player~$i$ in $\Gamma$, then  
$G^{\Gamma}_{\sigma_i}=( V, E_{\sigma_i}, w)$ is the graph obtained from $G^{\Gamma}$ by removing all the 
arcs $( v,v')\in E$ such that $v\in V_i$ and $v'\neq \sigma_i(v)$; 
we say that $G^{\Gamma}_{\sigma_i}$ is obtained from $G^{\Gamma}$ \emph{by projection} \wrt $\sigma_i$. 

Concluding this subsection, the notion of \emph{reweighting} is recalled.
For any weight function $w,w':E\rightarrow \Z $, the \emph{reweighting} of 
$\Gamma=(V, E, w, \langle V_0, V_1\rangle )$ \wrt $w'$ 
is the arena $\Gamma^{w'}= (V, E, w', \langle V_0, V_1\rangle )$.
Also, for $w:E\rightarrow \Z$ and any $\nu\in \Z$, 
we denote by $w+\nu$ the weight function $w'$ defined as $w'_e := w_e+\nu$ for every $e\in E$.
Indeed, we shall consider reweighted games of the form $\Gamma^{w-q}$, for some $q\in \Q$. 
Notice that the corresponding weight function $w':E\rightarrow\Q:e\mapsto w_e-q$ is rational, 
while we required the weights of the arcs to be always integers. 
To overcome this issue, it is sufficient to re-define $\Gamma^{w-q}$ by  
scaling all the weights by a factor equal to the denominator of $q\in \Q$, namely, 
to re-define: $\Gamma^{w-q}:=\Gamma^{D\cdot w-N}$, where $N,D\in\N$ are such that $q=N/D$ and $\gcd(N,D)=1$. 
This re-scaling will not change the winning regions of the corresponding games, 
and it has the significant advantage of allowing for a discussion (and an algorithmics) 
which is strictly based on integer weights.

\paragraph*{Mean Payoff Games.}
A \emph{Mean Payoff Game}~(\MPG)~\cite{brim2011faster, ZwickPaterson:1996, EhrenfeuchtMycielski:1979} 
is a game played on some arena $\Gamma$ for infinitely many rounds by two opponents,  
Player~$0$ gains a payoff defined as the long-run average weight of the play, whereas Player~$1$ loses that value. 
Formally, the Player~$0$'s \emph{payoff} of a play $v_0v_1\ldots v_n\ldots $ in $\Gamma$ is defined as follows: 
\[\texttt{MP}_0(v_0v_1\ldots v_n\ldots):=\liminf_{n\rightarrow\infty} \frac{1}{n}\sum_{i=0}^{n-1}w(v_i, v_{i+1}).\]
The value \emph{secured} by a strategy $\sigma_0\in\Sigma_0$ in a vertex $v$ is defined as:
\[\texttt{val}^{\sigma_0}(v):=
\inf_{\sigma_1\in\Sigma_1}\texttt{MP}_0\big(\texttt{outcome}^{\Gamma}(v, \sigma_0, \sigma_1)\big),\]
Notice that payoffs and secured values can be defined symmetrically for the Player~$1$ 
(\ie by interchanging the symbol \emph{0} with \emph{1} and \emph{inf} with \emph{sup}).

Ehrenfeucht and Mycielski~\cite{EhrenfeuchtMycielski:1979} proved that each vertex $v\in V$ admits a unique \emph{value}, 
denoted $\val{\Gamma}{v}$, which each player can secure by means of a \emph{memoryless} (or \emph{positional}) strategy.
Moreover, \emph{uniform} positional optimal strategies do exist for both players, 
in the sense that for each player there exist at least one positional strategy which can be used to secure all the optimal values, 
independently with respect to the starting position $v_s$.
Thus, for every \MPG $\Gamma$, there exists a strategy $\sigma_0\in\Sigma^M_0$ such that 
$\val{\sigma_0}{v}\geq \val{\Gamma}{v}$ for every $v\in V$,  
and there exists a strategy $\sigma_1\in\Sigma^M_1$ such that $\val{\sigma_1}{v}\leq \val{\Gamma}{v}$ for every $v\in V$.
Indeed, the \emph{(optimal) value} of a vertex $v\in V$ in the \MPG $\Gamma$ is given by:
\[\val{\Gamma}{v}= 
\sup_{\sigma_0\in\Sigma_0}\val{\sigma_0}{v} = \inf_{\sigma_1\in\Sigma_1}\val{\sigma_1}{v}.\]
Thus, a strategy $\sigma_0\in\Sigma_0$ is \emph{optimal} if $\texttt{val}^{\sigma_0}(v)=\val{\Gamma}{v}$ for all $v\in V$.  
A strategy $\sigma_0\in\Sigma_0$ is said to be \emph{winning} for Player~$0$ if $\texttt{val}^{\sigma_0}(v)\geq 0$,  
and $\sigma_1\in\Sigma_1$ is winning for Player~$1$ if $\texttt{val}^{\sigma_1}(v) < 0$.
Correspondingly, a vertex $v\in V$ is a \emph{winning starting position} for Player~$0$ if $\val{\Gamma}{v}\geq 0$, 
otherwise it is winning for Player~$1$. 
The set of all winning starting positions of Player~$i$ is denoted by $\W_i$ for $i\in \{0,1\}$.
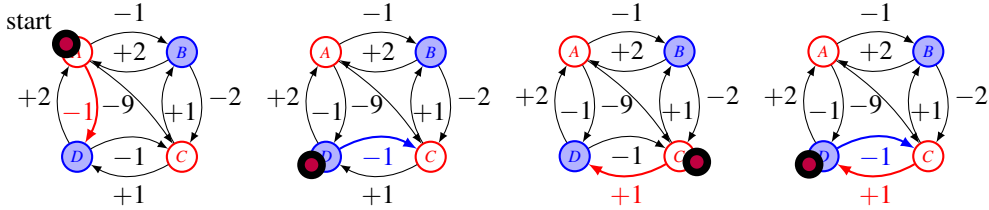
\begin{figure}[!h]
	\centering
	\begin{tikzpicture}[arrows=->,scale=.65,node distance=1.5 and 1.5]
 		\node[node, thick, color=red] (A) {$A$};
		\node[node, thick, color=blue, fill=blue!30, right=of A] (B) {$B$};
		\node[node, thick, color=red, below=of B] (C) {$C$};
		\node[node, thick, color=blue, fill=blue!30, left=of C] (D) {$D$};	
  \node[node, scale=0.7, xshift=-2ex, yshift=1.5ex, fill=purple, draw=black, line width=1mm, label={above left:start}] (pebble) {}; 
		\draw[] (A) to [bend left=30] node[above] {$-1$} (B); 
		\draw[] (B) to [bend left=30] node[above, xshift=2ex, yshift=-1ex] {$-2$} (C);
		\draw[] (C) to [bend left=30] node[below] {$+1$} (D);
		\draw[] (D) to [bend left=30] node[above, xshift=-2ex, yshift=-1ex] {$+2$} (A);
		\draw[thick, color=red] (A) to [ bend left=30] node[below, xshift=-1.5ex, yshift=1ex] {$-1$} (D); 
		\draw[] (D) to [bend left=30] node[below] {$-1$} (C);
		\draw[] (C) to [bend left=30] node[below, xshift=1.5ex, yshift=1ex] {$+1$} (B);
		\draw[] (B) to [bend left=30] node[above] {$+2$} (A);
		\draw[] (A) to [bend left=10] node[below, xshift=-1.5ex, yshift=1ex] {$-9$} (C);
	\end{tikzpicture}
	\begin{tikzpicture}[arrows=->,scale=.65,node distance=1.5 and 1.5]
 		\node[node, thick, color=red] (A) {$A$};
		\node[node, thick, color=blue, fill=blue!30, right=of A] (B) {$B$};
		\node[node, thick, color=red, below=of B] (C) {$C$};
		\node[node, thick, color=blue, fill=blue!30, left=of C] (D) {$D$};	
\node[node, left=of C, scale=0.7, xshift=-3.3ex, yshift=-1.5ex, fill=purple, draw=black, line width=1mm] (pebble) {};	
		\draw[] (A) to [bend left=30] node[above] {$-1$} (B); 
		\draw[] (B) to [bend left=30] node[above, xshift=2ex, yshift=-1ex] {$-2$} (C);
		\draw[] (C) to [bend left=30] node[below] {$+1$} (D);
		\draw[] (D) to [bend left=30] node[above, xshift=-2ex, yshift=-1ex] {$+2$} (A);
		\draw[] (A) to [bend left=30] node[below, xshift=-1.5ex, yshift=1ex] {$-1$} (D); 
		\draw[thick, color=blue] (D) to [bend left=30] node[below] {$-1$} (C);
		\draw[] (C) to [bend left=30] node[below, xshift=1.5ex, yshift=1ex] {$+1$} (B);
		\draw[] (B) to [bend left=30] node[above] {$+2$} (A);
		\draw[] (A) to [bend left=10] node[below, xshift=-1.5ex, yshift=1ex] {$-9$} (C);
	\end{tikzpicture}
	\begin{tikzpicture}[arrows=->,scale=.65,node distance=1.5 and 1.5]
 		\node[node, thick, color=red] (A) {$A$};
		\node[node, thick, color=blue, fill=blue!30, right=of A] (B) {$B$};
		\node[node, thick, color=red, below=of B] (C) {$C$};
		\node[node, thick, color=blue, fill=blue!30, left=of C] (D) {$D$};	
\node[node, below=of B, scale=0.7, xshift=3.2ex, yshift=-1.7ex, fill=purple, draw=black, line width=1mm] (pebble) {};
		\draw[] (A) to [bend left=30] node[above] {$-1$} (B); 
		\draw[] (B) to [bend left=30] node[above, xshift=2ex, yshift=-1ex] {$-2$} (C);
		\draw[thick, color=red] (C) to [bend left=30] node[below] {$+1$} (D);
		\draw[] (D) to [bend left=30] node[above, xshift=-2ex, yshift=-1ex] {$+2$} (A);
		\draw[] (A) to [bend left=30] node[below, xshift=-1.5ex, yshift=1ex] {$-1$} (D); 
		\draw[] (D) to [bend left=30] node[below] {$-1$} (C);
		\draw[] (C) to [bend left=30] node[below, xshift=1.5ex, yshift=1ex] {$+1$} (B);
		\draw[] (B) to [bend left=30] node[above] {$+2$} (A);
		\draw[] (A) to [bend left=10] node[below, xshift=-1.4ex, yshift=1ex] {$-9$} (C);
	\end{tikzpicture}
	\begin{tikzpicture}[arrows=->,scale=.65,node distance=1.5 and 1.5]
 		\node[node, thick, color=red] (A) {$A$};
		\node[node, thick,  color=blue, fill=blue!30, right=of A] (B) {$B$};
		\node[node, thick,  color=red, below=of B] (C) {$C$};
		\node[node, thick,  color=blue, fill=blue!30, left=of C] (D) {$D$};	
		\node[node, left=of C, scale=0.7, xshift=-3.3ex, yshift=-1.5ex, 
			fill=purple, draw=black, line width=1mm] (pebble) {};	
		\draw[] (A) to [bend left=30] node[above] {$-1$} (B); 
		\draw[] (B) to [bend left=30] node[above, xshift=2ex, yshift=-1ex] {$-2$} (C);
		\draw[thick, color=red] (C) to [bend left=30] node[below] {$+1$} (D);
		\draw[] (D) to [bend left=30] node[above, xshift=-2ex, yshift=-1ex] {$+2$} (A);
		\draw[] (A) to [bend left=30] node[below, xshift=-1.5ex, yshift=1ex] {$-1$} (D); 
		\draw[thick, color=blue] (D) to [bend left=30] node[below] {$-1$} (C);
		\draw[] (C) to [bend left=30] node[below, xshift=1.5ex, yshift=1ex] {$+1$} (B);
		\draw[] (B) to [bend left=30] node[above] {$+2$} (A);
		\draw[] (A) to [bend left=10] node[below, xshift=-1.4ex, yshift=1ex] {$-9$} (C);
	\end{tikzpicture}

\caption{An MPG on $\Gamma$, played from left to right, whose payoff equals $\frac{-1+1}{2}=0$.}
\end{figure}

A finite variant of \MPG{s} is well-known in the literature 
\cite{EhrenfeuchtMycielski:1979, ZwickPaterson:1996, brim2011faster}. 
Here, the game stops as soon as a cyclic sequence of vertices is traversed 
(\ie as soon as one of the two players moves the pebble into a previously visited vertex). 
It turns out that this finite variant is equivalent to the infinite one \cite{EhrenfeuchtMycielski:1979}.
Specifically, the values of an \MPG are in relationship with the average weights of its cycles, 
as stated in the next lemma.

\begin{Lem}[Brim,~\etal \cite{brim2011faster}]\label{Lem:reachable_cycle}
Let $\Gamma= ( V, E, w, \langle V_0, V_1\rangle )$ be an \MPG. 
For all $\nu\in\Q$, for all positional strategies $\sigma_0\in\Sigma^M_0$
of Player~$0$, and for all vertices $v\in V$, the value $\texttt{val}^{\sigma_0}(v)$ 
is greater than $\nu$ if and only if all cycles $C$ 
reachable from $v$ in the projection graph $G^{\Gamma}_{\sigma_0}$ 
have an average weight $w(C)/|C|$ greater than $\nu$.
\end{Lem}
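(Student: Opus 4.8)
The plan is to fix the positional strategy $\sigma_0\in\Sigma^M_0$ once and for all and to argue inside the projection graph $G':=G^{\Gamma}_{\sigma_0}$, in which every vertex of $V_0$ has exactly one outgoing arc while every vertex of $V_1$ retains all of its outgoing arcs. In this setting $\val{\sigma_0}{v}=\inf_{\sigma_1\in\Sigma_1}\texttt{MP}_0\big(\texttt{outcome}^{\Gamma}(v,\sigma_0,\sigma_1)\big)$ is exactly the infimum of the mean payoff over the plays that start at $v$ and are consistent with $\sigma_0$, i.e.\ over the infinite paths of $G'$ issuing from $v$ (every such path is the outcome of the Player~$1$ strategy that mimics its $V_1$-choices, and conversely). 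Since $G^{\Gamma}$ has no sink, neither does $G'$, so at least one cycle is reachable from $v$ in $G'$. I will prove the two implications separately.

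For the ``only if'' direction I argue by contraposition: assume some cycle $C=c_0c_1\ldots c_{|C|}=c_0$ reachable from $v$ in $G'$ satisfies $w(C)/|C|\le\nu$, and I exhibit $\sigma_1\in\Sigma_1$ with $\texttt{MP}_0\big(\texttt{outcome}^{\Gamma}(v,\sigma_0,\sigma_1)\big)\le\nu$, which yields $\val{\sigma_0}{v}\le\nu$. By definition of reachability there is a simple path of $G'$ from $v$ to some vertex of $C$; let Player~$1$ first drive the pebble along this path and then loop around $C$ forever. All moves out of $V_0$-vertices along this path and along $C$ are forced and belong to $G'$, so the play is consistent with $\sigma_0$; off this play $\sigma_1$ may be fixed arbitrarily. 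For the resulting play $\pi$ the partial sum of the first $n$ weights equals a constant prefix contribution plus a number of full traversals of $C$ plus a bounded remainder, so $\frac1n\sum_{i=0}^{n-1}w(\pi_i,\pi_{i+1})\to w(C)/|C|$, whence $\texttt{MP}_0(\pi)=w(C)/|C|\le\nu$.

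For the ``if'' direction assume every cycle reachable from $v$ in $G'$ has average weight greater than $\nu$. Because $G'$ is finite there are only finitely many such cycle means, so their minimum $\mu^{*}$ satisfies $\mu^{*}>\nu$. The key claim is that every infinite path $\pi=v_0v_1\ldots$ of $G'$ with $v_0=v$ has $\liminf_{n\to\infty}\frac1n\sum_{i=0}^{n-1}w(v_i,v_{i+1})\ge\mu^{*}$. To prove it, fix $n$ and decompose the finite walk $v_0\ldots v_n$ by repeatedly excising a simple cycle at the first vertex repetition; this produces simple cycles $C_1,\ldots,C_t$, each a closed subwalk of $v_0\ldots v_n$ and therefore reachable from $v$ in $G'$, together with a residual simple path $\rho$ from $v_0$ to $v_n$, and $\sum_{i=0}^{n-1}w(v_i,v_{i+1})=\sum_{l=1}^{t}w(C_l)+w(\rho)$ with $\sum_{l=1}^{t}|C_l|=n-|\rho|\ge n-(|V|-1)$. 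Using $w(C_l)\ge\mu^{*}|C_l|$ for each $l$ and $|w(\rho)|\le W(|V|-1)$, one obtains $\sum_{i=0}^{n-1}w(v_i,v_{i+1})\ge\mu^{*}n-(|\mu^{*}|+W)(|V|-1)$; dividing by $n$ and letting $n\to\infty$ proves the claim. Consequently $\texttt{MP}_0$ of every play from $v$ consistent with $\sigma_0$ is at least $\mu^{*}$, hence $\val{\sigma_0}{v}\ge\mu^{*}>\nu$.

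I expect the only real obstacle to be the cycle-excision estimate in the ``if'' direction: the point is to upgrade the purely qualitative hypothesis ``every reachable cycle mean exceeds $\nu$'' to a bound on the $\liminf$ that survives the vanishing contribution of the transient simple path $\rho$, and this is exactly where finiteness of $G'$ enters (finitely many cycle means, hence a strictly positive gap $\mu^{*}-\nu$; bounded simple-path length, hence a bounded prefix error). The ``only if'' direction is routine once the ``reach-then-loop'' strategy for Player~$1$ is written down, and the only bookkeeping worth double-checking is that the notion of reachable cycle used here coincides with that of a cycle occurring as a subwalk of a play, which is immediate.
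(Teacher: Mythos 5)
Your proof is correct, and it is genuinely more than what the paper itself offers: the paper treats this lemma as imported from Brim et al.\ and disposes of it with the single remark that it ``follows from the memoryless determinacy of \MPG{s}'', i.e.\ no argument is actually spelled out. Your route is a direct, self-contained analysis of the one-player situation obtained after fixing $\sigma_0$: the identification of plays consistent with $\sigma_0$ with infinite paths of $G^{\Gamma}_{\sigma_0}$, the reach-then-loop strategy for Player~1 in the ``only if'' direction, and the cycle-excision decomposition with the uniform error bound $(|\mu^*|+W)(|V|-1)$ in the ``if'' direction are all sound (the excised cycles are indeed reachable in the paper's sense, since any vertex visited by a walk from $v$ is reached by a simple path from $v$, and history-dependent strategies of Player~1 suffice to realize the constructed play). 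Notably, your argument does not invoke memoryless determinacy at all; what determinacy buys in the cited treatment --- reducing Player~1's best response against a positional $\sigma_0$ to looping on a minimum-mean reachable cycle --- is exactly what your excision estimate proves from scratch, at the modest cost of the explicit bookkeeping with $\mu^*$, $|\rho|\le |V|-1$ and the prefix error. One cosmetic point: in the ``only if'' direction you conclude $\texttt{MP}_0(\pi)=w(C)/|C|$; strictly you only need $\le$, and the equality holds because the averages converge along the eventually periodic play, which is worth one clause when writing this up.
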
 
The proof of Lemma~\ref{Lem:reachable_cycle} follows from the memoryless determinacy of \MPG{s}. 
We remark that a proposition which is symmetric to Lemma~\ref{Lem:reachable_cycle} 
holds for Player~$1$ as well: for all $\nu\in\Q$, 
for all positional strategies $\sigma_1\in\Sigma^M_1$ of Player~$1$, and for all vertices $v\in V$, 
the value $\texttt{val}^{\sigma_1}(v)$ is less than $\nu$ if and only if 
all cycles reachable from $v$ in the projection graph $G^{\Gamma}_{\sigma_1}$
have an average weight less than $\nu$.

Also, it is well-known~\cite{brim2011faster, EhrenfeuchtMycielski:1979} 
that each value $\val{\Gamma}{v}$ is contained within the following set of rational numbers:
\[ S_{\Gamma}=\left\{ \frac{N}{D} \suchthat D\in [1, |V|],\, N\in [-D\, W, D\, W] \right\}.\]
Notice that $|S_{\Gamma}|\leq |V|^2 W$. 

\paragraph*{}
The present work tackles on the algorithmics of the following two classical problems:

\begin{itemize}
\item \emph{Value Problem.} Compute for each vertex $v\in V$ the (rational) optimal value $\val{\Gamma}{v}$.
\item \emph{Optimal Strategy Synthesis.} Compute an optimal positional strategy $\sigma_0\in\Sigma^M_0$. 
\end{itemize}

Currently, the asymptotically fastest pseudo-polynomial time algorithm  
for solving both problems is a deterministic procedure whose time complexity 
is $O(|V|^2|E|\, W\,\log (|V|\, W))$~\cite{brim2011faster}. 
This result has been achieved by devising a binary-search procedure that ultimately reduces the Value Problem 
and Optimal Strategy Synthesis to the resolution of yet another family of games known as the \emph{Energy Games}.
Even though we do not rely on binary-search in the present work, 
and thus we will introduce some truly novel ideas that diverge from the previous solutions, 
still, we will reduce to solving multiple instances of Energy Games. 
For this reason, the Energy Games are recalled in the next paragraph.

\paragraph*{Energy Games and Small Energy-Progress Measures.}
An \emph{Energy Game} (\EG) is a game that is played on an arena $\Gamma$ for infinitely many rounds by two opponents,
where the goal of Player~0 is to construct an infinite play $v_0v_1\ldots v_n\ldots$ such that for some initial 
\emph{credit} $c\in\N$ the following holds:
\begin{eqnarray}
	c + \sum_{i=0}^{j}w(v_i, v_{i+1})\geq 0\, \text{, for all } j \geq 0. 
\end{eqnarray}
Given a credit $c\in\N$, a play $v_0v_1\ldots v_n\ldots$ is \emph{winning} for Player~0 if it satisfies (1), 
otherwise it is winning for Player~1. 
A vertex $v\in V$ is a winning starting position for Player~0 if there exists an initial credit $c\in\N$ 
and a strategy $\sigma_0\in\Sigma_0$ such that, for every strategy $\sigma_1\in\Sigma_1$, 
the play $\texttt{outcome}^{\Gamma}(v, \sigma_0, \sigma_1)$ is winning for Player~0. 
As in the case of \MPG{s}, the \EG{s} are memoryless determined \cite{brim2011faster}, 
\ie for every $v\in V$, either $v$ is winning for Player~$0$ or $v$ is winning for Player~$1$, 
and (uniform) memoryless strategies are sufficient to win the game.
In fact, as shown in the next lemma, the decision problems of \MPG{s} and \EG{s} are intimately related.
\begin{Lem}[Brim,~\etal \cite{brim2011faster}]\label{Lem:relation_MPG_EG}
Let $\Gamma=( V, E, w, \langle V_0, V_1 \rangle )$ be an arena. For all threshold $\nu\in\Q$, 
for all vertices $v\in V$, Player~$0$ has a strategy in the \MPG $\Gamma$ that secures value 
at least $\nu$ from $v$ if and only if, for some initial credit $c\in\N$, 
Player~$0$ has a winning strategy from $v$ in the reweighted \EG $\Gamma^{w-\nu}$.
\end{Lem}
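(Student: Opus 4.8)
The plan is to prove Lemma~\ref{Lem:relation_MPG_EG} by combining the memoryless determinacy of both \MPG{s} and \EG{s} with the cycle-characterization provided by Lemma~\ref{Lem:reachable_cycle}. The key observation is that, after reweighting every arc by $w-\nu$, a cycle $C$ in $G^{\Gamma}$ has nonnegative total reweighted weight exactly when its original average weight $w(C)/|C|$ is at least $\nu$; this elementary algebraic identity is the bridge between the mean-payoff condition and the energy condition. I would also keep in mind the scaling convention introduced in Section~\ref{sect:background}: when $\nu=N/D$ is rational, $\Gamma^{w-\nu}$ is really $\Gamma^{D\cdot w - N}$, and since multiplying all weights of a cycle by the positive integer $D$ preserves the sign of the total weight, the argument goes through verbatim with $\nu$ rational.

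First I would prove the forward direction. Suppose Player~0 has a strategy in the \MPG $\Gamma$ securing value at least $\nu$ from $v$; by memoryless determinacy we may take a positional $\sigma_0\in\Sigma^M_0$ with $\texttt{val}^{\sigma_0}(v)\geq\nu$. I want to handle the non-strict inequality carefully: Lemma~\ref{Lem:reachable_cycle} is phrased with strict inequalities, so I would apply it with a slightly smaller threshold, or more cleanly observe that $\texttt{val}^{\sigma_0}(v)\geq\nu$ forces every cycle $C$ reachable from $v$ in $G^{\Gamma}_{\sigma_0}$ to satisfy $w(C)/|C|\geq\nu$, hence $\sum_{e\in C}(w_e-\nu)\geq 0$, i.e. every such cycle is nonnegative in the reweighted graph $G^{\Gamma^{w-\nu}}_{\sigma_0}$. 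A standard fact about graphs with no negative cycles (reachable from $v$) is that the shortest-path distances from $v$ are finite and bounded below; taking $c$ to be the absolute value of the most negative such distance (or $(|V|-1)\cdot$ the maximal reweighted arc magnitude) yields an initial credit for which the energy condition~(1) is maintained along every play consistent with $\sigma_0$. Thus $\sigma_0$ is winning for Player~0 in the \EG $\Gamma^{w-\nu}$ from $v$ with credit $c$.

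For the converse, suppose that for some credit $c\in\N$ Player~0 wins the \EG $\Gamma^{w-\nu}$ from $v$; by memoryless determinacy of \EG{s} fix a positional winning $\sigma_0\in\Sigma^M_0$. If some cycle $C$ reachable from $v$ in $G^{\Gamma^{w-\nu}}_{\sigma_0}$ had strictly negative total reweighted weight, then Player~1 could drive the pebble to $C$ and loop around it forever, making the running sum in~(1) tend to $-\infty$ and thereby violating the energy condition for every fixed credit --- contradicting that $\sigma_0$ is winning. Hence every cycle $C$ reachable from $v$ in $G^{\Gamma}_{\sigma_0}$ has $\sum_{e\in C}(w_e-\nu)\geq 0$, i.e. $w(C)/|C|\geq\nu$. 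I would then invoke Lemma~\ref{Lem:reachable_cycle} (again taking care with strict versus non-strict inequalities, e.g. by noting that all reachable cycles having average weight $\geq\nu$ is equivalent to $\texttt{val}^{\sigma_0}(v)\geq\nu$, which one obtains from Lemma~\ref{Lem:reachable_cycle} applied at threshold $\nu-\epsilon$ for all small $\epsilon>0$, or directly since there are finitely many relevant cycle averages) to conclude $\texttt{val}^{\sigma_0}(v)\geq\nu$, so Player~0 secures value at least $\nu$ in the \MPG from $v$.

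The main obstacle, and the only place requiring genuine care rather than routine bookkeeping, is the mismatch between the strict inequalities in Lemma~\ref{Lem:reachable_cycle} and the non-strict ``$\geq\nu$'' appearing in the present statement: naively a nonnegative-weight cycle in the reweighted graph corresponds to average weight $\geq\nu$, not $>\nu$, so I cannot apply Lemma~\ref{Lem:reachable_cycle} directly. The clean workaround is to note that for a fixed positional $\sigma_0$ the set of average weights of cycles in $G^{\Gamma}_{\sigma_0}$ is a finite set of rationals, so ``all reachable cycles have average $\geq\nu$'' holds iff ``all reachable cycles have average $>\nu'$'' for $\nu'$ slightly below $\nu$ but above the largest cycle-average that is $<\nu$ (if any); equivalently, $\texttt{val}^{\sigma_0}(v)\geq\nu$ iff $\texttt{val}^{\sigma_0}(v)>\nu'$ for such $\nu'$, reducing everything to the form in which Lemma~\ref{Lem:reachable_cycle} is stated. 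The remaining ingredient --- that a graph whose cycles reachable from $v$ are all nonnegative admits a finite sufficient initial credit --- is the classical potential/shortest-path argument and needs only to be stated, with the bound $c\le(|V|-1)\cdot\|w-\nu\|_\infty$ (after the integer rescaling by $D$) recorded for later use in the complexity analysis.
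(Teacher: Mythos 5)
The paper does not actually prove this lemma: it is imported verbatim from Brim~\etal~\cite{brim2011faster} and used as a black box, so there is no internal proof to compare against. Your argument is correct and self-contained, and it is essentially the standard one from the cited literature: positional determinacy on both sides, the observation that after reweighting by $w-\nu$ (and rescaling by the denominator $D$, which preserves signs) a reachable cycle has nonnegative total weight iff its original average is at least $\nu$, the cycle-decomposition/potential argument giving the finite credit bound $c\le(|V|-1)\cdot\|D(w-\nu)\|_\infty$ in the forward direction, and the ``loop a negative cycle forever'' contradiction in the converse. Your handling of the strict-versus-non-strict mismatch with Lemma~\ref{Lem:reachable_cycle} (finitely many cycle averages in the projection graph $G^{\Gamma}_{\sigma_0}$, so one may shift the threshold to a suitable $\nu'<\nu$, or equivalently note that $\texttt{val}^{\sigma_0}(v)$ equals the minimum reachable cycle average) is sound and is the right way to make the reduction to the strict form of that lemma rigorous.
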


In this work we are especially interested in the \emph{Minimum Credit Problem} (\MCP) for \EG{s}: 
for each winning starting position $v$, compute the minimum initial credit $c^*=c^*(v)$ 
such that there exists a winning strategy $\sigma_0\in\Sigma^M_0$ for Player~$0$ starting from $v$.
A fast pseudo-polynomial time deterministic procedure for solving \MCP{s} comes from \cite{brim2011faster}.

\begin{Thm}[Brim,~\etal \cite{brim2011faster}]\label{Thm:VI}
There exists a deterministic algorithm for solving the MCP within 
$O(|V|\,|E|\,W)$ pseudo-polynomial time, 
on any input \EG $(V, E, w, \langle V_0, V_1\rangle)$. 
\end{Thm}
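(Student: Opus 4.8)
\noindent\emph{Proof plan.}
The plan is to recall the value-iteration (progress-measure lifting) procedure of Brim,~\etal, to establish its correctness as a monotone least-fixpoint computation over a finite lattice, and to bound its running time by an amortized analysis. Fix the \emph{energy domain} $\mathcal{E}_\Gamma:=\{0,1,\ldots,(|V|-1)W\}\cup\{\top\}$, totally ordered with $\top$ greatest, and for $a\in\mathcal{E}_\Gamma$ and $z\in\Z$ put $a\ominus z:=\max(0,a-z)$ if $a\neq\top$ and $\max(0,a-z)\le(|V|-1)W$, and $a\ominus z:=\top$ otherwise. Call $f:V\to\mathcal{E}_\Gamma$ a \emph{Small Energy-Progress Measure} (SEPM) if $f(v)\ge f(v')\ominus w(v,v')$ for \emph{some} arc $(v,v')\in E$ when $v\in V_0$, and for \emph{all} arcs $(v,v')\in E$ when $v\in V_1$. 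First I would introduce the lifting operator which, given $f$ and a vertex $v$ violating its defining inequality, raises $f(v)$ to the least value in $\mathcal{E}_\Gamma$ restoring it, that is $\ell_f(v):=\min_{(v,v')\in E}f(v')\ominus w(v,v')$ if $v\in V_0$ and $\ell_f(v):=\max_{(v,v')\in E}f(v')\ominus w(v,v')$ if $v\in V_1$, leaving all other coordinates unchanged; the algorithm starts from $f\equiv0$, repeatedly lifts some violating vertex until none remains, and returns the resulting SEPM $f^\star$ together with, for each $v\in V_0$, an outgoing arc attaining $\ell_{f^\star}(v)$.

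\emph{Correctness.} The simultaneous one-step operator $F$ with $F(f)(v):=\max\{f(v),\ell_f(v)\}$ is monotone on the finite lattice $\mathcal{E}_\Gamma^{V}$, and the stream of individual lifts is a Gauss--Seidel (asynchronous) iteration of $F$ from the bottom element $(0,\ldots,0)$; hence it terminates at the least fixpoint of $F$ irrespective of the selection order, and that fixpoint is exactly the pointwise-least SEPM $f^\star$. It then remains to identify $f^\star$ with the minimum-credit function $c^\star$, extended by $\top$ on $\W_1$. For $c^\star\le f^\star$ I would use any SEPM $f$ as a potential: if $f(v)\neq\top$, the positional Player~0 strategy moving at each $u\in V_0$ to a successor realizing $\ell_f(u)$ ensures, for every play $v=v_0v_1\cdots$ consistent with it, that $f(v_{i+1})\le f(v_i)+w(v_i,v_{i+1})$ at every step, so $\sum_{i=0}^{j}w(v_i,v_{i+1})\ge f(v_{j+1})-f(v_0)\ge-f(v_0)$ for all $j$, i.e.\ initial credit $f(v_0)$ keeps the energy nonnegative. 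For $f^\star\le c^\star$ I would first bound $c^\star$: taking, by memoryless determinacy of \EG{s}, a positional $\sigma_0$ and a finite credit witnessing $v\in\W_0$, every cycle reachable from $v$ in $G^{\Gamma}_{\sigma_0}$ must be nonnegative (else Player~1 could force the energy to $-\infty$), so decomposing any consistent play into a simple path (at most $|V|-1$ arcs) plus nonnegative cycles shows every partial sum is $\ge-(|V|-1)W$; hence $c^\star(v)\le(|V|-1)W$, so $c^\star$ takes values in $\mathcal{E}_\Gamma$, and one checks directly that $c^\star$ is itself a SEPM, whence $f^\star\le c^\star$ and $f^\star(v)=\top$ exactly on $\W_1$. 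I expect this last block to be the main obstacle: getting right that the cut-off $(|V|-1)W$ is sound (so the lifting never needs to leave $\mathcal{E}_\Gamma$) and that $\top$ faithfully marks the losing positions. The potential and fixpoint parts are routine.

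\emph{Complexity.} A naive implementation, recomputing each vertex's inequality from scratch, spends $O(|E|)$ per lift and $O(|V|^{2}|E|\,W)$ in total, which is too slow; to reach $O(|V|\,|E|\,W)$ I would keep a worklist $L$ of currently violating vertices and, for each $v\in V_0$, a counter $\mathit{cnt}(v):=|\{(v,v')\in E\mid f(v')\ominus w(v,v')\le f(v)\}|$, so that $v\in V_0$ is consistent iff $\mathit{cnt}(v)\ge1$ (for $v\in V_1$ one tests consistency arc by arc). Processing a vertex $v$ pulled from $L$ costs $O(|\texttt{post}(v)|)$ to compute $\ell_f(v)$, update $f(v)$ and rebuild $\mathit{cnt}(v)$, plus $O(|\texttt{pre}(v)|)$ to revisit each predecessor $u$: for $u\in V_0$ one decrements $\mathit{cnt}(u)$ if the arc $(u,v)$ just switched from good to bad (detectable in $O(1)$ by monotonicity of $\ominus$), enqueueing $u$ when $\mathit{cnt}(u)$ hits $0$; for $u\in V_1$ one re-tests the single arc $(u,v)$ and enqueues $u$ if it turned bad. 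Thus one lift of $v$ costs $O(|\texttt{post}(v)|+|\texttt{pre}(v)|)$. Since every lift strictly increases $f(v)$ along the chain $0<1<\cdots<(|V|-1)W<\top$, vertex $v$ is lifted at most $(|V|-1)W+1$ times, so the total work is $\sum_{v\in V}O\big((|V|\,W)\cdot(|\texttt{post}(v)|+|\texttt{pre}(v)|)\big)=O(|V|\,W)\cdot O(|E|)=O(|V|\,|E|\,W)$, which absorbs the $O(|V|+|E|)$ initialization. This yields the claimed time bound and, as a by-product, the sought minimum-credit positional strategy.
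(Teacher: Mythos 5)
Your proposal is correct, and it is essentially the same argument the paper relies on: the paper does not reprove this theorem but cites Brim~\etal and sketches exactly this Value-Iteration scheme (lifting operator, worklist of inconsistent vertices, $O(|\texttt{post}(v)|+|\texttt{pre}(v)|)$ per lift, at most $O(|V|\,W)$ lifts per vertex), and your correctness part (least fixpoint over the truncated energy lattice, SEPMs as potentials, identification of the least SEPM with the minimum-credit function) is a faithful reconstruction of the cited proof. The only cosmetic divergence is the cut-off $(|V|-1)W$ versus the paper's $|V|\,W$ in the definition of $\C_\Gamma$, which affects nothing since both dominate the maximal finite minimum credit.
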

The algorithm mentioned in Theorem~\ref{Thm:VI} is the 
\emph{Value-Iteration} algorithm analyzed by Brim,~\etal in \cite{brim2011faster}. 
Its rationale relies on the notion of \emph{Small Energy-Progress Measures} (SEPM{s}).
These are bounded, non-negative and integer-valued functions that impose local conditions to ensure global properties on the
arena, in particular, witnessing that Player~0 has a way to enforce conservativity (\ie non-negativity of cycles) in the resulting game's graph.
Recovering standard notation, see e.g.~\cite{brim2011faster}, 
let us denote $\C_\Gamma=\{n\in\N\mid n\leq |V|\, W\}\cup\{\top\}$ and let $\preceq$ be the total order on 
$\C_\Gamma$ defined as $x\preceq y$ if and only if either $y=\top$ or $x,y\in\N$ and $x\leq y$.

In order to cast the minus operation to range over $\C_{\Gamma}$, 
let us consider an operator $\ominus:\C_\Gamma\times\Z\rightarrow \C_\Gamma$ defined as follows: 
\[
a\ominus b := \left\{ 
\begin{array}{ll}
	\max(0, a-b) & ,  \text{ if } a\neq \top \text{ and } a-b\leq |V|\,W ; \\
	a\ominus b = \top & , \text{ otherwise.} \\
\end{array}\right.
\]
Given an \EG $\Gamma$ on vertex set $V = V_0 \cup V_1$, a function $f:V\rightarrow\C_\Gamma$ is 
a \emph{Small Energy-Progress Measure} (SEPM) for $\Gamma$ if and only if the following two conditions are met: 
\begin{enumerate}
\item if $v\in V_0$, then $f(v)\succeq f(v') \ominus w(v,v')$ for \emph{some} $( v, v')\in E$;
\item if $v\in V_1$, then $f(v)\succeq f(v') \ominus w(v,v')$ for \emph{all} $( v, v')\in E$.
\end{enumerate}

The values of a SEPM, \ie the elements of the image $f(V)$,
are called the \emph{energy levels} of $f$.
It is worth to denote by $V_f=\{v\in V\mid f(v)\neq\top\}$ the set of vertices having finite energy.
Given a SEPM $f$ and a vertex $v\in V_0$, 
an arc $(v, v')\in E$ is said to be \emph{compatible with $f$} whenever $f(v)\succeq f(v')\ominus w(v,v')$;
moreover, a positional strategy $\sigma^f_0\in\Sigma^M_0$ is said to be \emph{compatible with $f$} whenever 
for all $v\in V_0$, if $\sigma^f_0(v)=v'$ then $( v,v' )\in E$ is compatible with $f$.
Notice that, as mentioned in \cite{brim2011faster}, if $f$ and $g$ are SEPM{s}, 
then so is the \emph{minimum function} defined as: $h(v)=\min\{f(v), g(v)\}$ for every $v\in V$.
This fact allows one to consider the \emph{least} SEPM,  
namely, the unique SEPM $f^*:V\rightarrow \C_\Gamma$ such that, 
for any other SEPM $g:V\rightarrow \C_\Gamma$, the following holds: $f^*(v)\preceq g(v)$ for every $v\in V$.
Also concerning SEPMs, we shall rely on the following lemmata.
The first one relates SEPMs to the winning region $\W_0$ of Player~0 in \EG{s}.
\begin{Lem}[Brim,~\etal \cite{brim2011faster}]\label{Lem:least_energy_prog_measure}
Let $\Gamma=( V, E, w, \langle V_0, V_1\rangle )$ be an \EG. 
\begin{enumerate}
\item If $f$ is any SEPM of the \EG $\Gamma$ and $v\in V_{f}$, 
then $v$ is a winning starting position for Player~$0$ in the \EG $\Gamma$.
Stated otherwise, $V_f\subseteq \W_0$;
\item If $f^*$ is the least SEPM of the \EG $\Gamma$, 
and $v$ is a winning starting position for Player~$0$ in the \EG $\Gamma$, then $v\in V_{f^*}$. 
Thus, $V_{f^*}=\W_0$.
\end{enumerate}
\end{Lem}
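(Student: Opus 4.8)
The plan is to establish the two items separately and then derive item~2 from item~1. For item~1, fix a SEPM $f$, a vertex $v\in V_f$, and a positional strategy $\sigma^f_0\in\Sigma^M_0$ compatible with $f$; I would show that $\sigma^f_0$ is winning for Player~$0$ from $v$ with initial credit $c:=f(v)\in\N$. The heart of the matter is the invariant, proved by induction on $j\ge 0$ along any play $v_0v_1\ldots$ with $v_0=v$ that is consistent with $\sigma^f_0$: one has $v_j\in V_f$ and $f(v_0)+\sum_{i=0}^{j-1}w(v_i,v_{i+1})\ge f(v_j)$. In the inductive step the SEPM conditions give $f(v_j)\succeq f(v_{j+1})\ominus w(v_j,v_{j+1})$ --- condition~1 together with compatibility of $\sigma^f_0$ when $v_j\in V_0$, condition~2 when $v_j\in V_1$; since $f(v_j)\ne\top$, the definition of $\ominus$ forces $f(v_{j+1})\ne\top$ (hence $v_{j+1}\in V_f$) and $f(v_j)\ge f(v_{j+1})-w(v_j,v_{j+1})$, and adding $w(v_j,v_{j+1})$ to the inductive hypothesis closes the step. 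As energy levels are non-negative, $f(v)+\sum_{i=0}^{j}w(v_i,v_{i+1})\ge f(v_{j+1})\ge 0$ for every $j$, which is exactly condition~(1); hence $v\in\W_0$, proving $V_f\subseteq\W_0$ and, in particular, $V_{f^*}\subseteq\W_0$.

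For item~2 it remains to show $\W_0\subseteq V_{f^*}$. The plan is to exhibit \emph{some} SEPM $g$ with $V_g=\W_0$: by minimality of $f^*$ we then get $f^*(v)\preceq g(v)$ for all $v$, hence $\W_0=V_g\subseteq V_{f^*}$, and with item~1 this gives $V_{f^*}=\W_0$. To construct $g$ I would invoke memoryless determinacy of \EG{s} to fix a uniform positional strategy $\sigma_0\in\Sigma^M_0$ winning from every vertex of $\W_0$. One first checks that $\sigma_0$ confines every play started in $\W_0$ to $\W_0$ (after any single step the residual credit is still non-negative, so by positionality $\sigma_0$ keeps winning), so the subgraph $G'$ of $G^{\Gamma}_{\sigma_0}$ induced on $\W_0$ has no sink; and that every cycle of $G'$ has non-negative total weight, for otherwise Player~$1$ could force a play into such a cycle and iterate it, driving the partial sums below any fixed credit. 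One then sets, for $v\in\W_0$,
\[
g(v):=\max\Big(0,\ -\min_{\rho}w(\rho)\Big),
\]
where $\rho$ ranges over all finite paths of $G'$ starting at $v$ (the empty path included) and $w(\rho)$ is the total weight of $\rho$, and $g(v):=\top$ for $v\notin\W_0$. Since every cycle of $G'$ is non-negative, each such $\rho$ may be shortened --- by deleting its cycles, none of which increases the weight --- to a simple path of no larger weight, so the minimum is realised on a simple path and $0\le g(v)\le(|V|-1)W\le|V|W$; thus $g$ maps $V$ into $\C_\Gamma$ with $V_g=\W_0$. A short path-decomposition computation then shows $g(v)=g(\sigma_0(v))\ominus w(v,\sigma_0(v))$ for $v\in V_0\cap\W_0$ and $g(v)=\max_{(v,v')\in E}\big(g(v')\ominus w(v,v')\big)$ for $v\in V_1\cap\W_0$ (the cap to $\top$ never triggers, as $g(v')-w(v,v')\le|V|W$), which are exactly the two SEPM inequalities; for $v\notin\W_0$ they hold trivially since $g(v)=\top$. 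Hence $g$ is a SEPM with $V_g=\W_0$, which completes the argument.

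Item~1 is the routine part: a single induction once the invariant is written down correctly. The main obstacle is item~2 --- pinning down the SEPM $g$ and, above all, establishing the bound $g(v)\le|V|W$, which is precisely the place where one must use that $G'$ has only non-negative cycles (so that only simple paths can realise the worst partial sum) and that $\sigma_0$ never lets a play leave $\W_0$.
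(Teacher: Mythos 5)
Your proof is correct. Note that the paper itself does not prove this lemma at all: it is imported verbatim from Brim~\etal~\cite{brim2011faster}, so there is no internal proof to compare against. Your argument is a sound self-contained reconstruction along the standard lines: for item~1, the inductive invariant $f(v_0)+\sum_{i=0}^{j-1}w(v_i,v_{i+1})\geq f(v_j)$ along plays consistent with an $f$-compatible positional strategy, using that $f(v_j)\neq\top$ forces $f(v_{j+1})\neq\top$ through the definition of $\ominus$, exactly yields the energy condition with credit $c=f(v)$; for item~2, the construction of the finite-valued SEPM $g$ on $\W_0$ from a uniform positional winning strategy $\sigma_0$ (using memoryless determinacy of \EG{s}, which the paper does state), together with minimality of $f^*$, gives $\W_0=V_g\subseteq V_{f^*}$. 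The two points that need care are handled: the confinement of $\sigma_0$-plays to $\W_0$ (which in particular puts \emph{all} successors of $V_1$-vertices of $\W_0$ back in $\W_0$, so the universal SEPM condition at $V_1$ only ever involves finite values of $g$), and the bound $g(v)\leq(|V|-1)W$ via non-negativity of cycles in the projected subgraph, which ensures both that $g$ lands in $\C_\Gamma$ and that the $\ominus$-cap to $\top$ never fires.
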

Also notice that the following bound holds on 
the energy levels of any SEPM (actually by definition of $\C_{\Gamma}$). 
\begin{Lem}\label{Lem:lepm_equals_mincredit}
Let $\Gamma=( V, E, w, \langle V_0, V_1 \rangle ) $ be an \EG. 
Let $f$ be any SEPM of $\Gamma$.
Then, for every $v\in V$ either $f(v)=\top$ or $0\leq f(v)\leq |V|\, W$.
\end{Lem}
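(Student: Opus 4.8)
The plan is simply to unwind the definition of a SEPM. By definition a SEPM for $\Gamma$ is a function $f:V\to\C_\Gamma$, and the codomain was fixed as $\C_\Gamma=\{n\in\N\mid n\leq |V|\,W\}\cup\{\top\}$. So the conclusion is immediate: fix any $v\in V$; since $f(v)\in\C_\Gamma$, either $f(v)=\top$, or $f(v)\in\N$ with $f(v)\leq |V|\,W$, and in the latter case $f(v)\geq 0$ holds automatically because the elements of $\N$ are non-negative integers. This yields the claimed dichotomy.

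There is genuinely no obstacle here, and nothing deep to carry out: the two local SEPM conditions (the existential one on $V_0$-vertices and the universal one on $V_1$-vertices) play no role, and neither does the truncated subtraction $\ominus$; the statement is a bookkeeping consequence of how $\C_\Gamma$ was defined, exactly as the parenthetical remark preceding the lemma already indicates. Accordingly, the proof will be a single line.

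The only point worth keeping in mind — but which is \emph{not} part of this lemma and need not be re-established — is \emph{why} it is legitimate to confine energy levels to $[0,|V|\,W]\cup\{\top\}$ in the first place, i.e.\ that doing so loses no SEPM one cares about and that the least SEPM $f^*$ still captures $\W_0$; that is precisely the content of Lemma~\ref{Lem:least_energy_prog_measure} together with the saturation rule built into $\ominus$. Lemma~\ref{Lem:lepm_equals_mincredit} is then the convenient quantitative corollary: it is what will let us bound, in later sections, the minimum credits $c^*(v)=f^*(v)$ for winning starting positions and thereby control the running time of the Value-Iteration procedure of Theorem~\ref{Thm:VI}.
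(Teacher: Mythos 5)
Your proof is correct and matches the paper exactly: the paper itself dispatches this lemma with the parenthetical ``(actually by definition of $\C_\Gamma$)'', which is precisely your one-line argument that $f$ maps into $\C_\Gamma=\{n\in\N\mid n\leq |V|\,W\}\cup\{\top\}$. Nothing further is needed.
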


\paragraph*{Value-Iteration Algorithm.} 
The algorithm devised by Brim,~\etal for solving the MCP in \EG{s} 
is known as \emph{Value-Iteration}~\cite{brim2011faster}.
Given an \EG $\Gamma$ as input, the Value-Iteration aims to compute the least SEPM $f^*$ of $\Gamma$.
This simple procedure basically relies 
on a lifting operator $\delta$.
Given $v\in V$, the \emph{lifting operator} 
$\delta(\cdot, v): [V\rightarrow \C_{\Gamma}]\rightarrow [V\rightarrow \C_\Gamma]$
is defined by $\delta(f,v)=g$, where:
\[
g(u)=\left\{
\begin{array}{ll}
f(u) & \text{ if } u\neq v \\
\min\{f(v') \ominus w(v,v') \mid v'\in \texttt{post}(v) \} & \text{ if } u=v\in V_0 \\
\max\{f(v') \ominus w(v,v') \mid v'\in \texttt{post}(v) \} & \text{ if } u=v\in V_1 \\
\end{array}
\right.
\]

We also need the following definition. Given a function $f:V\rightarrow \C_\Gamma$, we say that 
$f$ is \emph{inconsistent} in $v$ whenever one of the following two holds: 
\begin{enumerate}
\item $v\in V_0$ and \emph{for all} $v'\in \texttt{post}(v)$ it holds 
$f(v) \prec f(v') \ominus w(v,v')$;
\item  $v\in V_1$ and \emph{there exists} $v'\in \texttt{post}(v)$ such that  
$f(v) \prec f(v') \ominus w(v,v')$.
\end{enumerate}
 
To start with, the Value-Iteration algorithm initializes $f$ to the constant zero function, 
\ie $f(v)=0$ for every $v\in V$.
Furthermore, the procedure maintains a list $\L$ of vertices in order to witness the inconsistencies of $f$.
Initially, $v\in V_0\cap\L$ if and only if all arcs going out of $v$ are negative, while $v\in V_1\cap \L$ if and only 
if $v$ is the source of at least one negative arc. 
Notice that checking the above conditions takes time $O(|E|)$.

As long as the list $\L$ is nonempty, the algorithm picks a vertex $v$ from $\L$ and performs the following:
\begin{enumerate}
\item Apply the lifting operator $\delta(f,v)$ to $f$ in order to resolve the inconsistency of $f$ in $v$;
\item Insert into $\L$ all vertices $u\in\texttt{pre}(v)\setminus \L$ witnessing a new inconsistency due to the increase of $f(v)$.

(The same vertex can’t occur twice in $\L$, \ie there are no duplicate vertices in $\L$.)
\end{enumerate}
The algorithm terminates when $\L$ is empty. This concludes the description of the Value-Iteration algorithm.

As shown in~\cite{brim2011faster}, 
the update of $\L$ following an application of the lifting operator $\delta(f,v)$ requires $O(|\texttt{pre}(v)|)$ time. 
Moreover, a single application of the lifting operator $\delta(\cdot, v)$ takes $O(|\text{post}(v)|)$ time at most.
This implies that the algorithm can be implemented so that it will always halt within $O(|V|\,|E|\, W)$ time  
(the reader is referred to \cite{brim2011faster} in order to grasp 
all the details of the proof of correctness and complexity).

\paragraph*{}
\emph{Remark.} 
The Value-Iteration procedure lends itself to the following basic generalization, 
which turns out to be of a pivotal importance in order to best suit our technical needs. 
Let $f^*$ be the least SEPM of the \EG $\Gamma$.
Recall that, as a first step, the Value-Iteration algorithm initializes $f$ to be the constant zero function.
Here, we remark that it is not necessary to do that really. 
Indeed, it is sufficient to initialize $f$ to be any function $f_0$ which bounds $f^*$ from below, 
that is to say, to initialize $f$ to any $f_0:V\rightarrow\C_\Gamma$ such that $f_0(v) \preceq f^*(v)$ for every $v\in V$.
Soon after, $\L$ can be initialized in a natural way: 
just insert $v$ into $\L$ if and only if $f_0$ is inconsistent at $v$.
This initialization still requires $O(|E|)$ time and it doesn't affect the correctness of the procedure.

So, we shall assume to have at our disposal a procedure named \texttt{Value-Iteration()}, 
which takes as input an \EG $\Gamma=(V, E, w, \langle V_0, V_1\rangle )$ 
and an \emph{initial function} $f_0$ that bounds from below the least SEPM $f^*$ of the \EG $\Gamma$ 
(\ie s.t. $f_0(v)\preceq f^*(v)$ for every $v\in V$). 
Then, \texttt{Value-Iteration()} outputs the least SEPM $f^*$ of the \EG $\Gamma$ 
within $O(|V|\,|E|\,W)$ time and working with $O(|V|)$ space.

\section{Values and Optimal Positional Strategies from Reweightings}\label{section:values}
This section aims to show that values and optimal positional strategies of \MPG{s}
admit a suitable description in terms of reweighted arenas.
This fact will be the crux for solving the Value Problem 
and Optimal Strategy Synthesis in $O(|V|^2 |E|\, W)$ time. 

\subsection{On optimal values}
A simple representation of values in terms of \emph{Farey sequences} is now observed, 
then, a characterization of values in terms of reweighted arenas is provided.
\paragraph*{Optimal values and Farey sequences.} 
Recall that each value $\val{\Gamma}{v}$ is contained within the following set of rational numbers:
\[ S_{\Gamma}=\left\{ \frac{N}{D} \suchthat D\in [1, |V|],\, N\in [-DW, DW] \right\}.\]
Let us introduce some notation in order to handle $S_{\Gamma}$ in a way that is suitable for our purposes.
Firstly, we write every $\nu\in S_{\Gamma}$ as $\nu=i+F$, where $i=i_\nu=\lfloor \nu \rfloor$ 
is the integral and $F=F_\nu=\{\nu\}=\nu-i$ is the fractional part.   
Notice that $i\in [-W, W]$ and that $F$ is a non-negative rational number having denominator at most $|V|$.

As a consequence, it is worthwhile to consider the \emph{Farey sequence} $\F_n$ of order $n=|V|$.
This is the increasing sequence of all irreducible 
fractions from the (rational) interval $[0,1]$ with denominators less than or equal to $n$. 
In the rest of this paper, $\F_n$ denotes the following sorted set:
\[ \F_n=\left\{\frac{N}{D} \suchthat 0 \leq N \leq D\leq n, \gcd(N,D)=1 \right\}. \]

Farey sequences have numerous and interesting properties, in particular, many algorithms for 
generating the entire sequence $\F_n$ in time $O(n^2)$ are known in the literature~\cite{GKP:94},  
and these rely on \emph{Stern-Brocot} trees and \emph{mediant} properties.
Notice that the above mentioned quadratic running time is optimal, 
as it is well-known that the sequence $\F_n$ has $s(n) = \frac{3\,n^2}{\pi^2} + O(n\ln n) = \Theta(n^2)$ terms.

Throughout the article, we shall assume that $F_0, \ldots, F_{s-1}$ is an increasing ordering of $\F_n$, 
so that $\F_n=\{F_j\}_{j=0}^{s-1}$ and $F_j < F_{j+1}$ for every $j$.

Also notice that $F_0=0$ and $F_{s-1}=1$. 

For example, $\F_5=\{0, \frac{1}{5}, \frac{1}{4}, \frac{1}{3}, \frac{2}{5}, \frac{1}{2}, \frac{3}{5}, 
\frac{2}{3}, \frac{3}{4}, \frac{4}{5}, 1 \}$. 







At this point, $S_{\Gamma}$ can be represented as follows:
\[S_{\Gamma} = [-W, W) + \F_{|V|} = \left\{i+F_j \suchthat i\in [-W, W),\, j\in [0, s-1]\right\}. \]
The above representation of $S_{\Gamma}$ will be convenient in a while.

\paragraph{Optimal values and reweightings.} 
Two introductory lemmata are shown below, then, a characterization of optimal values in terms of reweightings is provided.
\begin{Lem}\label{Lem:additivity_lemma} 
Let $\Gamma = ( V, E, w, \langle V_0, V_1 \rangle )$ be an \MPG and let $q\in\Q$ 
be a rational number having denominator $D\in\N$. 
Then, $\val{\Gamma}{v} = \frac{1}{D}\val{\Gamma^{w+q}}{v} - q$ holds for every $v\in V$.
\end{Lem}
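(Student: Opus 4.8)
The plan is to reduce everything to the elementary fact that the Player-$0$ payoff transforms \emph{affinely} under the reweighting $w\mapsto Dw+N$. Here I write $q=N/D$ with $N\in\Z$, and I recall from Section~\ref{sect:background} that $\Gamma^{w+q}$ is by convention the integer-weighted arena $\Gamma^{Dw+N}$ obtained by rescaling; so the identity to be established is equivalently $\val{\Gamma^{w+q}}{v}=D\cdot\val{\Gamma}{v}+N$ for all $v\in V$. First I would observe that $\Gamma$ and $\Gamma^{w+q}$ share the same vertex set, arc set, owner partition, plays and strategies, so that $\texttt{outcome}^{\Gamma}(v,\sigma_0,\sigma_1)=\texttt{outcome}^{\Gamma^{w+q}}(v,\sigma_0,\sigma_1)$ for every $v$, $\sigma_0$, $\sigma_1$. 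Then, for a fixed play $\pi=v_0v_1\ldots$, a one-line computation from the definition of $\texttt{MP}_0$ gives
\[ \liminf_{n\to\infty}\frac1n\sum_{i=0}^{n-1}\bigl(D\,w(v_i,v_{i+1})+N\bigr)\;=\;D\cdot\liminf_{n\to\infty}\frac1n\sum_{i=0}^{n-1}w(v_i,v_{i+1})\;+\;N, \]
using $\frac1n\sum_{i=0}^{n-1}N=N$ together with the fact that $x\mapsto Dx+N$ is a \emph{strictly increasing} affine map (here it matters that $D\ge 1>0$), hence commutes with $\liminf$. So the payoff of $\pi$ in $\Gamma^{w+q}$ equals $D$ times its payoff in $\Gamma$, plus $N$.

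Next I would push this identity through the two extremal quantifiers defining the value. Fix $v\in V$ and any strategy $\sigma_0\in\Sigma_0$. Applying the previous step to $\pi=\texttt{outcome}^{\Gamma}(v,\sigma_0,\sigma_1)$ for each $\sigma_1\in\Sigma_1$ and taking the infimum over $\sigma_1$ — legitimate because $x\mapsto Dx+N$ is monotone — shows that the value secured by $\sigma_0$ from $v$ in $\Gamma^{w+q}$ is $D$ times the value secured by $\sigma_0$ from $v$ in $\Gamma$, plus $N$. Taking now the supremum over $\sigma_0$ (again monotone), and using the Ehrenfeucht--Mycielski existence of the value in both arenas, yields $\val{\Gamma^{w+q}}{v}=D\cdot\val{\Gamma}{v}+N$. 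Dividing by $D$ and recalling $N/D=q$ gives the claimed $\val{\Gamma}{v}=\frac1D\val{\Gamma^{w+q}}{v}-q$.

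I do not expect a genuine obstacle: the only two points deserving a word of care are (i) that $\liminf$ commutes with the affine map precisely because $D$ is a \emph{positive} integer — a negative factor would interchange $\liminf$ and $\limsup$ and break the symmetry between the two players — and (ii) the denominator bookkeeping, namely that $\val{\Gamma^{w+q}}{v}$ refers to the $D$-rescaled integer arena $\Gamma^{Dw+N}$, which is exactly why the factor $\frac1D$ appears in the statement. An alternative route would argue through cycles via Lemma~\ref{Lem:reachable_cycle}, since the average weight of every cycle scales as $w(C)/|C|\mapsto D\,w(C)/|C|+N$ under the reweighting, so the winning regions of the associated threshold games match up accordingly; but the direct computation on plays above is shorter and self-contained, so that is the route I would take.
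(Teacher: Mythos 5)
Your proposal is correct and follows essentially the same route as the paper: both expand $\val{\Gamma^{w+q}}{v}$ as $\sup_{\sigma_0}\inf_{\sigma_1}$ of the mean payoff in the rescaled arena $\Gamma^{Dw+N}$, pull the affine map $x\mapsto Dx+N$ out through $\liminf$, $\inf$ and $\sup$, and divide by $D$. Your version merely makes explicit the monotonicity justifications that the paper's chain of equalities leaves implicit.
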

\begin{proof}
Let us consider the play $\texttt{outcome}^{\Gamma^{w+q}}(v, \sigma_0, \sigma_1)=v_0v_1\ldots v_n\ldots$ 
By the definition of $\val{\Gamma}{v}$, 
and by that of reweighting $\Gamma^{w+q}$ ($= \Gamma^{D\cdot w+N}$), the following holds: 
\[
\begin{array}{lll} 
\val{\Gamma^{w+q}}{v} &=& \sup_{\sigma_0\in\Sigma_0}\inf_{\sigma_1\in\Sigma_1}
\texttt{MP}_0(\texttt{outcome}^{\Gamma^{w+q}}(v, \sigma_0, \sigma_1)) \\ 
 & & \\
 &=& \sup_{\sigma_0\in\Sigma_0}\inf_{\sigma_1\in\Sigma_1}
\lim\inf_{n\rightarrow\infty}\frac{1}{n}\sum_{i=0}^{n-1}(D\cdot w(v_i, v_{i+1}) + N)\;\;  \text{(if $q=N/D$)}\\ 
 & & \\
 &=& D\cdot \sup_{\sigma_0\in\Sigma_0}\inf_{\sigma_1\in\Sigma_1}
\texttt{MP}_0(\texttt{outcome}^{\Gamma}(v, \sigma_0, \sigma_1)) + N\\
 & & \\ 
&=& D\cdot \val{\Gamma}{v} + N. 
\end{array}
\]
Then, $\val{\Gamma}{v} = \frac{1}{D}\val{\Gamma^{w+q}}{v} - \frac{N}{D} = 
 \frac{1}{D}\val{\Gamma^{w+q}}{v} - q$ holds for every $v\in V$.


\end{proof}

\begin{Lem}\label{Lem:transition_lemma}
Given an \MPG $\Gamma = ( V, E, w, \langle V_0, V_1 \rangle )$, let us consider the reweightings:  
\[\Gamma_{i,j}= \Gamma^{w-i-F_j},\, \text{ for any } i\in [-W, W]\text{ and }j\in [0, s-1], \]
where $s=|\F_{|V|}|$ and $F_j$ is the $j$-th term of the Farey sequence $\F_{|V|}$.

Then, the following propositions hold:
\begin{enumerate}
\item\label{lem:transition_lemma:item1} For any $i\in [-W, W]$ and $j\in [0,s-1]$, we have:
\[ v\in \W_0(\Gamma_{i,j}) \text{ if and only if } \val{\Gamma}{v}\geq i+ F_j;\]
\item\label{lem:transition_lemma:item2} For any $i\in [-W, W]$ and $j\in [1,s-1]$, we have: 
\[ v\in \W_1(\Gamma_{i,j}) \text{ if and only if } \val{\Gamma}{v}\leq i + F_{j-1}.\]
\end{enumerate}
\end{Lem}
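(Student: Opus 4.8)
The plan is to reduce both statements to the relation between \MPG{s} and \EG{s} (Lemma~\ref{Lem:relation_MPG_EG}) combined with the additivity of values under reweighting (Lemma~\ref{Lem:additivity_lemma}), together with the key arithmetic fact that no value $\val{\Gamma}{v}$ can land strictly between two consecutive shifted Farey points $i+F_{j-1}$ and $i+F_j$. First I would recall that every $\val{\Gamma}{v}$ lies in $S_\Gamma = [-W,W) + \F_{|V|}$, so the set of achievable values is exactly the discrete grid $\{i+F_j : i\in[-W,W),\, j\in[0,s-1]\}$, and consecutive grid points differ by gaps of the form $F_{j}-F_{j-1}$ (or $1-F_{s-2}$ when wrapping from one integer to the next). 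The upshot I want to extract is: for every $v$ and every index pair $(i,j)$ with $j\ge 1$, exactly one of $\val{\Gamma}{v}\ge i+F_j$ or $\val{\Gamma}{v}\le i+F_{j-1}$ holds, since there is no element of $S_\Gamma$ strictly inside the open interval $(i+F_{j-1},\, i+F_j)$.

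For item~\ref{lem:transition_lemma:item1}, the chain of equivalences I would write is: $v\in\W_0(\Gamma_{i,j})$ iff Player~0 wins the \EG on $\Gamma_{i,j}=\Gamma^{w-i-F_j}$ from $v$ for some initial credit; by Lemma~\ref{Lem:relation_MPG_EG} (applied with threshold $\nu=i+F_j$, after the integral rescaling $\Gamma^{w-q}:=\Gamma^{D\cdot w-N}$ that the preliminaries fix), this holds iff Player~0 can secure value $\ge i+F_j$ in $\Gamma$ from $v$, i.e. iff $\val{\Gamma}{v}\ge i+F_j$. One subtlety to address here is that Lemma~\ref{Lem:relation_MPG_EG} is stated for the rational-weight reweighting $\Gamma^{w-\nu}$ while $\Gamma_{i,j}$ is the rescaled integral arena $\Gamma^{D\cdot w - (iD+N_j)}$ where $F_j=N_j/D$; I would note explicitly that rescaling all weights by the positive integer $D$ multiplies every cycle average by $D$ and hence preserves winning regions (as stated in the "reweighting" paragraph), so $\W_0(\Gamma_{i,j})$ is unambiguous and the equivalence goes through. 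Lemma~\ref{Lem:additivity_lemma} is the clean way to package "secures $\ge i+F_j$ in $\Gamma$" $\iff$ "secures $\ge 0$ in $\Gamma^{w-i-F_j}$", i.e. $v\in\W_0$ of that \EG.

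For item~\ref{lem:transition_lemma:item2}, I would argue by complementation. By memoryless determinacy of \EG{s}, $v\in\W_1(\Gamma_{i,j})$ iff $v\notin\W_0(\Gamma_{i,j})$, which by item~\ref{lem:transition_lemma:item1} is iff $\val{\Gamma}{v} < i+F_j$. Now invoke the grid fact above: since $\val{\Gamma}{v}\in S_\Gamma$ and $j\ge 1$, the strict inequality $\val{\Gamma}{v} < i+F_j$ is equivalent to $\val{\Gamma}{v}\le i+F_{j-1}$, because $S_\Gamma$ has no point in $(i+F_{j-1}, i+F_j)$. This yields exactly the claimed equivalence. I expect the main obstacle to be precisely this last arithmetic step — verifying that $S_\Gamma$ skips the open interval between consecutive shifted Farey points. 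This requires observing that $S_\Gamma = [-W,W)+\F_{|V|}$ and that $\F_{|V|}$ is by definition the sorted list of all reduced fractions in $[0,1]$ with denominator $\le |V|$, so consecutive terms $F_{j-1}<F_j$ have nothing of $S_\Gamma$ between them within a fixed integer block, and the block boundaries (integers) are themselves Farey points $F_0=0$, $F_{s-1}=1$, so the grid is "gapless" in the required sense; once this is pinned down, everything else is a direct substitution into Lemmas~\ref{Lem:additivity_lemma} and~\ref{Lem:relation_MPG_EG}.
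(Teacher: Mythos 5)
Your proposal is correct and follows essentially the same route as the paper: item~\ref{lem:transition_lemma:item1} comes from translating membership in $\W_0(\Gamma_{i,j})$ into a value statement about $\Gamma$ (the paper does this directly via Lemma~\ref{Lem:additivity_lemma} and the definition $\val{\Gamma_{i,j}}{v}\geq 0$, which is interchangeable with your detour through Lemma~\ref{Lem:relation_MPG_EG}), and item~\ref{lem:transition_lemma:item2} is obtained exactly as you describe, by turning the strict inequality $\val{\Gamma}{v}<i+F_j$ into $\val{\Gamma}{v}\leq i+F_{j-1}$ using the fact that $S_\Gamma=[-W,W)+\F_{|V|}$ has no element strictly between consecutive shifted Farey terms. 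Your complementation step for item~\ref{lem:transition_lemma:item2} is a cosmetic variant of the paper's symmetric argument and changes nothing of substance.
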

\begin{proof}\mbox{}

\begin{enumerate}  
\item
Let us fix arbitrarily some $i\in [-W, W]$ and $j\in [0,s-1]$. 

Assume that $F_j = N_j/D_j$ for some $N_j,D_j\in \N$.

Since \[\Gamma_{i,j}=( V, E, D_j(w - i) - N_j, \langle V_0, V_1 \rangle ),\] 
then by Lemma~\ref{Lem:additivity_lemma} (applyed to $q=-i-F_j$) we have: 
\[\val{\Gamma}{v} = \frac{1}{D_j}\val{\Gamma_{i,j}}{v} + i + F_j.\] 
Recall that $v\in \W_0(\Gamma_{i,j})$ if and only if $\val{\Gamma_{i,j}}{v}\geq 0$.  

Hence, we have $v\in \W_0(\Gamma_{i,j})$ if and only if the following inequality holds: 
\begin{align*}
\val{\Gamma}{v} &= \frac{1}{D_j}\val{\Gamma_{i,j}}{v} + i + F_j \\
	        &\geq i + F_j.  
\end{align*}
This proves Item~\ref{lem:transition_lemma:item1}.

\item The argument is symmetric to that of Item~1, but with some further observations. 

Let us fix arbitrarily some $i\in [-W, W]$ and $j\in [1,s-1]$.
Assume that $F_j = N_j/D_j$ for some $N_j,D_j\in \N$.
Since $\Gamma_{i,j}=( V, E, D_j(w - i) - N_j, \langle V_0, V_1 \rangle )$,  
then by Lemma~\ref{Lem:additivity_lemma} we have $\val{\Gamma}{v} = \frac{1}{D_j}\val{\Gamma_{i,j}}{v} + i + F_j$. 
Recall that $v\in \W_1(\Gamma_{i,j})$ if and only if $\val{\Gamma_{i,j}}{v} < 0$.

Hence, we have $v\in \W_1(\Gamma_{i,j})$ if and only if the following inequality holds:
\begin{align*}
\val{\Gamma}{v} & = \frac{1}{D_j}\val{\Gamma_{i,j}}{v} + i + F_j \\ 
		& < i + F_j. 
\end{align*}
Now, recall from Section~\ref{sect:background} that $\val{\Gamma}{v}\in S_{\Gamma}$, 
where \[S_{\Gamma}=\{i+F_j \suchthat i\in [-W, W),\, j\in [0, s-1]\}.\] 
By hypothesis we have: \[j \geq 1 \text{ and } 0\leq F_{j-1} < F_j,\]
thus, at this point, $v\in W_1(\Gamma_{i,j})$ if and only if $\val{\Gamma}{v} \leq i + F_{j-1}$. 

This proves Item~\ref{lem:transition_lemma:item2}.
\end{enumerate}
\end{proof}

We are now in the position to provide a simple characterization of values in terms of reweightings.
\begin{Thm}\label{Thm:transition_opt_values}
Given an \MPG $\Gamma = ( V, E, w, \langle V_0, V_1 \rangle )$, let us consider the reweightings:
\[ \Gamma_{i,j}=\Gamma^{w-i-F_j} ,\,\text{ for any } i\in [-W, W] \text{ and } j\in [1, s-1], \]
where $s=|\F_{|V|}|$ and $F_j$ is the $j$-th term of the Farey sequence $\F_{|V|}$.

Then, the following holds: 
\[  \val{\Gamma}{v} = i + F_{j-1} \text{ if and only if } v\in \W_0(\Gamma_{i,j-1})\cap \W_1(\Gamma_{i, j}).\]
\end{Thm}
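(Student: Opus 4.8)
The plan is to derive this characterization directly from Lemma~\ref{Lem:transition_lemma}, which already packages both the Player~0 and Player~1 sides of the story. The statement to prove is a biconditional relating the equality $\val{\Gamma}{v} = i + F_{j-1}$ to the membership $v\in \W_0(\Gamma_{i,j-1})\cap \W_1(\Gamma_{i,j})$, so I would prove the two implications by combining the two items of Lemma~\ref{Lem:transition_lemma} suitably.

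\medskip\noindent\textbf{Forward direction.} Suppose $\val{\Gamma}{v} = i + F_{j-1}$. Since $j\in[1,s-1]$, we have $j-1\in[0,s-1]$, so Item~\ref{lem:transition_lemma:item1} of Lemma~\ref{Lem:transition_lemma} (applied with index $j-1$ in place of $j$) gives $v\in\W_0(\Gamma_{i,j-1})$ precisely because $\val{\Gamma}{v} = i+F_{j-1}\geq i+F_{j-1}$. For the other half, apply Item~\ref{lem:transition_lemma:item2} with index $j$ (legal since $j\in[1,s-1]$): $v\in\W_1(\Gamma_{i,j})$ iff $\val{\Gamma}{v}\leq i+F_{j-1}$, which holds with equality. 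Hence $v\in\W_0(\Gamma_{i,j-1})\cap\W_1(\Gamma_{i,j})$.

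\medskip\noindent\textbf{Reverse direction.} Suppose $v\in\W_0(\Gamma_{i,j-1})\cap\W_1(\Gamma_{i,j})$. From $v\in\W_0(\Gamma_{i,j-1})$ and Item~\ref{lem:transition_lemma:item1} (index $j-1$) we get $\val{\Gamma}{v}\geq i+F_{j-1}$. From $v\in\W_1(\Gamma_{i,j})$ and Item~\ref{lem:transition_lemma:item2} (index $j$) we get $\val{\Gamma}{v}\leq i+F_{j-1}$. Combining the two inequalities yields $\val{\Gamma}{v}=i+F_{j-1}$, as desired.

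\medskip The argument is essentially a two-line bookkeeping exercise once Lemma~\ref{Lem:transition_lemma} is in hand, so there is no serious obstacle; the only point requiring a little care is the index shift — making sure that $j-1$ stays within the admissible range $[0,s-1]$ for Item~1 while $j$ stays within $[1,s-1]$ for Item~2, both of which are guaranteed by the hypothesis $j\in[1,s-1]$. It is also worth noting why the ``transition'' picture is coherent: $\W_0(\Gamma_{i,j})$ is monotone in the threshold (larger threshold, smaller winning region), and $\W_0$ and $\W_1$ partition $V$, so $v\in\W_0(\Gamma_{i,j-1})\cap\W_1(\Gamma_{i,j})$ says exactly that $v$ flips from Player~0's region to Player~1's region as the threshold crosses from $i+F_{j-1}$ up to $i+F_j$; since $\val{\Gamma}{v}\in S_\Gamma$ takes only values of the form $i'+F_{j'}$, the value is pinned down to be the last threshold $i+F_{j-1}$ at which $v$ is still winning for Player~0. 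This remark can be folded into the proof as intuition but is not needed for the formal deduction.
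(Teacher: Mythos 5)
Your proof is correct and follows essentially the same route as the paper: both directions are obtained by applying Item~1 of Lemma~\ref{Lem:transition_lemma} at index $j-1$ and Item~2 at index $j$, and intersecting the two equivalences. The only difference is that you spell out the two implications separately and note the index-range check, whereas the paper combines the two biconditionals in one step; the content is identical.
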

\begin{proof}
Let us fix arbitrarily some $i\in [-W, W] $ and $j\in [1, s-1]$.

By Item~1 of Lemma~\ref{Lem:transition_lemma}, we have $v\in \W_0(\Gamma_{i,j-1})$ 
if and only if $\val{\Gamma}{v} \geq i + F_{j-1}$. Symmetrically, by Item~2 of Lemma~\ref{Lem:transition_lemma}, 
we have $v\in \W_1(\Gamma_{i, j})$ if and only if $\val{\Gamma}{v} \leq i + F_{j-1}$.
Whence, by composition, $v\in \W_0(\Gamma_{i,j-1})\cap \W_1(\Gamma_{i, j})$ if and only if $\val{\Gamma}{v} = i + F_{j-1}$.
\end{proof}

\subsection{On optimal positional strategies}
The present subsections aims to provide a suitable description of optimal positional strategies 
in terms of reweighted arenas. An introductory lemma is shown next.
\begin{Lem}\label{Lem:big_lemma} 
Let $\Gamma = ( V, E, w, \langle V_0, V_1 \rangle )$ be an \MPG, 
the following hold:
\begin{enumerate}
\item\label{lem:big_lemma:item1} If $v\in V_0$, let $v'\in \texttt{post}(v)$. Then $\val{\Gamma}{v'}\leq \val{\Gamma}{v}$ holds.
\item\label{lem:big_lemma:item2} If $v\in V_1$, let $v'\in \texttt{post}(v)$. Then $\val{\Gamma}{v'}\geq \val{\Gamma}{v}$ holds.
\item\label{lem:big_lemma:item3} Given any $v\in V_0$, 
consider the reweighted \EG $\Gamma_v= \Gamma^{w-\val{\Gamma}{v}}$. 

Let $f_v:V\rightarrow \C_{\Gamma_v}$ be any SEPM of the \EG $\Gamma_v$ 
such that $v\in V_{f_v}$ (\ie $f_v(v)\neq \top$). Let $v'_{f_v}\in V$ be any vertex out of $v$ 
such that $(v, v'_{f_v})\in E$ is compatible with $f_v$ in $\Gamma_v$. 

Then, $\val{\Gamma}{v'_{f_v}}=\val{\Gamma}{v}$.
\end{enumerate}
\end{Lem}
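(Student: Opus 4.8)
The plan is to prove the three items in order, using the monotonicity facts (Items 1--2) to bootstrap Item 3.

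\medskip

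\textbf{Items 1 and 2.} These two claims are dual, so I would prove Item~1 and note that Item~2 follows by the symmetric argument (swapping the roles of the two players, $\sup$ with $\inf$, etc.). For Item~1, fix $v\in V_0$ and $v'\in\texttt{post}(v)$. Take a uniform positional optimal strategy $\sigma_0\in\Sigma^M_0$ for Player~0 in $\Gamma$, which exists by Ehrenfeucht--Mycielski; it secures $\val{\Gamma}{u}$ from every $u\in V$. Define a modified positional strategy $\sigma_0'$ that agrees with $\sigma_0$ everywhere except at $v$, where it plays the arc $(v,v')$. Then from $v$, any play consistent with $\sigma_0'$ first steps to $v'$ and thereafter is consistent with $\sigma_0$; since the mean payoff is a tail quantity (a single prefix edge does not affect $\liminf\frac1n\sum$), we get $\texttt{val}^{\sigma_0'}(v)\geq \texttt{val}^{\sigma_0}(v')=\val{\Gamma}{v'}$. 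But $\val{\Gamma}{v}=\sup_{\sigma_0}\texttt{val}^{\sigma_0}(v)\geq \texttt{val}^{\sigma_0'}(v)\geq \val{\Gamma}{v'}$, which is the claim. (Alternatively, one can quote Lemma~\ref{Lem:reachable_cycle}: every cycle reachable from $v'$ in $G^\Gamma_{\sigma_0}$ is also reachable from $v$ through the arc $(v,v')$, so the worst reachable cycle from $v$ is at least as bad, giving $\texttt{val}^{\sigma_0}(v)\geq\texttt{val}^{\sigma_0}(v')$; but the direct prefix argument is cleaner.)

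\medskip

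\textbf{Item 3.} This is the substantive part. Fix $v\in V_0$, write $\nu=\val{\Gamma}{v}$, and let $\Gamma_v=\Gamma^{w-\nu}$. Let $f_v$ be a SEPM of $\Gamma_v$ with $f_v(v)\neq\top$, and let $v'=v'_{f_v}$ be a successor of $v$ with $(v,v')$ compatible with $f_v$, i.e.\ $f_v(v)\succeq f_v(v')\ominus w_{\Gamma_v}(v,v')$. The key observation: compatibility forces $f_v(v')\neq\top$ (otherwise the right-hand side would be $\top$ and the inequality would fail, since $f_v(v)\in\N$), so $v'\in V_{f_v}$. By Lemma~\ref{Lem:least_energy_prog_measure}(1), $V_{f_v}\subseteq\W_0(\Gamma_v)$, hence $v'\in\W_0(\Gamma_v)$, i.e.\ Player~0 wins the energy game $\Gamma^{w-\nu}$ from $v'$, which by Lemma~\ref{Lem:relation_MPG_EG} means $\val{\Gamma}{v'}\geq\nu=\val{\Gamma}{v}$. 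For the reverse inequality I invoke Item~1: since $v\in V_0$ and $v'\in\texttt{post}(v)$, we have $\val{\Gamma}{v'}\leq\val{\Gamma}{v}$. Combining the two inequalities yields $\val{\Gamma}{v'}=\val{\Gamma}{v}$.

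\medskip

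\textbf{Main obstacle.} The only delicate point is the implication ``compatibility of $(v,v')$ with $f_v$, together with $f_v(v)\neq\top$, implies $f_v(v')\neq\top$''. One must unwind the definition of $\ominus$: if $f_v(v')=\top$ then $f_v(v')\ominus w(v,v')=\top$, and $f_v(v)\succeq\top$ would force $f_v(v)=\top$, contradicting $v\in V_{f_v}$. This is routine but is the linchpin that lets us transfer ``finite energy at $v$'' to ``finite energy at $v'$,'' and thence to $v'\in\W_0(\Gamma_v)$. Everything else is either a direct appeal to the quoted lemmas (Lemmas~\ref{Lem:relation_MPG_EG} and \ref{Lem:least_energy_prog_measure}) or the elementary prefix-insensitivity of the mean payoff used in Items~1--2.
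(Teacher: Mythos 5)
Your Item~3 and the symmetry reduction for Item~2 follow the paper's route, but your proof of Item~1 has a genuine gap. The claim that from $v$ any play consistent with $\sigma_0'$ ``first steps to $v'$ and thereafter is consistent with $\sigma_0$'' is false: $\sigma_0'$ deviates from $\sigma_0$ precisely at $v$, so whenever the play returns to $v$ it again takes the arc $(v,v')$ instead of $\sigma_0(v)$, and the suffix need not be consistent with $\sigma_0$. This is not cosmetic: the redirection can create new cycles through $v$ of low average weight, and your intermediate inequality $\texttt{val}^{\sigma_0'}(v)\geq\val{\Gamma}{v'}$ can actually fail. Take $V_0=\{v\}$, $V_1=\{v'\}$ with arcs $(v,v)$ of weight $0$, $(v,v')$ of weight $-10$, $(v',v)$ of weight $0$, $(v',v')$ of weight $+5$. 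Then $\val{\Gamma}{v}=\val{\Gamma}{v'}=0$ and the uniform optimal strategy has $\sigma_0(v)=v$; your $\sigma_0'$ plays $(v,v')$, Player~1 answers with $(v',v)$, and the resulting cycle $v\,v'\,v$ has mean $-5$, so $\texttt{val}^{\sigma_0'}(v)=-5<0=\val{\Gamma}{v'}$. The parenthetical alternative does not repair this: the containment you invoke (cycles reachable from $v'$ are also reachable from $v$ once the arc $(v,v')$ is used) bounds the worst cycle reachable from $v$ in the wrong direction; what is needed is that every cycle reachable from $v$ in $G^{\Gamma}_{\sigma_0'}$ is already reachable from $v'$ in $G^{\Gamma}_{\sigma_0}$, and that is exactly what fails in the example above.

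The paper closes this hole with a case distinction you omit: it redirects $v$ to $v'$ only when $v$ is \emph{not} reachable from $v'$ in the projection $G^{\Gamma}_{\sigma^{v'}_0}$ (then no cycle through $v$ can arise, and every cycle reachable from $v$ is reachable from $v'$), and otherwise it keeps $\sigma^{v'}_0(v)$ unchanged (then reachability from $v'$ trivially covers reachability from $v$); Lemma~\ref{Lem:reachable_cycle} is then applied in both directions. With Item~1 repaired in this way, your Item~3 is correct and coincides with the paper's argument: compatibility together with $f_v(v)\neq\top$ forces $f_v(v'_{f_v})\neq\top$, Lemma~\ref{Lem:least_energy_prog_measure} and Lemma~\ref{Lem:relation_MPG_EG} give $\val{\Gamma}{v'_{f_v}}\geq\val{\Gamma}{v}$, and Item~1 gives the converse inequality.
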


\begin{proof}
\begin{enumerate}
\item 
It is sufficient to construct a strategy $\sigma^v_0\in\Sigma^M_{0}$ securing to Player~$0$ a
payoff at least $\val{\Gamma}{v'}$ from $v$ in the \MPG $\Gamma$. Let $\sigma^{v'}_0\in\Sigma^M_{0}$ 
be a strategy securing payoff at least $\val{\Gamma}{v'}$ from $v'$ in $\Gamma$.
Then, let $\sigma^v_0$ be defined as follows: 
\[\sigma^v_0(u) = 
\left\{ \begin{array}{ll} \sigma^{v'}_0(u) & ,\text{ if } u\in V_0\setminus\{v\}; \\
 \sigma^{v'}_0(v) & ,\text{ if } u=v \text{ and } v 
\text{ \emph{is} reachable from } v' \text{ in } G^{\Gamma}_{\sigma^{v'}_0}; \\
	 v' & ,\text{ if } u=v \text{ and } v 
\text{ \emph{is not} reachable from } v' \text{ in } G^{\Gamma}_{\sigma^{v'}_0}. 
\end{array}\right.
\]
We argue that $\sigma^v_0$ secures payoff at least $\val{\Gamma}{v'}$ from $v$ in $\Gamma$.
First notice that, by Lemma~\ref{Lem:reachable_cycle} (applied to $v'$), 
all cycles $C$ that are reachable from $v'$ in $\Gamma$ satisfy: \[\frac{w(C)}{|C|}\geq\val{\Gamma}{v'}.\] 
The fact is that any cycle reachable from $v$ in $G^{\Gamma}_{\sigma^v_0}$ is also reachable 
from $v'$ in $G^{\Gamma}_{\sigma^{v'}_0}$ (by definition of $\sigma^v_0$), 
therefore, the same inequality holds for all cycles reachable from $v$. 
At this point, the thesis follows again by 
Lemma~\ref{Lem:reachable_cycle} (applied to $v$, in the inverse direction).
This proves Item~\ref{lem:big_lemma:item1}. 

\item The proof of Item~\ref{lem:big_lemma:item2} is symmetric to that of Item~\ref{lem:big_lemma:item1}.

\item 
Firstly, notice that $\val{\Gamma}{v'_{f_v}}\leq \val{\Gamma}{v}$ holds by Item~\ref{lem:big_lemma:item1}.
To conclude the proof it is sufficient to show $\val{\Gamma}{v'_{f_v}}\geq \val{\Gamma}{v}$.
Recall that $(v, v'_{f_v})\in E$ is compatible with $f_v$ in $\Gamma_v$ by hypothesis, 
that is: \[f_v(v)\succeq f_v(v'_{f_v}) \ominus \big(w(v, v'_{f_v})-\val{\Gamma}{v}\big).\] 
This, together with the fact that $v\in V_{f_v}$ (\ie $f_v(v)\neq\top$) also holds by hypothesis, 
implies that $v'_{f_v}\in V_f$ (\ie $f_v(v'_{f_v}) \neq \top$).
Thus, by Item~1 of Lemma~\ref{Lem:least_energy_prog_measure}, 
$v'_{f_v}$ is a winning starting position of Player~0 in the \EG $\Gamma_v$.
Whence, by Lemma~\ref{Lem:relation_MPG_EG}, it holds that $\val{\Gamma}{v'_{f_v}}\geq \val{\Gamma}{v}$. 
This proves Item~\ref{lem:big_lemma:item3}.
\end{enumerate}
\end{proof}

We are now in position to provide a sufficient condition, for a positional strategy to be optimal, 
which is expressed in terms of reweighted \EG{s} and their SEPM{s}.
\begin{Thm}\label{Thm:pos_opt_strategy}
Let $\Gamma = ( V , E, w, \langle V_0, V_1 \rangle )$ be an \MPG.
For each $v\in V$, consider the reweighted \EG $\Gamma_v = \Gamma^{w-\val{\Gamma}{v}}$. 
Let $f_{v}:V\rightarrow\C_{\Gamma_v}$ be any SEPM of $\Gamma_v$ such that $v\in V_{f_v}$ (\ie $f_v(v)\neq\top$). 
Moreover, assume: $f_{v_1}=f_{v_2}$ whenever $\val{\Gamma}{v_1}=\val{\Gamma}{v_2}$.

When $v\in V_0$, let $v'_{f_v}\in V$ be any vertex such that 
$(v, v'_{f_v})\in E$ is compatible with $f_v$ in the \EG $\Gamma_v$,  
and consider the positional strategy $\sigma^*_0\in\Sigma^M_{0}$ defined as follows: 
\[\sigma^*_0(v) = v'_{f_v} \, \text{ for every } v\in V_0.\]
Then, $\sigma^*_0$ is an optimal positional strategy for Player~$0$ in the \MPG $\Gamma$.
\end{Thm}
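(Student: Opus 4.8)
The plan is to verify that the positional strategy $\sigma^*_0$ secures at least $\val{\Gamma}{v}$ from every vertex $v$, which by the optimality characterization in Section~\ref{sect:background} suffices. The key tool will be Lemma~\ref{Lem:reachable_cycle}: I must show that every cycle $C$ reachable from a given vertex $v$ in the projection graph $G^{\Gamma}_{\sigma^*_0}$ has average weight $w(C)/|C| \geq \val{\Gamma}{v}$. The difficulty is that $\sigma^*_0$ is a single global strategy stitched together from SEPMs of \emph{different} reweighted games $\Gamma_u = \Gamma^{w - \val{\Gamma}{u}}$, so I need an argument that behaves well across value-classes.

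The first step is to record the crucial monotonicity fact along plays consistent with $\sigma^*_0$: if $v \in V_0$ and $\sigma^*_0(v) = v'_{f_v}$, then by Item~\ref{lem:big_lemma:item3} of Lemma~\ref{Lem:big_lemma} we have $\val{\Gamma}{v'_{f_v}} = \val{\Gamma}{v}$; and if $v \in V_1$, then by Item~\ref{lem:big_lemma:item2} of the same lemma every successor $v'$ satisfies $\val{\Gamma}{v'} \geq \val{\Gamma}{v}$. Hence along any path in $G^{\Gamma}_{\sigma^*_0}$ the value is non-decreasing. Consequently, any cycle $C$ reachable from $v$ lies entirely within a single value-class: all its vertices share a common value $\mu \geq \val{\Gamma}{v}$ (the inequality because the value only goes up from $v$ to $C$). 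It therefore suffices to prove $w(C)/|C| \geq \mu$ for every cycle $C$ all of whose vertices have value $\mu$.

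Fix such a value $\mu$ and let $U_\mu = \{u : \val{\Gamma}{u} = \mu\}$. By the assumption $f_{v_1} = f_{v_2}$ whenever their values agree, there is a single SEPM $f := f_\mu$ of the \EG $\Gamma^{w-\mu}$ that governs all vertices of $U_\mu$, and by hypothesis $f(u) \neq \top$ for every $u \in U_\mu$. I claim that restricted to $U_\mu$ the strategy $\sigma^*_0$ is compatible with $f$ and that $\sigma^*_0$ keeps the play inside $U_\mu$: the former is by construction of $\sigma^*_0(v) = v'_{f_v}$, and the latter was just established via Lemma~\ref{Lem:big_lemma}. Now I run the standard SEPM telescoping argument inside $\Gamma^{w-\mu}$: along the cycle $C = u_0 u_1 \cdots u_{n-1} u_0$, compatibility at $V_0$-vertices (chosen arcs) and the SEPM condition at $V_1$-vertices (all arcs) give $f(u_i) \succeq f(u_{i+1}) \ominus (w(u_i,u_{i+1}) - \mu)$ for each $i$; since all the $f(u_i)$ are finite this unfolds to $f(u_i) \geq f(u_{i+1}) - (w(u_i,u_{i+1}) - \mu)$ in the integers, and summing around the cycle the $f$-terms cancel, yielding $0 \geq \sum_{i} (\mu - w(u_i,u_{i+1})) $, i.e. $w(C)/|C| \geq \mu$ — wait, I must be careful about the scaling convention, since $\Gamma^{w-\mu}$ is actually defined as $\Gamma^{D w - N}$ with $\mu = N/D$; after clearing denominators the same telescoping gives $w(C)/|C| \geq \mu$ for the original weights, as in the proof of Lemma~\ref{Lem:additivity_lemma}.

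Putting it together: for an arbitrary $v$ and an arbitrary cycle $C$ reachable from $v$ in $G^{\Gamma}_{\sigma^*_0}$, the value is non-decreasing from $v$ along the reaching path and constant on $C$ at some $\mu \geq \val{\Gamma}{v}$, and the SEPM argument gives $w(C)/|C| \geq \mu \geq \val{\Gamma}{v}$. By Lemma~\ref{Lem:reachable_cycle} applied at $v$, $\val{\sigma^*_0}{v} \geq \val{\Gamma}{v}$ for all $v \in V$, so $\sigma^*_0$ is optimal. The main obstacle is the bookkeeping of the second paragraph — ensuring a reachable cycle genuinely sits in one value-class so that a single SEPM applies — and the minor but necessary care with the integer-scaling convention for the rational reweighting $\Gamma^{w-\mu}$ when invoking the telescoping identity.
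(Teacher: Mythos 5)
Your proposal is correct and follows essentially the same route as the paper's proof: value monotonicity along $\sigma^*_0$-consistent paths via Lemma~\ref{Lem:big_lemma}, constancy of values on reachable cycles, a single SEPM per value class thanks to the coherence assumption, and a telescoping of the compatibility inequalities around the cycle before invoking Lemma~\ref{Lem:reachable_cycle}. The only difference is cosmetic: you sum around the cycle and cancel the $f$-terms directly (bounding by $\mu$), while the paper runs an induction along the cycle anchored at $u_1$ and bounds by $\val{\Gamma}{v}$.
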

\begin{proof} 
Let us consider the projection graph $G_{\sigma^*_0}^{\Gamma}=( V, E_{\sigma^*_0}, w)$. 
Let $v\in V$ be any vertex. In order to prove that $\sigma^*_0$ is optimal,  
it is sufficient (by Lemma~\ref{Lem:reachable_cycle}) to show that 
every cycle $C$ that is reachable from $v$ in $G_{\sigma^*_0}^{\Gamma}$ satisfies $\frac{w(C)}{|C|}\geq \val{\Gamma}{v}$.
\begin{itemize}
\item \emph{Preliminaries.} Let $v \in V$ and let $C$ be any cycle of length $|C|\geq 1$  
that is reachable from $v$ in $G^{\Gamma}_{\sigma^*_0}$.
Then, there exists a path $\rho$ of length $|\rho|\geq 1$ in $G^{\Gamma}_{\sigma^*_0}$ and such that: 
if $|\rho|=1$, then $\rho=\rho_0\rho_1=vv$; otherwise, if $|\rho|>1$, then: 
\[\rho = \rho_0 \ldots \rho_{|\rho|} = v v_1 v_2 \ldots v_k u_1 u_2 \ldots u_{|C|} u_1,\]
where $vv_1\ldots v_k$ is a simple path, for some $k\geq 0$ and $u_1\ldots u_{|C|}u_1=C$.

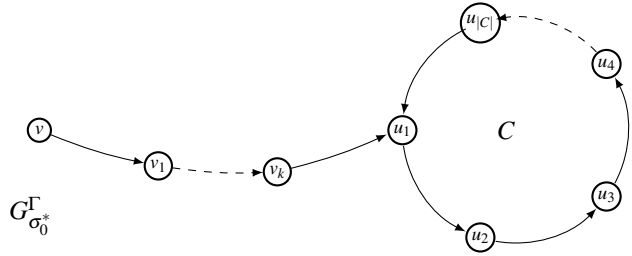
\begin{figure}[!h]
\begin{center}
\begin{tikzpicture}[scale=0.5]

\node[inner sep=0pt,outer sep=0pt,
	label={[xshift=5pt, yshift=-5ex]below right:$G^{\Gamma}_{\sigma^*_0}$}] at (-14,0) (a) {};
\node[inner sep=0pt,outer sep=0pt,
	label={[xshift=5pt, yshift=-5ex]below right:$C$}] at (-1,2) (a') {};
\node[inner sep=0pt,outer sep=0pt] at (-10,2) (b') {};
\node[inner sep=0pt,outer sep=0pt] at (1,4) (b) {};
\node[inner sep=0pt,outer sep=0pt] at (4,4) (c) {};
\node[inner sep=0pt,outer sep=0pt] at (5,2) (d) {};
\node[inner sep=0pt,outer sep=0pt] at (4.5,.5) (e) {};
\node[inner sep=0pt,outer sep=0pt] at (4,-3.3) (f) {};
\node[inner sep=0pt,outer sep=0pt] at (3,-3.3) (g) {};
\node[inner sep=0pt,outer sep=0pt] at (1,-3.3) (h) {};

\def \n {5}
\def \radius {3cm}
\def \margin {8} 

\node[node, thick, xshift=-80ex, font=\Large] (A) {$v$};
\node[node, thick, xshift=-60ex, yshift=-6ex,  font=\Large] (B) {$v_1$};
\node[node, thick, xshift=-40ex, yshift=-7ex, font=\Large] (C) {$v_k$};
\node[xshift=-13ex, xshift=3ex] (D) {};

\draw[->, >=latex] (A) to [bend left=-5] node[above] {} (B); 
\draw[->, >=latex, dashed] (B) to [bend left=-5] node[above] {} (C); 
\draw[->, >=latex] (C) to [bend left=-5] node[above] {} (D); 

\foreach \s in {1,...,3}
{
  \node[node, thick, font=\Large] at ({360/\n * (\s - 1)+180}:\radius) {$u_{\s}$};
  \draw[->, >=latex] ({360/\n * (\s - 1)+180+\margin}:\radius) 
    arc ({360/\n * (\s - 1)+180+\margin}:{360/\n * (\s)+180-\margin}:\radius);
}

\node[node, thick, font=\Large] at ({360/\n * (4 - 1)+180}:\radius) {$u_4$};
\draw[->, >=latex, dashed] ({360/\n * (4 - 1)+180+\margin}:\radius) 
 arc ({360/\n * (4 - 1)+180+\margin}:{360/\n * (4)+180-\margin}:\radius);

\node[node, thick, font=\Large] at ({360/\n * (\n - 1)+180}:\radius) {$u_{|C|}$};
\draw[->, >=latex] ({360/\n * (\n - 1)+180+\margin}:\radius) 
 arc ({360/\n * (\n - 1)+180+\margin}:{360/\n * (\n)+180-\margin}:\radius);

\end{tikzpicture}
\end{center}
\caption{A cycle $C$ that is reachable from $v$ through $v_1\cdots v_k$ in $G^{\Gamma}_{\sigma^*_0}$.}
\end{figure}

\item \emph{Fact~1.} It holds $\val{\Gamma}{\rho_{i}}\leq \val{\Gamma}{\rho_{i+1}}$ for every $i\in [0, |\rho|)$.
\begin{proof}[of Fact~1]
If $\rho_i\in V_0$ then $\val{\Gamma}{\rho_{i}}=\val{\Gamma}{\rho_{i+1}}$ by Item~3 of Lemma~\ref{Lem:big_lemma};
otherwise, if $\rho_i\in V_1$, then $\val{\Gamma}{\rho_{i}}\leq \val{\Gamma}{\rho_{i+1}}$ 
by Item~2 of Lemma~\ref{Lem:big_lemma}. This proves Fact~1. In particular, 
notice that $\val{\Gamma}{v}\leq \val{\Gamma}{u_1}$ when $|\rho| > 1$. 
\end{proof}

\item \emph{Fact~2.} Assume $C=u_1\ldots u_{|C|}u_1$, then $\val{\Gamma}{u_i}=\val{\Gamma}{u_1}$ for every $i\in [0, |C|] $.
\begin{proof}[of Fact~2]
By Fact~1, $\val{\Gamma}{u_{i-1}}\leq\val{\Gamma}{u_i}$ for every $i\in [2, |C|]$, 
as well as $\val{\Gamma}{u_{|C|}}\leq \val{\Gamma}{u_1}$. Then, the following chain of inequalities holds: 
\[ \val{\Gamma}{u_1}\leq\val{\Gamma}{u_2} \leq \ldots \leq \val{\Gamma}{u_{|C|}} \leq \val{\Gamma}{u_1}. \]
Since the first and the last value of the chain are actually the same, \ie $\val{\Gamma}{u_1}$,  
then, all these inequalities are indeed equalities. This proves Fact~2.
\end{proof}

\item \emph{Fact~3.} The following holds for every $i\in [0, |\rho|)$: 
\[ f_{\rho_i}(\rho_i), f_{\rho_i}(\rho_{i+1})\neq\top \text{ and }  
f_{\rho_i}(\rho_i) \geq f_{\rho_i}(\rho_{i+1}) - w(\rho_i, \rho_{i+1}) + \val{\Gamma}{\rho_{i}}.\]
\begin{proof}[of Fact~3]
Firstly, we argue that any arc $(\rho_i, \rho_{i+1})\in E$ is compatible with $f_{\rho_i}$ in $\Gamma_{\rho_i}$. 
Indeed, if $\rho_i\in V_0$, then $(\rho_i, \rho_{i+1})$ is compatible with $f_{\rho_i}$ in $\Gamma_{\rho_i}$ 
because $\rho_{i+1}=\sigma^*_0(\rho_i)$ by hypothesis;
otherwise, if $\rho_i\in V_1$, 
then $( \rho_i, x)$ is compatible with $f_{\rho_i}$ in $\Gamma_{\rho_i}$ 
for \emph{every} $x\in \texttt{post}(\rho_{i})$, in particular for $x=\rho_{i+1}$, by definition of SEPM. 

At this point, since $(\rho_i, \rho_{i+1})$ is compatible with $f_{\rho_i}$ in $\Gamma_{\rho_i}$, then: 
\[f_{\rho_i}(\rho_i) \succeq f_{\rho_i}(\rho_{i+1}) \ominus \big(w(\rho_i, \rho_{i+1}) - \val{\Gamma}{\rho_{i}}\big).\] 
Now, recall that $\rho_i\in V_{f_{\rho_i}}$ (\ie $f_{\rho_i}(\rho_i)\neq\top$) holds for every $\rho_i$ by hypothesis.
Since $f_{\rho_i}(\rho_i)\neq\top$ and the above inequality holds, 
then we have $f_{\rho_i}(\rho_{i+1})\neq\top$. Thus, we can safely write:
\[f_{\rho_i}(\rho_i) \geq f_{\rho_i}(\rho_{i+1}) - w(\rho_i, \rho_{i+1}) + \val{\Gamma}{\rho_{i}}.\]
This proves Fact~3.
\end{proof}

\item \emph{Fact~4.} 
Assume that the cycle $C=u_1\ldots u_{|C|}u_1$ is such that: 
\[\val{\Gamma}{u_i}=\val{\Gamma}{u_1}\geq \val{\Gamma}{v}, \text{ for every } i\in [1,|C|].\]

Then, provided that $u_{|C|+1}= u_1$, the following holds for every $i\in [1, |C|]$:
\[ f_{u_1}(u_1), f_{u_{i+1}}(u_{i+1})\neq\top \text{ and } 
f_{u_1}(u_1) \geq f_{u_{i+1}}(u_{i+1}) - \sum_{j=1}^{i} w(u_j, u_{j+1}) + i\cdot\val{\Gamma}{v}.\] 

\begin{proof}[of Fact~4]
Firstly, notice that $f_{u_1}(u_1), f_{u_{i+1}}(u_{i+1})\neq\top$ holds by hypothesis.  

The proof proceeds by induction on $i\in [1, |C|]$. 
\begin{itemize}
\item \emph{Base Case.} 
Assume that $|C|=1$, so that $C=u_1u_1$. 
Then $f_{u_1}(u_1)\geq f_{u_1}(u_1) - w(u_1, u_1) + \val{\Gamma}{u_1}$ follows by Fact~3.
Since $\val{\Gamma}{u_1}\geq \val{\Gamma}{v}$ by hypothesis, then the thesis follows.

\item \emph{Inductive Step.} Assume by induction hypothesis that the following holds: 
\[f_{u_1}(u_1) \geq f_{u_{i}}(u_{i}) - \sum_{j=1}^{i-1} w(u_j, u_{j+1}) + (i-1)\cdot \val{\Gamma}{v}.\] 
By Fact~3, we have: \[f_{u_{i}}(u_{i})\geq f_{u_{i}}(u_{i+1}) - w(u_{i}, u_{i+1}) + \val{\Gamma}{u_{i}}.\]
Since $\val{\Gamma}{u_{i+1}}=\val{\Gamma}{u_{i}}$ holds by hypothesis, then we have $f_{u_{i+1}} = f_{u_i}$. 
Recall that $\val{\Gamma}{u_i}\geq \val{\Gamma}{v}$ also holds by hypothesis.

Thus, we obtain the following: 
\[f_{u_1}(u_1) \geq f_{u_{i+1}}(u_{i+1}) - \sum_{j=1}^{i} w(u_j, u_{j+1}) + i\cdot\val{\Gamma}{v}.\] 
This proves Fact~4.
\end{itemize}
\end{proof}

\item We are now in position to show that every cycle $C$ that is reachable from 
$v$ in $G^{\Gamma}_{\sigma^*_0}$ satisfies $w(C)/|C|\geq \val{\Gamma}{v}$. By Fact~1 and Fact~2, we have 
$\val{\Gamma}{v}\leq \val{\Gamma}{u_1}=\val{\Gamma}{u_i}$ for every $i\in [1, |C|]$. At this point, we apply Fact~4.
Consider the specialization of Fact~4 when $i=|C|$ and also recall that $u_{|C|+1}=u_1$. Then, we have the following:
\[f_{u_1}(u_1) \geq f_{u_1}(u_1) - \sum_{j=1}^{|C|} w(u_j, u_{j+1}) + |C|\cdot \val{\Gamma}{v}.\]
As a consequence, the following lower bound holds on the average weight of $C$: 
\[\frac{w(C)}{|C|} = \frac{1}{|C|}\sum_{j=1}^{|C|} w(u_j, u_{j+1}) \geq \val{\Gamma}{v}, \]
which concludes the proof.
\end{itemize}
\end{proof}

\begin{Rem}\label{Rem:pos_opt_strategy} 
Notice that Theorem~\ref{Thm:pos_opt_strategy} holds, in particular, 
when $f_v$ is the least SEPM $f^*_v$ of the reweighted \EG $\Gamma_v$.
This follows because $v\in V_{f^*_v}$ always holds for the least SEPM $f^*_v$ of the \EG $\Gamma_v$, as shown next:
by Lemma~\ref{Lem:relation_MPG_EG} and by definition of $\Gamma_v$, 
then $v$ is a winning starting position for Player~0 in the \EG $\Gamma_v$ (for some initial credit);
now, since $f^*_v$ is the least SEPM of the \EG $\Gamma_v$, then $v\in V_{f^*_v}$ 
follows by Item~2 of Lemma~\ref{Lem:least_energy_prog_measure}.
\end{Rem}

\section{An $O(|V|^2 |E|\, W)$ time Algorithm for solving 
the Value Problem and Optimal Strategy Synthesis in \MPG{s}}\label{sect:algorithm}
This section offers a deterministic algorithm for solving the Value Problem and Optimal Strategy Synthesis 
of \MPG{s} within $O(|V|^2 |E|\, W)$ time and $O(|V|)$ space, 
on any input \MPG $\Gamma= ( V, E, w, \langle V_0, V_1\rangle )$.

Let us now recall some notation in order describe the algorithm in a suitable way.

Given an \MPG $\Gamma = ( V, E, w, \langle V_0, V_1 \rangle )$, 
consider again the following reweightings:
\[ \Gamma_{i,j}=\Gamma^{w-i-F_j},\,\text{ for any } i\in [-W, W] \text{ and } j\in [0, s-1], \]
where $s=|\F_{|V|}|$ and $F_j$ is the $j$-th term of $\F_{|V|}$.

Assuming $F_j=N_j/D_j$ for some $N_j,D_j\in \N$, we focus on the following  
weights: 
\begin{align*}
w_{i,j}= & w-i-F_j = w-i-\frac{N_j}{D_j}; \\ 
w'_{i,j}= & D_j\, w_{i,j} = D_j\, (w-i) - N_j. 
\end{align*}
Recall that $\Gamma_{i,j}$ is defined as $\Gamma_{i,j}:=\Gamma^{w'_{i,j}}$, which is an arena having integer weights.
Also notice that, since $F_0 < \ldots < F_{s-1}$ is monotone increasing, 
then the corresponding weight functions $w_{i,j}$ can be ordered in a natural way, \ie 
$w_{-W, 1} > w_{-W, 2} > \ldots > w_{W-1, s-1} >\ldots > w_{W, s-1}$. 
In the rest of this section, 
we denote by $f^*_{w'_{i,j}}:V\rightarrow \C_{\Gamma_{i,j}}$ the least SEPM of the reweighted \EG $\Gamma_{i,j}$. 
Moreover, the function $f^*_{i,j}:V\rightarrow\Q$, 
defined as $f^*_{i,j}(v):=\frac{1}{D_j}f^*_{w'_{i,j}}(v)$ for every $v\in V$, 
is called the \emph{rational scaling} of $f^*_{w'_{i,j}}$.

\subsection{Description of the Algorithm} 
In this section we shall describe a procedure whose pseudo-code is given below in Algorithm~\ref{Algo:solve_mpg}. 
It takes as input an arena $\Gamma=(V, E, w, \langle V_0, V_1 \rangle)$,  
and it aims to return a tuple $( \W_0, \W_1, \nu, \sigma^*_0 )$ such that: 
$\W_0$ and $\W_1$ are the winning regions of Player~$0$ and Player~$1$ in the \MPG $\Gamma$ (respectively), 
$\nu:V\rightarrow S_\Gamma$ is a map sending each starting position $v\in V$ to its optimal value, \ie $\nu(v)=\val{\Gamma}{v}$, 
and finally, $\sigma^*_0:V_0\rightarrow V$ is an optimal positional strategy for Player $0$ in the \MPG $\Gamma$. 

The intuition underlying Algorithm~\ref{Algo:solve_mpg} is that of considering the following sequence of weights:
\[w_{-W, 1} > w_{-W, 2} > \ldots > w_{-W, s-1} > w_{-W+1, 1} > w_{-W+1, 2} > \ldots  > w_{W-1, s-1} > \ldots > w_{W, s-1} \]
where the key idea is that to rely on Theorem~\ref{Thm:transition_opt_values} at each one of these steps,
testing whether a \emph{transition of winning regions} has occurred. 
\begin{figure}[!h]
\begin{center}
\begin{tikzpicture}[xscale=1.0]
\draw[-][draw=blue, very thick] (0,0) -- (4,0);
\draw[-][draw=green, very thick] (4,0) -- (5,0);
\draw[-][draw=blue, very thick]  (5,0) -- (10,0);

\node[node, color=red, fill=red, scale=.3] (A) at (0,.1) {};
\node[node, left=of A, xshift=3ex, color=black, fill=black, scale=.2, label={above:start} ] (S) {};
\node[node, color=red, fill=red, scale=.3] (B) at (1,.1) {};
\node[node, color=red, fill=red, scale=.3] (C) at (2,.1) {};
\node[node, color=red, fill=red, scale=.3] (D) at (3,.1) {};
\node[node, color=green, fill=green, scale=.3] (E) at (4,.1) {};
\node[node, color=green, fill=green, scale=.3] (F) at (5,.1) {};
\node[-] (G) at (6,.1) {};

\draw[thick, arrows=->] (S) to [ bend left=0] node[below, xshift=-1.5ex, yshift=1ex] {} (A);
\draw[thick, arrows=->] (A) to [ bend left=50] node[below, xshift=-1.5ex, yshift=1ex] {} (B); 
\draw[thick, arrows=->] (B) to [ bend left=50] node[below, xshift=-1.5ex, yshift=1ex] {} (C); 
\draw[thick, arrows=->] (C) to [ bend left=50] node[below, xshift=-1.5ex, yshift=1ex] {} (D); 
\draw[thick, arrows=->] (D) to [ bend left=50] node[below, xshift=-1.5ex, yshift=1ex] {} (E); 
\draw[thick, dashed, arrows=->] (E) to [bend left=50] node[below, xshift=-1.5ex, yshift=1ex, 
label={above right:$v\overset{?}{\in} \W_0(\Gamma_{\texttt{prev}(i,j)}) \cap \W_1(\Gamma_{i,j})$}] {} (F); 

\draw [thick] (0,-.1) node[below]{$w_{-W, 1}$} -- (0,0.1);
\draw [thick] (1,-.1) node[below]{$w_{-W, 2}$} -- (1,0.1);
\draw [thick] (2,-.1) node[below]{$w_{-W, 3}$} -- (2,0.1);
\draw [thick] (3,-.1) node[below]{$\cdots$} -- (3,0.1);
\draw [thick] (4,-.1) node[below]{$w_{\texttt{prev}(i,j)}$} -- (4,0.1);
\draw [thick] (5,-.1) node[below]{$w_{i, j}$} -- (5,0.1);
\draw [thick] (7,-.1) node[below, xshift=2ex]{$\cdots$} -- (7,-.1);
\draw [thick] (7,-.1) node[above, xshift=-2ex, yshift=1ex]{$\cdots$} -- (7,-.1);
\draw [thick] (9,-.1) node[below]{$w_{W-1, s-1}$} -- (9,0.1);
\draw [thick] (10,-.1) node[below]{$w_{W, 1}$} -- (10,0.1);
\end{tikzpicture}
\end{center}
\caption{An illustration of Algorithm~\ref{Algo:solve_mpg}.}\label{fig:algo_solve}
\end{figure}
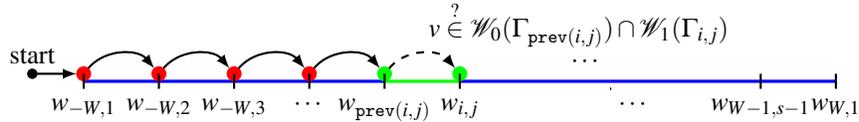
Stated otherwise, the idea is to check, for each vertex $v\in V$, whether $v$ is winning for Player~$1$  
with respect to the current weight $w_{i,j}$, meanwhile recalling whether $v$ was winning for Player~$0$ with respect 
to the immediately preceding element $w_{\texttt{prev}(i,j)}$ in the weight sequence above. 

If such a transition occurs, say for some $\hat{v} \in \W_0(\Gamma_{\texttt{prev}(i,j)}) \cap \W_1(\Gamma_{i,j})$, 
then one can easily compute $\val{\Gamma}{\hat{v}}$ by relying on Theorem~\ref{Thm:transition_opt_values}; 
Also, at that point, it is easy to compute an optimal positional strategy, 
provided that $\hat{v}\in V_0$, by relying on Theorem~\ref{Thm:pos_opt_strategy} and 
Remark~\ref{Rem:pos_opt_strategy} in that case.

Each one of these phases, in which one looks at transitions of winning regions, is named \emph{Scan Phase}.
A graphical intuition of Algorithm~\ref{Algo:solve_mpg} is given in \figref{fig:algo_solve}.


An in-depth description of the algorithm and of its pseudo-code now follows.

\begin{algorithm}[H]\label{Algo:solve_mpg}
\caption{Solving the Value Problem and Strategy Synthesis in \MPG{s}.}
\DontPrintSemicolon
\nonl \SetKwProg{Fn}{Procedure}{}{}
\Fn{$\texttt{solve\_MPG}(\Gamma)$}{
    \SetKwInOut{Input}{input}
    \SetKwInOut{Output}{output}

\Input{an \MPG $\Gamma= ( V, E, w, \langle V_0, V_1 \rangle )$.}
\Output{a tuple $( \W_0, \W_1, \nu, \sigma^*_0)$ such that: 
$\W_0$ and $\W_1$ are the winning regions of Player~$0$ and Player~$1$ (respectively) in the \MPG $\Gamma$; 
$\nu:V\rightarrow S_\Gamma$ is a map sending each starting position $v\in V$ 
to its corresponding optimal value, \ie $\nu(v)=\val{\Gamma}{v}$;
and $\sigma^*_0:V_0\rightarrow V$ is an optimal positional strategy for Player $0$ in the \MPG $\Gamma$.
}
\tcp{Init Phase}
$\W_0\leftarrow\emptyset$; $\W_1\leftarrow\emptyset$; \label{algo:solve:l1}\; 
$f(v)\leftarrow 0$, $\forall\; v\in V$; \label{algo:solve:l2} \; 
$W\leftarrow \max_{e\in E} |w_e|$; $w'\leftarrow w+W$; $D\leftarrow 1$; \label{algo:solve:l3}\;
$s\leftarrow$ compute the size $|\F_{|V|}|$ of $\F_{|V|}$; \label{algo:solve:l4} 
\tcp{with the algorithm of~\cite{PaPa09}}
\tcp{Scan Phases}
\For{$i=-W$ \textbf{to} $W$}{ \label{algo:solve:l6}
	$F\leftarrow 0$;\; 
	\For{$j=1$ \textbf{to} $s-1$}{ \label{algo:solve:l7}
		$\text{prev\_}f\leftarrow f$; \label{algo:solve:l8}\;	
		$\text{prev\_}w\leftarrow \frac{1}{D}\, w'$; \label{algo:solve:l9} \;	
		$\text{prev\_}F\leftarrow F$;\;
		$F\leftarrow$ generate the $j$-th 
			term of $\F_{|V|}$; \label{algo:solve:l5} \tcp{with the algorithm of~\cite{PaPa09}}
		$N\leftarrow$ numerator of $F$; \label{algo:solve:l10}\;
		$D\leftarrow$ denominator of $F$; \label{algo:solve:l11}\;
		$w'\leftarrow D\, (w - i)-N$; \label{algo:solve:l12}\;
		$f\leftarrow \frac{1}{D}\,\texttt{Value-Iteration}(\Gamma^{w'}, 
			\lceil D\, \text{prev\_}f \rceil)$; \label{algo:solve:l13}\;
		\For{$v\in V$}{ \label{algo:solve:l14}
			\If{$\text{prev\_}f(v)\neq\top$ \textbf{and} $f(v)=\top$}{ \label{algo:solve:l15}
				$\nu(v)\leftarrow i + \texttt{prev\_}F$; \label{algo:solve:l16} \tcp{set optimal value $\nu$}
				\If{$\nu(v)\geq 0$}{ \label{algo:solve:l17}
					$\W_0\leftarrow \W_0\cup \{v\}$; \label{algo:solve:l18} \tcp{$v$ is winning for Player~$0$}
				}\Else{ \label{algo:solve:l19}
					$\W_1\leftarrow \W_1\cup \{v\}$; \label{algo:solve:l20} \tcp{$v$ is winning for Player~$1$}
				}
				\If{$v\in V_0$}{ \label{algo:solve:l21}
					\For{$u\in \texttt{post}(v)$}{ \label{algo:solve:l22}
				\If{$\text{prev\_}f(v)\succeq \text{prev\_}f(u) \ominus \text{prev\_}w(v, u)$}{ \label{algo:solve:l23}
	$\sigma^*_0(v)\leftarrow u$; \textbf{break}; \label{algo:solve:l24}
						}
					}
				}
			}
		}
	}
}
\Return{$( \W_0, \W_1, \nu, \sigma^*_0)$} \label{algo:solve:l25}
}
\end{algorithm}
\begin{itemize}
\item\textbf{Initialization Phase.} To start with, the algorithm performs an initialization phase.
At line~\ref{algo:solve:l1}, Algorithm~\ref{Algo:solve_mpg} initializes the output variables $\W_0$ and $\W_1$ to be empty sets.
Notice that, within the pseudo-code, the variables $\W_0$ and $\W_1$ 
represent the winning regions of Player~$0$ and Player~$1$, respectively; 
also, the variable $\nu$ represents the optimal values of the input \MPG $\Gamma$,  
and $\sigma^*_0$ represents an optimal positional strategy for Player~0 in the input \MPG $\Gamma$. 
Secondly, at line~\ref{algo:solve:l2}, 
an array variable $f:V\rightarrow\C_{\Gamma}$ is initialized to $f(v)=0$ for every $v\in V$;
throughout the computation, the variable $f$ represents a SEPM. 
Next, at line~\ref{algo:solve:l3}, the greatest absolute weight $W$ is assigned as $W=\max_{e\in E}|w_e|$, 
an auxiliary weight function $w'$ is initialized as $w'=w+W$, and a ``denominator" variable is initialized as $D=1$.
Concluding the initialization phase, 
at line~\ref{algo:solve:l4} the size (\ie the total number of terms) of $\F_{|V|}$ is computed and assigned to the variable $s$.
This size can be computed very efficiently with the algorithm devised by Pawlewicz and P\u{a}tra\c{s}cu~\cite{PaPa09}.  

\item\textbf{Scan Phases.} After initialization, the procedure performs multiple \emph{Scan Phases}.
Each one of these is \emph{indexed} by a pair of integers $(i,j)$, 
where $i\in [-W, W]$ (at line~\ref{algo:solve:l6}) and $j\in [1, s-1]$ (at line~\ref{algo:solve:l7}). 
Thus, the index $i$ goes from $-W$ to $W$,  
and for each $i$, the index $j$ goes from $1$ to $s-1$.

At each step, we say that the algorithm goes through the $(i,j)$-th scan phase. 
For each scan phase, we also need to consider the \emph{previous} scan phase, 
so that the \emph{previous index} $\texttt{prev}(i,j)$ 
shall be defined as follows: the predecessor of the first index is $\texttt{prev}(-W,1) := (-W, 0)$;
if $j>1$, then $\texttt{prev}(i,j) := (i,j-1)$; 
finally, if $j=1$ and $i>-W$, then $\texttt{prev}(i,j) := (i-1, s-1)$.

At the $(i,j)$-th scan phase, the algorithm considers the rational 
number $z_{i,j}\in S_\Gamma$ defined as: \[z_{i,j} := i+F[j],\] 
where $F[j]=N_j/D_j$ is the $j$-th term of $\F_{|V|}$. 
For each $j$, $F[j]$ can be computed very efficiently, on the fly,  
with the algorithm of Pawlewicz and P\u{a}tra\c{s}cu~\cite{PaPa09}.
Notice that, 
since $F[0] < \ldots < F[s-1]$ is monotonically increasing, 
then the values $z_{i,j}$ are scanned in increasing order as well.
At this point, the procedure aims to compute the rational scaling $f^*_{i,j}$ 
of the least SEPM $f^*_{w'_{i,j}}$, \ie \[f := f^*_{i,j} =\frac{1}{D_j} f^*_{w'_{i,j}}.\]
This computation is really at the heart of the algorithm and it goes  
from line~\ref{algo:solve:l8} to line~\ref{algo:solve:l13}.
To start with, at line~\ref{algo:solve:l8} and line~\ref{algo:solve:l9}, 
the previous rational scaling $f^*_{\texttt{prev}(i,j)}$ 
and the previous weight function $w_{\texttt{prev}(i,j)}$ 
(\ie those considered during the previous scan phase) 
are saved into the auxiliary variables $\texttt{prev\_}f$ and $\texttt{prev\_}w$. 

\emph{Remark.} Since the values $z_{i,j}$ are scanned in increasing order of 
magnitude, then $\texttt{prev\_}f=f^*_{\texttt{prev}(i,j)}$ bounds from below $f^*_{i,j}$. 
That is, it holds for every $v\in V$ that: 
\[\texttt{prev\_}f(v)=f^*_{\texttt{prev}(i,j)}(v)\preceq f^*_{i,j}.\] 
The underlying intuition, at this point, is that of computing the energy levels of $f=f^*_{i,j}$ 
firstly by initializing them to the energy levels of the previous scan phase, 
\ie to $\texttt{prev\_}f=f^*_{\texttt{prev}(i,j)}$, 
and then to update them monotonically 
upwards by executing the Value-Iteration algorithm for \EG{s}.

Further details of this pivotal step now follow.
Firstly, since the Value-Iteration has been designed  
to work with integer numerical weights only~\cite{brim2011faster}, 
then the weights $w_{i,j}=w-z_{i,j}$ have to be scaled from $\Q$ to $\Z$:
this is performed in the standard way, from line~\ref{algo:solve:l10} to line~\ref{algo:solve:l13},  
by considering the numerator $N_j$ and the denominator $D_j$ of $F[j]$, 
and then by setting: 
\[w'_{i,j}(e):=D_j\, \big(w(e)-i\big)-N_j \text{, for every } e\in E.\] 
The initial energy levels are also scaled up from   
$\Q$ to $\Z$ by considering the values: 
$\lceil D_j\, \texttt{prev\_}f(v)\rceil$, for every $v\in V$ (line~\ref{algo:solve:l13}).
At this point the least SEPM of $\Gamma^{w'_{i,j}}$ is computed, at line~\ref{algo:solve:l13}, 
by invoking $\texttt{Value-Iteration}(\Gamma^{w'_{i,j}}, \lceil D_j\, \texttt{prev\_}f\rceil)$,  
that is, by executing on input $\Gamma^{w'_{i,j}}$ the Value-Iteration  
with initial energy levels given by: $\lceil D_j\,\texttt{prev\_}f(v)\rceil$ for every $v\in V$. 
Soon after that, the energy levels have to be scaled back from $\Z$ to $\Q$, 
so that, in summary, at line~\ref{algo:solve:l13} they becomes: 
\[f=f^*_{i,j}=\frac{1}{D_j}\,\texttt{Value-Iteration}(\Gamma^{w'_{i,j}}, \lceil D_j\, \texttt{prev\_}f\rceil).\] 
The correctness of lines~\ref{algo:solve:l12}-\ref{algo:solve:l13} 
will be proved in Lemma~\ref{lemma:correctness}. 

Here, let us provide a sketch of the argument:
\begin{enumerate}
\item Since $F_0 < \ldots < F_{s-1}$ is monotone increasing, 
then the sequence $\{w'_{i,j}\}_{(i,j)}$ is monotone decreasing, \ie for every $i,j$ and $e\in E$, 
$w'_{\texttt{prev}(i,j)}(e) > w'_{i,j}(e)$.
Whence, the sequence of rational scalings $\{f^*_{i,j}\}_{i,j}$ is monotone increasing, 
\ie $f^*_{i, j}\preceq f^*_{\texttt{prev}(i, j)}$ holds at the $(i,j)$-th step. 
The proof is in Lemma~\ref{lemma:correctness}.

\item At the $(i,j)$-th iteration of line~\ref{algo:solve:l8}, 
it holds that $\texttt{prev\_}f=f^*_{\texttt{prev}(i, j)}$.

This invariant property is also proved as part of Lemma~\ref{lemma:correctness}.
\item Since $\texttt{prev\_}f=f^*_{\texttt{prev}(i,j)}$, 
then $\texttt{prev\_}f\preceq f^*_{i,j}$.

Thus, one can prove that $D_j\, \texttt{prev\_f}\preceq f^*_{w'_{i,j}}$.
\item Since $w'_{i,j}(e)\in \Z$ for every $e\in E$, then $f^*_{w'_{i,j}}(v)\in \Z$ for every $v\in V$, 
so that $\lceil D_j\, \texttt{prev\_f}(v)\rceil \preceq f^*_{w'_{i,j}}(v)$ holds for every $v\in V$ as well. 
\item This implies that it is correct to execute the Value-Iteration, on input $\Gamma^{w'_{i,j}}$, 
with initial energy levels given by: $\lceil D_j\,\texttt{prev\_}f(v)\rceil$ for every $v\in V$.
\end{enumerate}

Back to us, once $f=f^*_{i,j}$ has been determined, then for each $v\in V$ the condition: 
\[v \overset{?}{\in} \W_0(\Gamma_{\texttt{prev}(i,j)})\cap \W_1(\Gamma_{i,j}),\] is checked at line~\ref{algo:solve:l15}: 
it is not difficult to show that, for this, 
it is sufficient to test whether both $\texttt{prev\_f}(v)\neq\top$ and $f(v)=\top$ hold on $v$ (it follows by Lemma~\ref{lemma:correctness}).

If $v\in \W_0(\Gamma_{\texttt{prev}(i,j)})\cap \W_1(\Gamma_{i,j})$ holds, 
then the algorithm relies on Theorem~\ref{Thm:transition_opt_values} 
in order to assign the optimal value as follows: $\nu(v):=\val{\Gamma}{v}=z_{\texttt{prev}(i,j)}$ (line~\ref{algo:solve:l16}).
If $\nu(v)\geq 0$, then $v$ is added to the winning region $\W_0$ at line~\ref{algo:solve:l18}. 
Otherwise, $\nu(v)<0$ and $v$ is added to $\W_1$ at line~\ref{algo:solve:l20}.

To conclude, from line~\ref{algo:solve:l21} to line~\ref{algo:solve:l25}, 
the algorithm proceeds as follows: if $v\in V_0$, 
then it computes an optimal positional strategy $\sigma^*_0(v)$ for Player~$0$ in $\Gamma$: 
this is done by testing for each $u\in \texttt{post}(v)$ whether $(v,u)\in E$ is an arc  
compatible with $\texttt{prev\_}f$ in $\Gamma_{\texttt{prev}(i,j)}$; 
namely, whether the following holds for some $u\in \texttt{post}(v)$: 
\[\text{prev\_} f(v) \overset{?}{\succeq} \text{prev\_}f(u) \ominus \text{prev\_}w(v, u).\] 
If $(v,u)\in E$ is found to be compatible with $\texttt{prev\_}f$ at that point, 
then $\sigma^*_0(v) := u$ gets assigned and the arc $(v,u)$ 
becomes part of the optimal positional strategy returned to output. 
Indeed, the correctness of such an assignment 
relies on Theorem~\ref{Thm:pos_opt_strategy} and Remark~\ref{Rem:pos_opt_strategy}. 

This concludes the description of the scan phases and also that of Algorithm~\ref{Algo:solve_mpg}.
\end{itemize}

\subsection{Proof of Correctness}
Now we formally prove the correctness of Algorithm~\ref{Algo:solve_mpg}.
The following lemma shows some basic invariants that are maintained throughout the computation.
\begin{Lem}\label{lemma:correctness}
Algorithm~\ref{Algo:solve_mpg} keeps the following invariants throughout the computation: 
\begin{enumerate} 
\item\label{lemma:correctness:item1}For every $i\in [-W, W]$ and every $j\in [1, s-1]$, it holds that: 
\[f^*_{\texttt{prev}(i,j)}(v) \preceq f^*_{i,j}(v), \text{ for every } v\in V;\]
\item\label{lemma:correctness:item:2} At the $(i,j)$-th iteration of line~\ref{algo:solve:l8}, it holds that: 
$\texttt{prev\_}f=f^*_{\texttt{prev}(i, j)}$; 
\item\label{lemma:correctness:item:3} At the $(i,j)$-th iteration of line~\ref{algo:solve:l8}, it holds that: 
$\lceil D_j \texttt{prev\_}f \rceil\preceq f^*_{{w'_{i,j}}}$; 
\item\label{lemma:correctness:item:4} At the $(i,j)$-th iteration of line~\ref{algo:solve:l13}, it holds that: 
\[\frac{1}{D_j} \texttt{Value-Iteration}(\Gamma^{w'_{i,j}}, \lceil D_j \texttt{prev\_}f \rceil) = f^*_{i,j}.\] 
\end{enumerate}
\end{Lem}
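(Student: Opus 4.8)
The plan is to prove Item~\ref{lemma:correctness:item1} first, as a statement about the reweighted arenas alone, and then to establish Items~\ref{lemma:correctness:item:2}--\ref{lemma:correctness:item:4} together by induction on the scan phases, taken in the order $(-W,1),(-W,2),\dots,(-W,s-1),(-W+1,1),\dots,(W,s-1)$ in which Algorithm~\ref{Algo:solve_mpg} visits them.

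\emph{Item~\ref{lemma:correctness:item1}.} I would isolate the following monotonicity principle for least SEPMs: if $w_1,w_2:E\to\Z$ satisfy $w_1(e)\le w_2(e)$ for all $e\in E$, then the least SEPM of $\Gamma^{w_2}$ is pointwise $\preceq$ the least SEPM of $\Gamma^{w_1}$ (larger arc weights only help Player~$0$, enlarging $\W_0$ and lowering the minimum credits). This is proved by running the Value-Iteration lifting operator $\delta$ on $\Gamma^{w_1}$ and on $\Gamma^{w_2}$ in lockstep --- the same vertices lifted in the same order, both starting from the constant $0$ --- and noting that $a\ominus b$ is antitone in $b$, so the $\Gamma^{w_2}$-iterates stay pointwise $\preceq$ the $\Gamma^{w_1}$-iterates throughout; passing to the limits (the two least SEPMs) gives the claim. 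To apply this to $f^*_{\texttt{prev}(i,j)}$ and $f^*_{i,j}$, which are rational scalings with possibly different denominators, I would first move to a common denominator: since multiplying all arc weights of an integer-weighted \EG by a positive integer $k$ multiplies both the threshold $|V|\max_e|w_e|$ of $\C_\Gamma$ and the least SEPM by exactly $k$, one has $f^*_{i,j}=\tfrac{1}{\hat D}f^*_{\hat D\,w_{i,j}}$ and likewise for $\texttt{prev}(i,j)$, for any common multiple $\hat D$ of the two denominators. As the values $z_{i,j}$ are scanned in increasing order, $w_{\texttt{prev}(i,j)}=w-z_{\texttt{prev}(i,j)}\ge w-z_{i,j}=w_{i,j}$ pointwise, hence $\hat D\,w_{\texttt{prev}(i,j)}\ge \hat D\,w_{i,j}$, and the monotonicity principle yields $f^*_{\texttt{prev}(i,j)}\preceq f^*_{i,j}$.

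\emph{Items~\ref{lemma:correctness:item:2}--\ref{lemma:correctness:item:4}.} For the base case $(i,j)=(-W,1)$ one has $\texttt{prev}(-W,1)=(-W,0)$, and $\Gamma_{-W,0}=\Gamma^{w+W}$ has only non-negative arc weights, so its least SEPM (and rational scaling, as the denominator there is $1$) is the constant $0$, which is exactly the value $f$ receives in the initialization phase (lines~\ref{algo:solve:l2}--\ref{algo:solve:l3}); hence $\texttt{prev\_}f=\mathbf{0}=f^*_{\texttt{prev}(-W,1)}$ at the first execution of line~\ref{algo:solve:l8}. For the inductive step at phase $(i,j)$, assume Item~\ref{lemma:correctness:item:2}; combining it with Item~\ref{lemma:correctness:item1} gives $\texttt{prev\_}f=f^*_{\texttt{prev}(i,j)}\preceq f^*_{i,j}$, so $D_j\,\texttt{prev\_}f\preceq D_j f^*_{i,j}=f^*_{w'_{i,j}}$; because $\Gamma^{w'_{i,j}}$ is integer-weighted, $f^*_{w'_{i,j}}$ is integer-valued or $\top$, and applying $\lceil\cdot\rceil$ preserves the inequality --- this is Item~\ref{lemma:correctness:item:3}. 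Item~\ref{lemma:correctness:item:4} is then immediate from the generalized Value-Iteration recalled in Section~\ref{sect:background}: given the \EG $\Gamma^{w'_{i,j}}$ and an initial function bounding its least SEPM $f^*_{w'_{i,j}}$ from below, it outputs $f^*_{w'_{i,j}}$, and dividing by $D_j$ gives $f^*_{i,j}$. Finally this restores Item~\ref{lemma:correctness:item:2} at the next phase, since line~\ref{algo:solve:l8} there copies the current $f=f^*_{i,j}$ into $\texttt{prev\_}f$ and, for every scan phase other than the first, $\texttt{prev}(\cdot)$ is exactly the immediately preceding scan phase (in the wrap-around from $j=s-1$ to $j=1$ one uses $F_{s-1}=1$, so that $z_{i,s-1}=z_{i+1,0}$ and the two least SEPMs coincide).

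\emph{Main obstacle.} The delicate point is the monotonicity argument for Item~\ref{lemma:correctness:item1} when the reweightings carry different denominators: one must keep the energy cap $\C_\Gamma$ --- and hence the meaning of $\top$ --- consistent after rescaling to $\hat D$, even though the rescaled arenas may have different maximal absolute weights. I would handle this by observing that every finite energy level of the least SEPM of an integer-weighted \EG is at most $(|V|-1)\max_e|w_e|$, so the least SEPM does not change if the cap is raised above this bound; the lockstep value-iteration comparison can then be carried out over one common cap valid for all arenas involved. Once Item~\ref{lemma:correctness:item1} is settled, the base case, the ceiling step, and the appeal to the generalized Value-Iteration are routine.
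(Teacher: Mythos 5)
Your overall decomposition coincides with the paper's: Item~\ref{lemma:correctness:item1} is isolated as a monotonicity statement about the reweighted arenas, and Items~\ref{lemma:correctness:item:2}--\ref{lemma:correctness:item:4} are then obtained by induction on the scan phases with the same base case ($\Gamma_{-W,0}=\Gamma^{w+W}$ has non-negative weights, so the zero function is its least SEPM), the same ceiling/integrality step, and the same appeal to the generalized \texttt{Value-Iteration()}. The gap lies in your proof of the monotonicity principle behind Item~\ref{lemma:correctness:item1}. You propose to run the Value-Iteration on $\Gamma^{w_1}$ and $\Gamma^{w_2}$ ``in lockstep --- the same vertices lifted in the same order, both starting from the constant $0$'' and then to ``pass to the limits (the two least SEPMs).'' But a single common lift schedule is not an admissible Value-Iteration schedule for both arenas at once: a vertex inconsistent in one arena may already be consistent in the other, and applying $\delta(\cdot,v)$ at a consistent vertex can strictly \emph{decrease} its value, so the mirrored sequence is not a run of the algorithm and its limit need not be the least SEPM of that arena. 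Concretely, if the schedule is the one generated by the run on $\Gamma^{w_1}$, then at termination you only know that the mirrored $\Gamma^{w_2}$-iterate is $\preceq f^*_1$; that iterate is merely some lower bound of $f^*_2$, so no comparison between $f^*_2$ and $f^*_1$ follows. Hence ``passing to the limits'' does not deliver the two least SEPMs, and the principle is not yet proved.

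The principle itself is true and the gap is local, but a fix must be stated. Either (a) let the schedule be the genuine run on the larger-weight arena $\Gamma^{w_2}$ (which does converge to $f^*_2$), mirror its lifts on $\Gamma^{w_1}$, and prove the additional invariant that applying $\delta$ to a function lying below the least SEPM of its own arena never pushes it above that least SEPM, so that at termination $f^*_2=f_2\preceq f_1\preceq f^*_1$; or, (b) much more simply, observe that since $a\ominus b$ is antitone in $b$ and $w_1\le w_2$, \emph{any} SEPM of $\Gamma^{w_1}$ is already a SEPM of $\Gamma^{w_2}$ (modulo the common-cap normalization you already describe), and conclude $f^*_2\preceq f^*_1$ by leastness. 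Option (b) is essentially the paper's own argument: the paper verifies that the appropriate integer scaling of $\min\big(f^*_{\texttt{prev}(i,j)},f^*_{i,j}\big)$ satisfies the SEPM conditions in $\Gamma^{w'_{\texttt{prev}(i,j)}}$, using precisely this antitonicity together with $w_{i,j}<w_{\texttt{prev}(i,j)}$, and then invokes leastness of $f^*_{w'_{\texttt{prev}(i,j)}}$. Your treatment of the differing denominators and of the cap of $\C_\Gamma$ (common denominator, and the bound on finite least-SEPM values so that raising the cap is harmless) is sound --- indeed more explicit than the paper on this point --- and your base case, wrap-around bookkeeping, and the derivation of Items~\ref{lemma:correctness:item:3} and~\ref{lemma:correctness:item:4} from Item~\ref{lemma:correctness:item:2} match the paper's proof.
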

\begin{proof}\mbox{}

\begin{itemize}
\item \emph{Proof (of Item~\ref{lemma:correctness:item1}).} 
Recall that $w_{i,j}:=w-i-F_j$. Since $F_0< \ldots < F_{s-1}$ is monotone increasing, then: 
$w_{i,j}(e) < w_{\texttt{prev}(i,j)}(e)$ holds for every $e \in E$.

In order to prove the thesis, consider the following function: 
\[ g:V\rightarrow \Q\cup \{\top\} : v\mapsto \min\big( f^*_{\texttt{prev}(i,j)}(v), f^*_{i,j}(v)\big).\] 
We show that $D_{\texttt{prev}(i,j)}\, g$ is a SEPM of $\Gamma^{w'_{\texttt{prev}(i,j)}}$. 
There are four cases, according to whether $v\in V_0$ or $v\in V_1$, 
and $g(v)= f^*_{\texttt{prev}(i,j)}(v)$ or $g(v)= f^*_{i,j}(v)$.
\begin{itemize}
\item \emph{Case: $v\in V_0$.} Then, the following holds for \emph{some} $u\in\texttt{post}(v)$:
\begin{itemize}
\item \emph{Case: $g(v)= f^*_{\texttt{prev}(i,j)}(v)$:}  
\begin{align*}
	D_{\texttt{prev}(i,j)} \, g(v) &= D_{\texttt{prev}(i,j)} \,  f^*_{\texttt{prev}(i,j)}(v)  
	\;\;\;\;\;\;\;\;\;\;\;\;\;\;\;\;\;\;\;\;\;\;\;\;\;\;\;\;\;\;
		\text{[by $g(v) = f^*_{\texttt{prev}(i,j)}(v)$]}\\
 		& = f^*_{w'_{\texttt{prev}(i,j)}}(v) 
	\;\;\;\;\;\;\;\;\;\;\;\;\;\;\;\;\;\;\;\;\;\;\;\;\;\;\;\;\;\;\;\;\;\;\;\;\;\;\;\;\;\;\;\;\;\;\;
		\text{[by $ D_{\texttt{prev}(i,j)}f^*_{\texttt{prev}(i,j)} = f^*_{w'_{\texttt{prev}(i,j)}}$]} \\
		& \succeq f^*_{w'_{\texttt{prev}(i,j)}}(v) \ominus w'_{\texttt{prev}(i,j)}(v,u)  
		\;\;\;\;\;\;\;\;\;\;\;\;\;\;\;\;
		\text{ [$f^*_{w'_{\texttt{prev}(i,j)}}$ is SEPM of $\Gamma^{w'_{\texttt{prev}(i,j)}}$] } \\ 
		& = D_{\texttt{prev}(i,j)} \, f^*_{\texttt{prev}(i,j)}(u) 
		\ominus w'_{\texttt{prev}(i,j)}(v,u) \; 
		\text{[by $ f^*_{w'_{\texttt{prev}(i,j)}} = D_{\texttt{prev}(i,j)}f^*_{\texttt{prev}(i,j)}$]} \\ 
		& \succeq D_{\texttt{prev}(i,j)} \, g(u) \ominus w'_{\texttt{prev}(i,j)}(v,u) 
		\;\;\;\;\;\;\;\;\;\;\;\;\;\; \text{[by definition of $g(u)$]} 
\end{align*}
\item \emph{Case: $g(v) = f^*_{i,j}(v)$:}
\begin{align*}
	D_{\texttt{prev}(i,j)} \, g(v) & = D_{\texttt{prev}(i,j)} \, f^*_{i,j}(v)
			\;\;\;\;\;\;\;\;\;\;\;\;\;\;\;\;\;\;\;\;\;\;\;\;\;\;\;\;\;\;\;\;\;\;\;\;\;\;\;\;\;\;\;\;\;\;\;
		\text{[by $g(v) = f^*_{i,j}(v)$]}  \\
		& = \frac{  D_{\texttt{prev}(i,j)}}{D_{i,j}} f^*_{w'_{i,j}}(v)
			\;\;\;\;\;\;\;\;\;\;\;\;\;\;\;\;\;\;\;\;\;\;\;\;\;\;\;\;\;\;\;\;\;\;\;\;\;\;\;\;\;\;\;\;\;
		\text{[by $f^*_{i,j}=f^*_{{w'_{i,j}}}/D_{i,j}$]} \\
		& \succeq \frac{  D_{\texttt{prev}(i,j)}}{D_{i,j}} f^*_{w'_{i,j}}(u) \ominus
 \frac{D_{\texttt{prev}(i,j)}}{D_{i,j}} w'_{i,j}(v,u) 
		\;\;\;\;\;\;\;
			\text{[$f^*_{w'_{i,j}}$ is SEPM of $\Gamma^{w'_{i,j}}$]} \\
			& =  D_{\texttt{prev}(i,j)} f^*_{i,j}(u)\ominus
				 \frac{D_{\texttt{prev}(i,j)}}{D_{i,j}} w'_{i,j}(v,u) 
		\;\;\;\;\;\;\;\;\;\; 
		\text{[by $f^*_{i,j}=f^*_{{w'_{i,j}}}/D_{i,j}$]} \\
		& =  D_{\texttt{prev}(i,j)} f^*_{i,j}(u)\ominus
				 D_{\texttt{prev}(i,j)} w_{i,j}(v,u) 
		\;\;\;\;\;\;\;\;\;\;\;
		 \text{[by $w_{i,j}(v,u) = w'_{i,j}(v,u)/D_{i,j}$]} \\
			& \succeq D_{\texttt{prev}(i,j)} f^*_{i,j}(u)\ominus
				 D_{\texttt{prev}(i,j)}\, w_{\texttt{prev}(i,j)}(v,u)\; 
					\text{[by $w_{i,j}<w_{\texttt{prev}(i,j)}$]}  \\
			& = D_{\texttt{prev}(i,j)}  f^*_{i,j}(u)\ominus w'_{\texttt{prev}(i,j)}(v,u) 
		\;\;\;\;\;\;\;\;\;\;\;\;\;\;\;\;\;\;
	\text{[by $D_{\texttt{prev}(i,j)}w_{\texttt{prev}(i,j)} = w'_{\texttt{prev}(i,j)}$]} \\ 
		& \succeq D_{\texttt{prev}(i,j)} g(u)\ominus w'_{\texttt{prev}(i,j)}(v,u) 
		\;\;\;\;\;\;\;\;\;\;\;\;\;\;\;\;\;\;\;\;\;
				\text{[by definition of $g(u)$]} 
\end{align*}
This means that $(v,u)$ is an 
arc compatible with $D_{\texttt{prev}(i,j)}g$ in $\Gamma^{w'_{\texttt{prev}(i,j)}}$.
\end{itemize} 
\item \emph{Case: $v\in V_1$.} The same argument shows that   
$(v,u)\in E$ is compatible with $D_{\texttt{prev}(i,j)}g$ in $\Gamma^{w'_{\texttt{prev}(i,j)}}$, 
but it holds for \emph{all} $u\in\texttt{post}(v)$ in this case.
\end{itemize}
This proves that $D_{\texttt{prev}(i,j)}\, g$ is a SEPM of $\Gamma^{w'_{\texttt{prev}(i,j)}}$.

Since $f^*_{w'_{\texttt{prev}(i,j)}}$ is the \emph{least} SEPM of $\Gamma^{w'_{\texttt{prev}(i,j)}}$, 
then: \[f^*_{w'_{\texttt{prev}(i,j)}}(v)\preceq D_{\texttt{prev}(i,j)}\, g(v), \text{ for every $v\in V$}.\]
Since $f^*_{w'_{\texttt{prev}(i,j)}}=D_{\texttt{prev}(i,j)}\,f^*_{\texttt{prev}(i,j)}$ and  
$g=\min( f^*_{\texttt{prev}(i,j)}, f^*_{i,j})$, then:
\[
D_{\texttt{prev}(i,j)}\,f^*_{\texttt{prev}(i,j)} \preceq D_{\texttt{prev}(i,j)}\, \min( f^*_{\texttt{prev}(i,j)}, f^*_{i,j}).
\]
Whence $f^*_{\texttt{prev}(i,j)} = \min( f^*_{\texttt{prev}(i,j)}, f^*_{i,j})$.

This proves that $f^*_{\texttt{prev}(i,j)}(v) \preceq f^*_{i,j}(v)$ holds for every $v\in V$. 
\item \emph{Fact~1.} 
Next, we prove that if Item~\ref{lemma:correctness:item:2} holds at the $(i,j)$-th scan phase, 
then both Item~\ref{lemma:correctness:item:3} and Item~\ref{lemma:correctness:item:4} hold at the $(i,j)$-th scan phase as well.

\begin{proof}[of Fact~1]
Assume that Item~\ref{lemma:correctness:item:2} holds.
Let us prove Item~\ref{lemma:correctness:item:3} first. 
Since $f^*_{\texttt{prev}(i,j)} \preceq f^*_{i,j}$ holds by Item~\ref{lemma:correctness:item1}, 
and since $\texttt{prev\_}f=f^*_{\texttt{prev}(i,j)}$ holds by hypothesis, 
then $\texttt{prev\_}f(v)\preceq f^*_{i,j}(v)$ holds for every $v\in V$.
Since $w'_{i,j}=D_j\, w_{i,j}$ and $f^*_{w'_{i,j}}=D_j\, f^*_{i,j}$, 
then $D_j\, \texttt{prev\_f}(v)\preceq f^*_{w'_{i,j}}(v)$ holds for every $v\in V$.
Since $w'_{i,j}(e)\in \Z$ for every $e\in E$, then $f^*_{w'_{i,j}}(v)\in \Z$ for every $v\in V$, 
so that $\lceil D_j\, \texttt{prev\_f}(v)\rceil \preceq f^*_{w'_{i,j}}(v)$ holds for every $v\in V$ as well.
This proves Item~\ref{lemma:correctness:item:3}.

We show Item~\ref{lemma:correctness:item:4} now.
Since Item~\ref{lemma:correctness:item:3} holds, 
at line~\ref{algo:solve:l13} it is correct to initialize the starting energy levels of \texttt{Value-Iteration()} to 
$\lceil D_j\, \texttt{prev\_}f(v) \rceil$ for every $v \in V$, 
in order to execute the Value-Iteration on input $\Gamma^{w'_{i,j}}$. 

This implies the following:
\[\texttt{Value-Iteration}(\Gamma^{w'_{i,j}}, \lceil D_j\, \texttt{prev\_}f \rceil) = f^*_{w'_{i,j}}.\]
We know that $\frac{1}{D_j}\, f^*_{w'_{i,j}}=f^*_{i,j}$. 

This proves that Item~\ref{lemma:correctness:item:4} holds and concludes the proof of Fact~1.
\end{proof}

\item\emph{Fact~2.} 
We now prove that Item~\ref{lemma:correctness:item:2} holds at each iteration of line~\ref{algo:solve:l8}.

\begin{proof}[of Fact~2]
The proof proceeds by induction on $(i,j)$.

\emph{Base Case.} Let us consider the first iteration of line~\ref{algo:solve:l8}; 
\ie the iteration indexed by $i=-W$ and $j=1$. 
Recall that, at line~\ref{algo:solve:l2} of Algorithm~\ref{Algo:solve_mpg}, 
the function $f$ is initialized as $f(v)=0$ for every $v\in V$.
Notice that $f$ is really the least SEPM  
$f^*_{-W, 0}$ of $\Gamma_{-W, 0}=\Gamma^{w+W}$, because every 
arc $e\in E$ has a non-negative weight in $\Gamma^{w+W}$, \ie $w_e+W\geq 0$ for every $e\in E$.

Hence, at the first iteration of line~\ref{algo:solve:l8}, the following holds:
\[\texttt{prev\_}f = \textbf{0} = f^*_{-W, 0} = f^*_{\texttt{prev}(-W, 1)} .\]

\emph{Inductive Step.} Let us assume that Item~\ref{lemma:correctness:item:2} 
holds for the $\texttt{prev}(i,j)$-th iteration, and let us prove it for the $(i,j)$-th one. 
Hereafter, let us denote $(i_p, j_p)=\texttt{prev}(i,j)$ for convenience.
Since Item~\ref{lemma:correctness:item:2} holds 
for the $(i_p, j_p)$-th iteration by induction hypothesis, then, by Fact~1, the following 
holds at the $(i_p, j_p)$-th iteration of line~\ref{algo:solve:l13}:
\[ \frac{1}{D_{j_p}}\, \texttt{Value-Iteration}(\Gamma^{w'_{i_p, j_p}}, 
\lceil D_{j_p}\, \texttt{prev\_}f \rceil) = f = f^*_{i_p, j_p}.\]
Thus, at the $(i,j)$-th iteration of line~\ref{algo:solve:l8}: 
\[\texttt{prev\_}f = f = f^*_{i_p, j_p} = f^*_{\texttt{prev}(i,j)}.\]

This concludes the proof of Fact~2.
\end{proof}
\end{itemize}  
At this point, by Fact~1 and Fact~2, Lemma~\ref{lemma:correctness} follows.
\end{proof}

We are now in the position to show that Algorithm~\ref{Algo:solve_mpg} is correct.
\begin{Prop}\label{prop:correctness}
Assume that Algorithm~\ref{Algo:solve_mpg} is invoked on input $\Gamma = ( V, E, w, \langle V_0, V_1\rangle )$  
and, whence, that it returns $( \W_0, \W_1, \nu, \sigma_0 )$ as output.

Then, $\W_0$ and $\W_1$ are the winning sets of Player~0 and Player~1 in $\Gamma$ (respectively), 
$\nu:V\rightarrow S$ is such that $\nu(v)=\val{\Gamma}{v}$ for every $v\in V$, and $\sigma_0:V_0\rightarrow V$ is an 
optimal positional strategy for Player~0 in the \MPG $\Gamma$.
\end{Prop}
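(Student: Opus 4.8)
The plan is to combine the invariants already established in Lemma~\ref{lemma:correctness} with the two characterisations of Theorem~\ref{Thm:transition_opt_values} and Theorem~\ref{Thm:pos_opt_strategy}, after re-reading the $\top$-test of the algorithm as a statement about winning regions. First I would fix notation for the \emph{scan phases}: these are the pairs $(i,j)$ with $i\in[-W,W]$ and $j\in[1,s-1]$; I recall $z_{i,j}=i+F[j]$, the convention $\texttt{prev}(-W,1)=(-W,0)$, $\texttt{prev}(i,j)=(i,j-1)$ for $j>1$ and $\texttt{prev}(i,1)=(i-1,s-1)$ for $i>-W$, and the observation that the integer-weighted arena $\Gamma_{i,j}=\Gamma^{D_j(w-i)-N_j}$ depends, \emph{as an arena}, only on the rational $z_{i,j}$; in particular $\Gamma_{i-1,s-1}=\Gamma_{i,0}=\Gamma^{w-i}$, and $\Gamma_{-W,0}=\Gamma^{w+W}$ has only nonnegative weights, so its least SEPM is the constant $\mathbf 0$. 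The only combinatorial fact needed here is elementary and I would just verify it by inspecting $\texttt{prev}$: the rationals $z_{\texttt{prev}(i,j)}$ are pairwise distinct over all scan phases, and every element of $S_\Gamma=[-W,W)+\F_{|V|}$ — hence, since $\val{\Gamma}{v}\in S_\Gamma$, every vertex value — occurs among them.

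Next I would translate the test at line~\ref{algo:solve:l15}. By Lemma~\ref{lemma:correctness}, at the $(i,j)$-th scan phase the variable $\texttt{prev\_}f$ equals the rational scaling $f^*_{\texttt{prev}(i,j)}$ of the least SEPM of $\Gamma_{\texttt{prev}(i,j)}$, while after line~\ref{algo:solve:l13} the variable $f$ equals the rational scaling $f^*_{i,j}$ of the least SEPM of $\Gamma_{i,j}$; scaling a function by a positive rational changes neither which vertices carry value $\top$ nor whether an arc is compatible with it, so $V_{\texttt{prev\_}f}=V_{f^*_{w'_{\texttt{prev}(i,j)}}}$ and $V_f=V_{f^*_{w'_{i,j}}}$. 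By Item~2 of Lemma~\ref{Lem:least_energy_prog_measure} together with Lemma~\ref{Lem:relation_MPG_EG}, the latter set is exactly $\W_0(\Gamma_{i,j})$, whose complement (by determinacy) is $\W_1(\Gamma_{i,j})$; likewise for $\texttt{prev}(i,j)$. Hence the condition tested at line~\ref{algo:solve:l15} — that $\texttt{prev\_}f(v)\neq\top$ and $f(v)=\top$ — holds exactly when $v\in\W_0(\Gamma_{\texttt{prev}(i,j)})\cap\W_1(\Gamma_{i,j})$, and by Theorem~\ref{Thm:transition_opt_values} — applied with index $(i,j)$ when $j>1$, and with index $(i,1)$ after rewriting $\Gamma_{\texttt{prev}(i,1)}$ as $\Gamma_{i,0}$ when $j=1$ — this happens exactly when $\val{\Gamma}{v}=z_{\texttt{prev}(i,j)}$. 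Combined with the distinctness-and-coverage fact above, each $v$ therefore triggers the test at exactly one scan phase, the one with $z_{\texttt{prev}(i,j)}=\val{\Gamma}{v}$, and there line~\ref{algo:solve:l16} sets $\nu(v)=i+\texttt{prev\_}F=z_{\texttt{prev}(i,j)}=\val{\Gamma}{v}$. Thus on termination $\nu(v)=\val{\Gamma}{v}$ for all $v$; since $v$ is then placed in $\W_0$ iff $\nu(v)\geq0$ and in $\W_1$ otherwise, the definition of winning starting position gives that $\W_0$ and $\W_1$ are the winning sets of Player~$0$ and Player~$1$.

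For the strategy, I would fix $v\in V_0$ and let $(i,j)$ be the unique scan phase at which $v$ triggers the test. Then $z_{\texttt{prev}(i,j)}=\val{\Gamma}{v}$, so $\Gamma_{\texttt{prev}(i,j)}=\Gamma^{w-\val{\Gamma}{v}}=\Gamma_v$ in the notation of Theorem~\ref{Thm:pos_opt_strategy}, and $\texttt{prev\_}f$ is the rational scaling of the least SEPM $f^*_v$ of the \EG $\Gamma_v$, with $v\in V_{f^*_v}$ because $\texttt{prev\_}f(v)\neq\top$ (consistently with Remark~\ref{Rem:pos_opt_strategy}). As $f^*_v$ depends only on $\val{\Gamma}{v}$, the hypothesis ``$f_{v_1}=f_{v_2}$ whenever $\val{\Gamma}{v_1}=\val{\Gamma}{v_2}$'' of Theorem~\ref{Thm:pos_opt_strategy} holds. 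The loop of lines~\ref{algo:solve:l22}--\ref{algo:solve:l24} searches for $u\in\texttt{post}(v)$ with $\texttt{prev\_}f(v)\succeq\texttt{prev\_}f(u)\ominus\texttt{prev\_}w(v,u)$; since $\texttt{prev\_}w=\frac{1}{D}\,w'_{\texttt{prev}(i,j)}$ and compatibility of an arc is invariant under multiplying the measure and the weights by the common denominator, this is exactly the requirement that $(v,u)$ be compatible with $f^*_v$ in $\Gamma_v$. Such a $u$ exists because $f^*_v$ is a SEPM and $f^*_v(v)\neq\top$, so the loop succeeds and assigns $\sigma^*_0(v):=u$, doing this exactly once per $v\in V_0$. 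Hence $\sigma^*_0$ is a strategy of the form described in Theorem~\ref{Thm:pos_opt_strategy} (taking $f_v=f^*_v$ there), which yields that $\sigma^*_0$ is an optimal positional strategy for Player~$0$.

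I expect the main obstacle to be the second paragraph: establishing that the cheap boolean test genuinely detects the winning-region transition requires carefully threading Lemma~\ref{lemma:correctness} (so that $f$ and $\texttt{prev\_}f$ really are the intended least SEPMs) together with Lemma~\ref{Lem:least_energy_prog_measure} (so that finiteness of the energy level equals membership in $\W_0$), and being scrupulous about the boundary cases of the $\texttt{prev}$-bookkeeping — the step $j=1$, where $\Gamma_{\texttt{prev}(i,1)}$ must be recognised as the arena $\Gamma_{i,0}$ and $\texttt{prev\_}F$ as $0$, and the first phase $(-W,1)$, where $\texttt{prev\_}f$ is the constant $\mathbf 0$. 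The rational-versus-integer rescaling invoked in the strategy extraction is a second, routine-but-delicate point: one must check that compatibility of an arc with a SEPM is preserved in both directions when the measure and the weights are multiplied by the denominator $D_j$.
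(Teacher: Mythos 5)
Your proposal is correct and follows essentially the same route as the paper's proof: invoke Lemma~\ref{lemma:correctness} to identify $\texttt{prev\_}f$ and $f$ with the rational scalings of the least SEPMs, translate the $\top$-test into $v\in\W_0(\Gamma_{i,j-1})\cap\W_1(\Gamma_{i,j})$ via Lemma~\ref{Lem:least_energy_prog_measure} (with Lemma~\ref{Lem:relation_MPG_EG}), conclude $\nu(v)=\val{\Gamma}{v}$ by Theorem~\ref{Thm:transition_opt_values} after handling the $j=1$ boundary through $w_{\texttt{prev}(i,1)}=w_{i,0}$, and justify the strategy assignment by Theorem~\ref{Thm:pos_opt_strategy} and Remark~\ref{Rem:pos_opt_strategy}. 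Your explicit distinctness-and-coverage argument for the prev-values only makes precise the ``each vertex triggers exactly once'' point that the paper leaves implicit.
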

\begin{proof}
At the $(i,j)$-th iteration of line~\ref{algo:solve:l15}, 
the following holds by Lemma~\ref{lemma:correctness}: 
\[ \texttt{prev\_}f = f^*_{\texttt{prev}(i,j)}\text{ and } f=f^*_{i,j}. \]
Our aim now is that to apply Theorem~\ref{Thm:transition_opt_values}.
For this, firstly observe that one can safely write $\texttt{prev\_}f = f^*_{i,j-1}$.
In fact, since $F_0=0$ and $F_{s-1}=1$, then: 
\[w_{\texttt{prev}(i, 1)} = w_{i-1,s-1} = w-i = w_{i,0},\;\; \text{ for every }i\in [-W, W].\]
This implies that $w_{\texttt{prev}(i,j)}=w_{i,j-1}$ for every $i\in [-W, W]$ and $j\in [1, s-1]$.

Whence, $\texttt{prev\_f}=f^*_{\texttt{prev}(i,j)}=f^*_{i,j-1}$. 

So, at the $(i,j)$-th iteration of line~\ref{algo:solve:l15}, the following holds for every $v\in V$:
\begin{align*}
\texttt{prev\_}f(v)\neq\top \text{ and } f(v) = \top & 
		\;\text{\emph{ iff }}\; f^*_{i,j-1}(v)\neq\top \text{ and } f^*_{i,j}(v) = \top 
		\;\;  \text{[by Lemma~\ref{lemma:correctness}]} \\ 
	& \;\text{\emph{ iff }}\; v\in \W_0(\Gamma_{i,j-1}) \cap \W_1(\Gamma_{i,j}) 
			\;\;	\text{[by Item~1-2 of Lemma~\ref{Lem:least_energy_prog_measure}]} \\
	& \;\text{\emph{ iff }}\; \val{\Gamma}{v}=i+F_{j-1} 
		\;\;  \text{[by Theorem~\ref{Thm:transition_opt_values}]}
\end{align*}
This implies that, at the $(i,j)$-th iteration of line~\ref{algo:solve:l16}, 
Algorithm~\ref{Algo:solve_mpg} correctly assigns the value $\nu(v)=i+F[j-1]=i+F_{j-1}$ to the vertex $v$.

Since for every vertex $v\in V$ we have $\val{\Gamma}{v}\in S_\Gamma$  
(recall that $S_\Gamma$ admits the following representation 
$S_\Gamma=\left\{i+F_j \suchthat i\in [-W, W),\, j\in [0, s-1]\right\}$), 
then, as soon as Algorithm~\ref{Algo:solve_mpg} halts, 
$\nu(v)=\val{\Gamma}{v}$ correctly holds for every $v\in V$.
In turn, at line~\ref{algo:solve:l18} and at line~\ref{algo:solve:l20}, 
the winning sets $\W_0$ and $\W_1$ are correctly assigned as well.

Now, let us assume that  
$\nu(v)=i+F_{j-1}$ holds at the $(i,j)$-th iteration of line~\ref{algo:solve:l16}, for some $v\in V$.
Then, the following holds on $\texttt{prev\_}w$ at line~\ref{algo:solve:l9}: 
\[ 
	\texttt{prev\_}w = w_{\texttt{prev}(i,j)} = w_{i, j-1} = w-i-F_{j-1} = w-\nu(v) = w-\val{\Gamma}{v}.
\]
Thus, at the $(i,j)$-th iteration of line~\ref{algo:solve:l23}, for every $v\in V_0$ and $u\in \texttt{post}(v)$:
\begin{align*}
\texttt{prev\_}f(v) \succeq \texttt{prev\_}f(u) \ominus \texttt{prev\_} & w(v,u)  \text{\emph{ iff }} 
		f^*_{\texttt{prev}(i,j)}(v) \succeq f^*_{\texttt{prev}(i,j)}(u) \ominus \big(w - \val{\Gamma}{v}\big) \\ 
		& \text{\emph{ iff }}  (v, u) \text{ is compatible with } 
		f^*_{\texttt{prev}(i,j)}\text{ in } \Gamma^{w-\val{\Gamma}{v}} 
\end{align*}
Recall that $f^*_{\texttt{prev}(i,j)}$ is the least SEPM of $\Gamma^{w-\val{\Gamma}{v}}$, 
thus by Theorem~\ref{Thm:pos_opt_strategy} the following implication holds: 
if $(v, u) \text{ is compatible with } f^*_{\texttt{prev}(i,j)}\text{ in } \Gamma^{w-\val{\Gamma}{v}}$, 
then $\sigma_0(v)=u$ is an optimal positional strategy for Player~0, at $v$, in the \MPG $\Gamma$.

This implies that line~\ref{algo:solve:l24} of 
Algorithm~\ref{Algo:solve_mpg} is correct and concludes the proof.
\end{proof}

\subsection{Complexity Analysis}

The present section aims to show that 
Algorithm~\ref{Algo:solve_mpg} always halts in $O(|V|^2|E|\, W)$ time. 
This upper bound is established in the next proposition.
\begin{Prop}\label{prop:complexity}
Algorithm~\ref{Algo:solve_mpg} always halts within 
$O(|V|^2|E|\, W)$ time and it works with $O(|V|)$ space, 
on any input \MPG $\Gamma = (V, E, w, \langle V_0, V_1\rangle)$.
Here, $W=\max_{e\in E}|w_e|$.
\end{Prop}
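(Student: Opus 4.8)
The plan is to split the analysis into three parts: the space bound (immediate), the number of scan phases together with their $O(|E|)$ ``bookkeeping'' cost (routine), and — the only delicate point — an \emph{amortised} bound on the cost of all the calls to \texttt{Value-Iteration()}. I first observe that, besides the read-only input, Algorithm~\ref{Algo:solve_mpg} stores only $\W_0,\W_1,\sigma^*_0$ (each of size $O(|V|)$), the $V$-indexed arrays $\nu,f,\texttt{prev\_}f$ (entries of bit-length $O(\log(|V|\,W))$, hence $O(|V|)$ machine words), and the $O(1)$ scalars $i,W,N,D$ with their \texttt{prev} copies; the weight functions $w'$ and $\texttt{prev\_}w$ are never materialised, since a value $w'_{i,j}(e)=D_j(w(e)-i)-N_j$ is recomputed on demand from the stored scalars and from $w(e)$. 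As \texttt{Value-Iteration()} runs in $O(|V|)$ space, the whole procedure uses $O(|V|)$ space.

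For the time bound, I would first count the scan phases: there are $(2W+1)(s-1)$ of them, with $s=|\F_{|V|}|=\Theta(|V|^2)$, hence $\Theta(|V|^2\,W)$ in total. The Init Phase costs $O(|E|)$ plus computing $s$, which is sublinear in $|V|$ by~\cite{PaPa09} (or $O(|V|^2)$ by generating $\F_{|V|}$ once via the Stern--Brocot recurrence). Inside the $(i,j)$-th scan phase, producing $F[j]$ costs $O(1)$ amortised (the inner index is incremented by one each iteration, so $F[j]$ is the mediant-successor of the previous two Farey terms), lines~\ref{algo:solve:l14}--\ref{algo:solve:l21} without the inner \texttt{post}-loop cost $O(|V|)$, and re-initialising $\L$ inside the call to \texttt{Value-Iteration()} costs $O(|E|)$. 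The \texttt{post}-loop of lines~\ref{algo:solve:l22}--\ref{algo:solve:l24} is entered at $v$ only when $\texttt{prev\_}f(v)\neq\top$ and $f(v)=\top$, i.e.\ only at the unique scan phase at which the energy level of $v$ first reaches $\top$; summed over all phases this is $\sum_{v\in V_0}O(|\texttt{post}(v)|)=O(|E|)$. Using $|E|\geq|V|$ (there are no sinks), all of this totals $O(|V|^2\,W)\cdot O(|E|)+O(|V|^3\,W)=O(|V|^2\,|E|\,W)$.

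The heart of the proof is bounding the total cost of the \emph{lifting} operations across the $\Theta(|V|^2\,W)$ calls to \texttt{Value-Iteration()}. I would use: a single application of $\delta(\cdot,v)$ costs $O(\deg v)=O(|\texttt{post}(v)|+|\texttt{pre}(v)|)$~\cite{brim2011faster}; each extraction of $v$ from $\L$ strictly increases its integer energy level; by Lemma~\ref{lemma:correctness}, at the $(i,j)$-th scan phase the call starts from levels $\lceil D_j\,f^*_{\texttt{prev}(i,j)}(v)\rceil$ and, running monotonically upwards, terminates at $f^*_{w'_{i,j}}(v)=D_j\,f^*_{i,j}(v)$; the rational scalings $\{f^*_{i,j}\}$ are monotone non-decreasing, so the scan phases at which $f^*_{\cdot}(v)$ is finite form a prefix $P_v$ of the scan-phase sequence, after which $v$ already starts at $\top$ and is never lifted; and, setting $W'_{i,j}:=\max_{e\in E}|w'_{i,j}(e)|\leq D_j(2W+1)$, Lemma~\ref{Lem:lepm_equals_mincredit} gives that every finite value of $f^*_{w'_{i,j}}$ is $\leq |V|\,W'_{i,j}$, hence every finite value of $f^*_{i,j}$ is $\leq|V|(2W+1)=O(|V|\,W)$.

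Then, fixing $v\in V$: at a phase $(i,j)\in P_v$ the number of lifts of $v$ is at most $D_j\,f^*_{i,j}(v)-\lceil D_j\,f^*_{\texttt{prev}(i,j)}(v)\rceil\leq D_j\big(f^*_{i,j}(v)-f^*_{\texttt{prev}(i,j)}(v)\big)$; summing over $P_v$, bounding $D_j\leq|V|$ and telescoping along $P_v$ (with $f^*_{-W,0}(v)=0$), yields $O(|V|)\cdot O(|V|\,W)=O(|V|^2\,W)$ lifts of $v$. At the unique phase $(i^*,j^*)$ where $f^*_{\cdot}(v)$ jumps to $\top$ (if any), $v$ starts finite and passes through finite integers bounded by $|V|\,W'_{i^*,j^*}=O(|V|^2\,W)$ before becoming $\top$, so it is lifted $O(|V|^2\,W)$ times there, and at no later phase. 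Hence $v$ is lifted $O(|V|^2\,W)$ times over the entire execution, and the total lifting cost is
\[
\sum_{v\in V} O(|V|^2\,W)\cdot O(\deg v)=O(|V|^2\,W)\cdot O(|E|)=O(|V|^2\,|E|\,W),
\]
which, combined with the bookkeeping cost, gives the claimed $O(|V|^2\,|E|\,W)$ time. The step I expect to be the main obstacle is exactly this amortisation: one must argue on the rational scalings $f^*_{i,j}$ rather than on the differently scaled integer measures $f^*_{w'_{i,j}}$, use $D_j\leq|V|$ to pass between the two, exploit that each call warm-starts from the previous measure so the finite part telescopes, and handle separately the one $\top$-transition per vertex, where warm-starting gives no saving and $\Theta(|V|^2\,W)$ lifts may occur.
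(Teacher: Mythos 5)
Your proposal is correct and follows essentially the same route as the paper's proof: $\Theta(|V|^2 W)$ scan phases with $O(|E|)$ bookkeeping each, plus an amortised bound on the lifting work obtained from the warm-start monotonicity of Lemma~\ref{lemma:correctness}, the bound $D_j\leq |V|$, and the $O(|V|\,W)$ cap on finite rational scalings coming from Lemma~\ref{Lem:lepm_equals_mincredit}, yielding $O(|V|^2 W)$ lifts per vertex at $O(\deg v)$ each. Your explicit telescoping over the finite prefix $P_v$ and the separate treatment of the single $\top$-transition per vertex merely make precise the paper's statement that ``no energy gets ever lost'' across the changing scale factors.
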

\begin{proof}

\emph{(Time Complexity of the Init Phase)} 
The initialization of $\W_0, \W_1, \nu, \sigma_0$ (at line~\ref{algo:solve:l1}) and 
that of $f$ (at line~\ref{algo:solve:l2}) takes time $O(|V|)$. 
The initialization of $W$ at line~\ref{algo:solve:l3} takes $O(|E|)$ time.
To conclude, the size $s=|\F_{|V|}|$ of the Farey sequence 
(\ie its total number of terms) can be computed in 
$O(n^{2/3}\log^{1/3}{n})$ time as shown by Pawlewicz and P\u{a}tra\c{s}cu~in~\cite{PaPa09}.
Whence, the Init phase of Algorithm~\ref{Algo:solve_mpg} takes $O(|E|)$ time overall.

\emph{(Time Complexity of the Scan Phases)}
To begin, notice that there are $O(|V|^2 W)$ scan phases overall.
In fact, at line~\ref{algo:solve:l6} the index $i$ goes from $-W$ to $W$, 
while at line~\ref{algo:solve:l7} the index $j$ goes from $0$ to $s-1$ where $s=|\F_{|V|}|=\Theta(|V|^2)$. 
Observe that, at each iteration, it takes $O(|E|)$ time to go from line~\ref{algo:solve:l8} to line~\ref{algo:solve:l12} 
and then from line~\ref{algo:solve:l14} to line~\ref{algo:solve:l25}.
In particular, at line~\ref{algo:solve:l5}, the $j$-th term $F_j$ of the Farey sequence $\F_{|V|}$ can be computed 
in $O(n^{2/3}\log^{4/3}{n})$ time as shown by Pawlewicz and P\u{a}tra\c{s}cu~in~\cite{PaPa09}.

Now, let us denote by $T^{\ref{algo:solve:l13}}_{i,j}$ 
the time taken by the $(i,j)$-th iteration of line~\ref{algo:solve:l13},  
that is the time it takes to execute the Value-Iteration algorithm on input $\Gamma^{w'_{i,j}}$ with 
initial energy levels: $\lceil D_j\, f^*_{\texttt{prev}(i,j)}\rceil$. 
Then, the $(i,j)$-th scan phase always completes within the following time bound: $O(|E|) + T^{\ref{algo:solve:l13}}_{i,j}$.

We now focus on $T^{\ref{algo:solve:l13}}_{i,j}$  
and argue that the (aggregate) total cost $\sum_{i,j} T_{i,j}^{\ref{algo:solve:l13}}$ 
of executing the Value-Iteration algorithm for \EG{s} at line~\ref{algo:solve:l13} 
(throughout all scan phases) is only $O(|V|^2|E|\,W)$. 
Stated otherwise, we aim to show that the \emph{amortized cost} of executing the $(i,j)$-th scan phase is only $O(|E|)$.

Recall that the Value-Iteration algorithm for \EG{s} consists, as a first step, 
into an \emph{initialization} (which takes $O(|E|)$ time)
and, then, in the continuous iteration of the following two operations:
(1) the application of the \emph{lifting} operator $\delta(f, v)$ (which takes $O(|\texttt{post}(v)|)$ time) 
in order to resolve the inconsistency of $f$ in $v$, 
where $f(v)$ represents the current energy level and $v\in V$ is any vertex at which $f$ is inconsistent;
and (2) the \emph{update} of the list $\L$ (which takes $O(|\texttt{pre}(v)|)$ time), 
in order to keep track of all the vertices that witness an inconsistency. 
Recall that $\L$ contains no duplicates. 

At this point, since at the $(i,j)$-th iteration of line~\ref{algo:solve:l13} the Value-Iteration is executed on 
input $\Gamma^{w'_{i,j}}$, then a scaling factor on the maximum absolute weight $W$ must be taken into account.
Indeed, it holds that: 
\begin{align*}
W' & := \max\Big\{|w'_{i,j}(e)| \;\Big|\; e\in E,\; i\in[-W, W],\; j\in[0, s-1]\Big\} = O( |V|\,W ).
\end{align*}
\emph{Remark.} Actually, since $w'_{i,j}:=D_j(w-i)-N_j$ (where $N_j/D_j=F_j\in\F_{|V|}$), then the scaling factor $D_j$ 
\emph{changes} from iteration to iteration. Still, $D_j\leq |V|$ holds for every $j$.   

At each application of the lifting operator 
$\delta(f,v)$ the energy level $f(v)$ increases by at least 
one unit with respect to the scaled-up maximum absolute weight $W'$. 
Stated otherwise, at each application of $\delta(f,v)$, 
the energy level $f(v)$ increases by at least $1/|V|$ units with respect to the original weight $W$.

Throughout the whole computation,
the rational scalings of the energy levels never decrease by Lemma~\ref{lemma:correctness}:
in fact, at the $(i,j)$-th scan phase, Algorithm~\ref{Algo:solve_mpg} executes the Value-Iteration 
with initial energy levels: $\lceil D_j\, f^*_{\texttt{prev}(i,j)}\rceil$.
Whence, at line~\ref{algo:solve:l13}, the $(i,j)$-th execution of the Value-Iteration starts from the
(carefully scaled-up) energy levels of the $\texttt{prev}(i,j)$-th execution; 
roughly speaking, no energy gets ever lost during this process.
Then, by Lemma~\ref{Lem:lepm_equals_mincredit}, 
each energy level $f(v)$ can be lifted-up at most $|V|\,W' = O( |V|^2\, W )$ times.

The above observations imply that 
the (aggregate) total cost of executing the Value-Iteration at line~\ref{algo:solve:l13} 
(throughout all scan phases) can be bounded as follows: 
\begin{align*}
\sum_{\substack{-W \leq i \leq W \\ 1\leq j\leq s-1}} 
	T^{\ref{algo:solve:l13}}_{i,j} & = 
			 \left(\sum_{\substack{-W \leq i \leq W \\ 1\leq j\leq s-1}} 
			\underbrace{O(|E|)}_{\text{\emph{init cost}}}\right)  + \left(
			\sum_{v\in V}O\big(\underbrace{|\texttt{post(v)}|}_{\text{\emph{lifting $\delta$}}} + 
			\underbrace{|\texttt{pre}(v)|}_{\text{\emph{update $\L$ }}} \big) 
			\underbrace{O(|V|\, W')}_{Lemma~\ref{Lem:lepm_equals_mincredit}} \right)\\ 
			    & = O(|V|^2 |E|\, W) + O(|V|^2 W)\, \sum_{v\in V}  
				O\big(|\texttt{post(v)}| + |\texttt{pre}(v)| \big) \\
			& = O(|V|^2 |E|\, W) 
\end{align*}
Whence, Algorithm~\ref{Algo:solve_mpg} always halts within the following time bound: 
\begin{align*}
\textsc{Time}\Big(\texttt{solve\_MPG}\big(\Gamma\big)\Big) 
			=\sum_{\substack{-W \leq i \leq W \\ 1\leq j\leq s-1}} \left( O(E) + T^{\ref{algo:solve:l13}}_{i,j} \right)  
				 &  = O(|V|^2 |E|\, W). 
\end{align*}
This concludes the proof of the time complexity bound. 

We now turn our attention to the space complexity.

\emph{(Space Complexity)}
First of all, although the Farey sequence $\F_{|V|}$ has $|\F_{|V|}|=\Theta(|V|^2)$ many elements, 
still, Algorithm~\ref{Algo:solve_mpg} works fine assuming that every next element of the sequence 
is generated \emph{on the fly} at line~\ref{algo:solve:l5}. 
This computation can be computed in 
$O(|V|^{2/3}\log^{4/3}{|V|})$ \emph{sub-linear} time and space as shown by  
Pawlewicz and P\u{a}tra\c{s}cu~\cite{PaPa09}. 
Secondly, given $i$ and $j$, it is not necessary to actually store all weights 
$w'_{i,j}(e):=D_j(w(e)-i)-N_j$ for every $e\in E$, 
as one can compute them on the fly provided that $N_j$, $D_j$, $w$ and $e$ are given. 
Finally, Algorithm~\ref{Algo:solve_mpg} needs to 
store in memory the two SEPMs $f$ and $\texttt{old\_}f$, 
but this requires only $O(|V|)$ space. Finally, at line~\ref{algo:solve:l13}, 
the Value-Iteration algorithm employs only $O(|V|)$ space. 
In fact the list $\L$, which it maintains in order to keep 
track of inconsistencies, doesn't contain duplicate vertices and, 
therefore, its length is at most $|\L|\leq|V|$.
These facts imply altogether that Algorithm~\ref{Algo:solve_mpg} works with $O(|V|)$ space.
\end{proof}

\section{Conclusions}\label{sect:conclusions}
In this work we proved an $O(|V|^2 |E|\, W)$ pseudo-polynomial time upper bound 
for the Value Problem and Optimal Strategy Synthesis in Mean Payoff Games.
The result was achieved by providing a suitable description of 
values and positional strategies in terms of reweighted Energy Games  
and Small Energy-Progress Measures. 

On this way we ask whether further improvements are not too far away. 

\paragraph{Acknowledgements.}
This work was supported by \emph{Department of Computer Science, 
University of Verona, Verona, Italy},
under PhD grant “Computational Mathematics and Biology”, 
on a co-tutelle agreement 
with \emph{LIGM, Universit\'e Paris-Est in Marne-la-Vall\'ee, Paris, France}.

\bibliographystyle{spmpsci}
\bibliography{biblio}
\end{document}